\theoremstyle{plain} 
\newtheorem{theorem}{Theorem}
\newtheorem*{theorem*}{Theorem}
\newtheorem{lemma}[theorem]{Lemma}
\newtheorem*{lemma*}{Lemma}
\newtheorem*{corollary*}{Corollary}
\newtheorem{fact}[theorem]{Fact}
\newtheorem{proposition}[theorem]{Proposition}
\newtheorem*{proposition*}{Proposition}
\newtheorem*{definition*}{Definition}
\newtheorem*{conjecture*}{Conjecture}
\newtheorem{assumption}[theorem]{Assumption}
\newtheorem*{example*}{Example}
\newtheorem*{remark*}{Remark}
\newcommand{\wh}{\widehat}
\definecolor{darkred}{rgb}{0.9,0,0.3}
\definecolor{darkblue}{rgb}{0,0.3,0.9}
\def\comment#1{\ifthenelse{\isodd{\value{page}}}{\marginpar{\raggedright\scriptsize{\textcolor{darkred}{#1}}}}{\marginpar{\raggedleft\scriptsize{\textcolor{darkred}{#1}}}}}  
\newcommand{\cB}{\ensuremath{\mathcal B}}
\newcommand{\cE}{\ensuremath{\mathcal E}}
\newcommand{\cG}{\ensuremath{\mathcal G}}
\newcommand{\cH}{\ensuremath{\mathcal H}}
\newcommand{\cI}{\ensuremath{\mathcal I}}
\newcommand{\cJ}{\ensuremath{\mathcal J}}
\newcommand{\cZ}{\ensuremath{\mathcal Z}}
\newcommand{\bbE}{{\ensuremath{\mathbb E}} }
\newcommand{\bbV}{{\ensuremath{\mathbb V}} }
\newcommand{\Id}{\text{Id}}
\newcommand{\R}{\mathbb{R}}
\newcommand{\C}{\mathbb{C}}
\newcommand{\N}{\mathbb{N}}
\newcommand{\Z}{\mathbb{Z}}
\newcommand{\ee}{\mathrm{e}}
\newcommand{\ii}{\mathrm{i}}
\newcommand{\id}{\mspace{2mu}\mathrm{i}\mspace{-0.6mu}\mathrm{d}} 
\renewcommand{\geq}{\geqslant}
\renewcommand{\epsilon}{\varepsilon}
\newcommand{\scalara}[2]{\left\langle{#1} \,\mspace{2mu},\, {#2}\right\rangle}
\DeclareMathOperator{\diag}{diag}
\DeclareMathOperator{\tr}{Tr}
\DeclareMathOperator{\var}{var}
\DeclareMathOperator{\supp}{supp}
\DeclareMathOperator{\real}{Re}
\DeclareMathOperator{\diam}{diam}
\DeclareMathOperator{\size}{size}
\newcommand{\EE}{\mathrm{E}}
\newcommand{\ve}{\vee}
\newcommand{\ceins}{c_{4}}
\newcommand{\czwei}{c_{3}}
\newcommand{\cdrei}{c_{6}}
\newcommand{\cfuenf}{c_{5}}
\newcommand{\csieben}{c_{8}}
\newcommand{\czehn}{c_{11}}
\newcommand{\celf}{{c_{2}}}
\newcommand{\cdreizehn}{c_{12}}
\newcommand{\cvierzehn}{c_{13}}
\newcommand{\cfuenfzehn}{c_{14}}
\newcommand{\csechzehn}{c_{15}}
\newcommand{\csiebzehn}{c_{18}}
\newcommand{\cachtzehn}{c_{19}}
\newcommand{\cneunzehn}{c_{16}}
\newcommand{\czwanzig}{c_{17}}
\newcommand{\ceinundzwanzig}{{c_{1}}}
\newcommand{\czweiundzwanzig}{c_{9}}
\newcommand{\cdreiundzwanzig}{c_{10}}
\newcommand{\cvierundzwanzig}{c_{7}}
\begin{document}
\thispagestyle{empty}
\begin{center}
{\bf\Large Dislocation lines in three-dimensional solids at low temperature}
\\[3mm]
Roland Bauerschmidt\footnote{Statistical Laboratory, DPMMS, University of Cambridge, 
Wilberforce Road, Cambridge, CB3 0WB, United Kingdom. 
E-mail: rb812@cam.ac.uk
}
\hspace{1mm} 
Diana Conache\footnote[2]{%
Technische Universit{\"{a}}t M{\"{u}}nchen, 
Zentrum Mathematik, Bereich M5,
D-85747 Garching bei M{\"{u}}nchen,
Germany.
E-mail: diana.conache@tum.de, srolles@ma.tum.de}
\hspace{1mm} 
Markus Heydenreich\footnotemark[3]
\hspace{1mm} 
Franz Merkl\footnote[3]{Mathematical Institute, Ludwig-Maximilians-Universit\"at M\"unchen,
Theresienstr.\ 39,
D-80333 Munich,
Germany.
E-mail: m.heydenreich@lmu.de, merkl@math.lmu.de
}
\hspace{1mm} 
Silke W.W.\ Rolles\footnotemark[2]
\\[3mm]
{\small July 31, 2019}\\[3mm]
\end{center}

\begin{abstract}
We propose a model for three-dimensional solids on a mesoscopic scale with a statistical mechanical description of dislocation lines in thermal equilibrium. 
The model has a linearized rotational symmetry, which is broken by boundary conditions. We show that this symmetry is spontaneously broken in the thermodynamic limit at small positive temperatures. \\[2mm]
Keywords: Dislocation lines, Burgers vectors, spontaneous symmetry breaking.\\
MSC 2010: Primary 60K35, Secondary 82B20; 82B21.
\end{abstract}

  
\section{Introduction}

\subsection{Motivation and background}

The perhaps most fundamental mathematical problem of solid state physics is that of crystallization,
which in a classical version could be formulated as follows.
Let $v: \R_{>0}\to \R$ be a two-body potential of Lennard--Jones type, and consider
the $N$-particle energy
\begin{equation}
  H_N(x_1,\dots, x_N) = \sum_{i \neq j} v(|x_i-x_j|), \quad (x_i \in \R^3).
\end{equation}
The Grand Canonical Gibbs measure at inverse temperature $\beta>0$ and fugacity $z>0$ in finite volume $\Omega \subset \R^3$ is the point process on $\Omega$ 
given by
\begin{equation}
  P_{\Omega,\beta,z}(B) = \frac{1}{Z_{\Omega,\beta,z}}\sum_{N\in\N_0}
\int_{\Omega^N \cap B} \frac{z^N}{N!} \ee^{-\beta H_{N}(x)} \, dx 
\end{equation}
for measurable $B \subseteq \bigcup_{N\in\N_0}\Omega^N$
modulo permutations of particles with the appropriate normalizing constant
$Z_{\Omega,\beta,z}$.
The crystallization problem is to prove that at low temperature and high density,
i.e., large but finite inverse temperature $\beta$ and fugacity $z$,
there exist corresponding infinite volume Gibbs measures that are non-trivially periodic with the symmetry of a crystal lattice.
This problem remains far out of reach.
Unfortunately,
even the zero temperature case, i.e., studying the limiting minimizers of the finite volume energy, is open and appears to be very difficult.
For the zero temperature case in two dimensions, see Theil \cite{MR2200888} and references therein;
for important progress on the problem in three dimensions, see Flatley and Theil \cite{MR3360741} and references therein. 
Detailed understanding of the zero temperature case is a prerequisite for understanding the low temperature regime,
but in addition, any description at finite temperature must
explain spontaneous breaking of the rotational symmetry and take into account the possibility of crystal dislocations.
This significantly complicates the problem because proving
spontaneous breaking of continuous symmetries is already notoriously difficult in models where
the ground states are obvious, such as in the $O(3)$ spin model,
for which the only robust method is the very difficult work of Balaban, 
see \cite{MR1669693} and references therein.

With a realistic microscopic model out of reach,
we  start from a mesoscopic rather than microscopic perspective to understand the effect of the dislocations.
We expect the latter to be a fundamental aspect also of the original problem.
Our model does not account for the full rotational symmetry, but is only invariant under linearized rotations.
More precisely, our model for deformed solids in three dimensions
consists of a gas of closed vector-valued defect lines which describe crystal dislocations on a mesoscopic scale.
For this model, we show that the breaking of the linearized rotational symmetry persists in the thermodynamic limit.

Our model is strongly motivated by the one introduced and studied by Kosterlitz and Thouless \cite{0022-3719-6-7-010},
and refined by Nelson and Halperin \cite{PhysRevB.19.2457}, and Young \cite{PhysRevB.19.1855}.
The Kosterlitz--Thouless model has an energy that consists of an elastic contribution and a contribution due to crystal dislocations. These two contributions 
are assumed independent.
This KTHNY theory explains crystallization and the melting transition in two dimensions,
as a transition mediated by vector-valued dislocations effectively interacting through a Coulomb interaction.
For a textbook treatment of this phenomenology, see Chaikin and 
Lubensky \cite{chaikin_lubensky_1995}.
The Kosterlitz--Thouless model for two-dimensional melting is closely related to the two-dimensional rotator model,
studied by Kosterlitz and Thouless in the same paper as the melting problem,
following previous insight by Berezinski\u{\i} \cite{MR0314399}.
For their model of a two-dimensional solid, the assumption that the energy consists of elastic and dislocation
contributions, which can be assumed to be essentially independent, is not derived from a realistic microscopic model.
On the other hand, the rotator model admits an exact description in terms of spin waves (corresponding to the elastic energy)
and vortices described by a scalar Coulomb gas.
In this description,
the spin wave and vortex contributions are not far from independent,
and, in fact, in the Villain version of the rotator model \cite{villain:jpa-00208289}, they become exactly independent.
Based on a formal renormalization group analysis,
Kosterlitz and Thouless proposed a novel phase transition mediated by unbinding of the topological defects,
the Berezinski\u{\i}--Kosterlitz--Thouless transition.
In the two-dimensional rotator model the existence of this transition was proved by Fr\"ohlich and Spencer \cite{MR634447}.
For recent results on the two-dimensional Coulomb gas, see Falco \cite{MR2917175}.
In higher dimensions, the description of the rotator model in terms of spin waves and vortex defects remains valid,
except that the vortex defects, which are point defects in two dimensions, now become closed vortex lines \cite{MR649811}
as in our solid model.
Using this description and the methods they had introduced for the two-dimensional case,
Fr\"ohlich and Spencer \cite{MR634447,MR649811}
proved long-range order for the rotator model at low temperature in dimensions $d\geq 3$,
without relying on reflection positivity.
The latter is a very special feature used in \cite{MR0421531} to establish
long-range order for the $O(n)$ model exactly with nearest neighbor interaction on $\Z^d$, $d\geq 3$.
In general, proving spontaneous symmetry breaking of continuous symmetries remains a difficult problem.
However, aside from the most general approach of Balaban and reflection positivity,
for abelian spin models, several other techniques exist \cite{MR649811,MR836009}.

As discussed above, our model, defined precisely in \eqref{e:Pbeta-def},
is closely related to the Kosterlitz--Thouless model, see for example 
\cite[(9.5.1)]{chaikin_lubensky_1995}.
Our analysis is based on the Fr\"ohlich--Spencer approach for the rotator model \cite{MR634447,MR649811}.

In a parallel study of Giuliani and Theil \cite{GT} following 
Ariza and Ortiz \cite{MR2186424}, a model very similar 
to ours is examined, but with a microscopic interpretation, describing locations 
of individual atoms. In particular, it also has a linearized rotational symmetry. 

In \cite{MR2535010,MR3296527} (see also \cite{MR3357973}), some of us studied other simplified models for crystallization.
These models have full rotational symmetry,
but do not permit dislocations. In \cite{MR2535010} defects were excluded, while in the model in \cite{MR3296527} 
isolated missing single atoms were allowed.

\subsection{A linear model for dislocation lines on a mesoscopic scale}

\subsubsection{Linearized elastic deformation energy}
\label{sec:elastic}

An elastically deformed solid in continuum approximation can be described 
by a deformation map $f:\R^3\to\R^3$ with the interpretation 
that for any point $x$ in the undeformed solid $f(x)$ is the location of 
$x$ after deformation.
The Jacobi matrix $\nabla f:\R^3\to\R^{3\times 3}$ describes the deformation map 
locally in linear approximation. Only orientation preserving maps, $\det\nabla f>0$,
make sense physically.
The elastic deformation energy $E_{\mathrm{el}}(f)$ 
is modeled to be an integral over a smooth elastic energy density 
$\rho_{\mathrm{el}}:\R^{3\times 3}\to\R$ (respectively $\tilde\rho_{\mathrm{el}}:\R^{3\times 3}\to\R$):
\begin{align}
  E_{\mathrm{el}}(f)
  =\int_{\R^3}\rho_{\mathrm{el}}(\nabla f(x)) \, dx
  =\int_{\R^3}\tilde\rho_{\mathrm{el}}((\nabla f)^t(x)\, \nabla f(x)) \, dx,
\end{align}
where the second representation holds under the assumption of \emph{rotation invariance}; see Appendix~\ref{sec:elastic2}.
From now on, we consider only small perturbations $f=\id+\epsilon u:\R^3\to\R^3$ 
of the identity map as deformation maps. The parameter $\epsilon$ corresponds 
to the ratio between the microscopic and the mesoscopic scale. 
We Taylor-expand $\tilde\rho_{\mathrm{el}}$
around the identity matrix $\Id$ using that $\tilde\rho_{\mathrm{el}}$ is smooth near $\Id$, obtaining
\begin{equation} \label{e:F-first}
\tilde\rho_{\mathrm{el}}(\Id+\epsilon A)
=\epsilon^2F(A)+O(\epsilon^3),\quad (\epsilon\to 0,\; A=A^t\in\R^{3\times 3}), 
\end{equation}
with a positive definite quadratic form $F$ on symmetric matrices.
Under the assumption of \emph{isotropy} (see Appendix~\ref{app:elasticity}), 
writing $|{\cdot}|$ for the Euclidean norm, the general form for $F$ is
\begin{equation}
\label{eq:def-F}
  F(U)=\frac{\lambda}{2}(\tr U)^2+\mu|U|^2 
\qquad \text{with }
\mu>0\text{ and }\mu+3\lambda/2>0.  
\end{equation}
In elasticity theory, the constants $\lambda$ and $\mu$ are the so-called Lam\'e coefficients.
Even for cubic monocrystals, the isotropy assumption is restrictive for realistic models.
While it is not important for our ana\-ly\-sis, we nonetheless assume isotropy to keep the notation somewhat simpler.
We refer to \cite[Chapters 6.4.2 and 6.4.3]{chaikin_lubensky_1995} for a discussion on the number of elastic constants necessary in order to describe various crystal systems.
Summarizing, we have the following model for the linearized elastic
deformation energy:
\begin{align}
  \label{eq:def_E_el}
 E_{\mathrm{el}}(\id+\epsilon u)&=\epsilon^2 H_{\mathrm{el}}(\nabla u)+O(\epsilon^3)
\quad\text{with}
  \\
H_{\mathrm{el}}(w)&=\int_{\R^3}F(w(x)+w(x)^t) \, dx,
\label{eq:def_H_el}
\end{align}
for measurable $w:\R^3\to\R^{3\times 3}$ and 
for $F$ as in \eqref{eq:def-F}. 

\subsubsection{Burgers vector densities}

The following model is intended to describe dislocation lines
on a mesoscopic scale
as they appear in solids at positive temperature. 
We describe the solid by a smooth map $w:\R^3\to\R^{3\times 3}$
replacing the map $\nabla u:\R^3\to\R^{3\times 3}$ from
Section~\ref{sec:elastic}.
If dislocation lines are absent, the model described now
boils down to the setup of Section~\ref{sec:elastic} with $w=\nabla u$ being
a gradient field. The field 
\begin{align}
b:\R^3\to\R^{3\times 3\times 3},\quad
b_{ijk}=(d_1w)_{ijk}:=\partial_iw_{jk}-\partial_jw_{ik}
\end{align}
is intended to describe the Burgers vector density.
It vanishes if and only if $w=\nabla u$ is a gradient field.
One can interpret $b_{ijk}$ as the $k$-th component of the resulting 
vector per area if one goes through the image in the deformed solid of a 
rectangle which is parallel to the $i$-th and $j$-th coordinate axis. The 
antisymmetry $b_{ijk}=-b_{jik}$ can be interpreted as the change of sign if 
the orientation of the rectangle is changed. 
Any smooth field
$b:\R^3\to\R^{3\times 3\times 3}$
which is antisymmetric in its first two indices 
is of the form
$b=d_1w$ with some $w:\R^3\to\R^{3\times 3}$ if and only if
\begin{align}
\label{eq:db=0}
d_2b=0,
\end{align}
where
\begin{align}
d_2b:\R^3\to\R^{3\times 3\times 3\times 3},\quad
(d_2b)_{lijk}=\partial_lb_{ijk}+\partial_ib_{jlk}+\partial_jb_{lik}
\end{align}
denotes the exterior derivative with respect to the first two indices.
Being antisymmetric in its first two indices,
it is convenient to write the Burgers vector density $b$
in the form
\begin{align}
\label{eq:representation-b-in-terms-of-tilde-b}
b_{ijk}=\sum_{l=1}^3 \epsilon_{ijl}\tilde{b}_{lk},
\end{align}
where $\tilde{b}:\R^3\to\R^{3\times 3}$
and $\epsilon_{ijk}=\det(e_i,e_j,e_k)$ with the standard unit vectors 
$e_i\in\R^3$, $i\in[3]:=\{1,2,3\}$. 
The integrability condition
\eqref{eq:db=0} can be written in the form
\begin{align}
\label{eq:div-tilde-b-zero}
\operatorname{div} \tilde b=0
\quad\text{ with }\quad
(\operatorname{div} \tilde b)_k:=\sum_{l=1}^3 \partial_l\tilde{b}_{lk}.
\end{align}
In view of this equation, one may visualize $\tilde b$ to 
be a sourceless vector-valued current.

\subsubsection{Model assumptions}
\label{subsubsec:model-assum}
In linear approximation,
the leading order total energy of a deformed solid
described by $w:\R^3\to\R^{3\times 3}$ is modeled to
consist of an ``elastic'' part and a local ``dislocation'' part:
\begin{align}
H(w)&=
H_{\mathrm{el}}(w)
+\cH_{\mathrm{disl}}(d_1w),
\end{align}
where $H_{\mathrm{el}}(w)$ was introduced in \eqref{eq:def_H_el}.
The field $w$ consists of an exact contribution (modeling purely elastic fluctuations) and a coexact contribution representing the elastic part of the energy induced by dislocations.  Both these contributions are contained in $H_{\mathrm{el}}(w)$, while $\cH_{\mathrm{disl}}(d_1w)$ is intended to model only the \emph{local} energy of dislocations. 
The dislocation part $\cH_{\mathrm{disl}}(b)\in[0,\infty]$ is defined for measurable 
$b:\R^3\to\R^{3\times 3\times 3}$ being antisymmetric in its first two indices.

We describe now a formally coarse-grained model for dislocation lines:
Dislocation lines are only allowed in the set $\Lambda$ of undirected edges of 
a \textit{mesoscopic} lattice in $\mathbb R^3$. Let $V_\Lambda$
denote its vertex set. As a lattice, the graph 
$(V_\Lambda,\Lambda)$ is of bounded degree.
To model boundary conditions, we only allow dislocation lines on 
a finite subgraph $G=(V,E)$ of $(V_\Lambda,\Lambda)$, ultimately taking the 
thermodynamic limit $E\uparrow\Lambda$.
We write $E\Subset\Lambda$ if $E$ is a finite subset of $\Lambda$. 
We denote the edge between adjacent vertices $x,y\in V_\Lambda$ by $\{x,y\}$. 
The graph $(V_\Lambda,\Lambda)$ is {\it not} intended to describe the atomic
structure of the solid, as it lives on a mesoscopic scale.
Rather, it is just a tool to introduce a coarse-grained
structure which eventually makes the model discrete.

To every edge $e=\{x,y\}$, we associate a counting direction,
which has no physical meaning but serves only for bookkeeping purposes.
The Burgers vectors on the finite subgraph $G=(V,E)$ are encoded by a
family $I=(I_e)_{e\in E}\in(\R^3)^E$ of vector-valued currents
flowing through the edges in counting direction. A vector $I_e$ means 
the Burgers vector associated to a closed curve surrounding the dislocation 
line segment $[x,y]$ in positive orientation with respect to the counting 
direction. The family of currents $I$ 
should fulfill Kirchhoff's node law
\begin{align}
\label{eq:Kirchhoff}
\sum_{e\in E}s_{ve}I_e=0,\quad(v\in V),
\end{align}
where $s\in\{1,-1,0\}^{V_\Lambda\times \Lambda}$ is the signed incidence matrix
of the graph $(V_\Lambda,\Lambda)$, defined by its entries
\begin{align}
s_{ve}=\begin{cases}
1 &\text{ if $e$ is an incoming edge of $v$,}\\ 
-1&\text{ if $e$ is an outgoing edge from $v$,}\\
0&\text{ otherwise.}
\end{cases}
\end{align}
The distribution of the current in space encoded by $I$ is supported on the union 
of 
the line segments $[x,y]$, with $\{x,y\}\in E$. Thus, it is a rather singular 
object having no density with respect to the Lebesgue measure on $\R^3$. We describe 
it as follows by a matrix-valued measure $J(I):\operatorname{Borel}(\R^3)\to\R^{3\times 3}$ 
on $\R^3$, supported on the union of all edges: For $e=\{x,y\}\in \Lambda$,
let $\lambda_e:\operatorname{Borel}(\R^3)\to\R_{\ge 0}$ denote the 
1-dimensional Lebesgue measure on the line segment $[x,y]$; 
it is normalized by $\lambda_{e}(\R^3)=|x-y|$. Furthermore, let 
$n_e\in\R^3$ denote the unit vector pointing in the counting direction of the edge $e$. 
The matrix-valued measure $J(I)$ is then defined by
\begin{align}
\label{eq:def-J-jk-I}
\operatorname{Borel}(\R^3)\ni B\mapsto
J_{jk}(I)(B)=\sum_{e\in E} (n_e)_j(I_e)_k\lambda_e(B),\quad j,k\in[3].
\end{align}
Thus, the index $k$ encodes a component of the Burgers vector 
and the index $j$ a component of the direction of the dislocation line.

Heuristically, the current distribution $J(I)$ is intended to describe a 
coarse-grained picture of a much more complex microscopic dislocation configuration:
On an elementary cell of the mesoscopic lattice this microscopic configuration 
is replaced by a vector-valued current on a single dislocation line segment, encoding the effective 
Burgers vector. Because the outcome $J$ of this heuristic coarse-graining procedure
is such a singular object supported on line segments, its elastic energy close to 
the dislocation lines would be ill-defined. Hence, the coarse-graining must be 
accompanied by a smoothing operation. 

More precisely, the Burgers vector density $\tilde{b}(I)$ associated to $I$
is modeled by the convolution of $J(I)$
with a form function $\varphi$:
\begin{align}\label{eq:def-b-I}
\tilde{b}(I)&=\varphi\ast J(I).
\end{align}
Here, the form function $\varphi:\R^3\to\R_{\ge 0}$ is chosen to be smooth,
compactly supported, with total mass $\|\varphi\|_1=1$, and $\varphi(0)>0$. 

Altogether, this yields the Burgers vector density
as a function of $I$:
\begin{align}
\label{eq:b-as-fn-of-I}
b_{ijk}(I)=\sum_{l=1}^3 \epsilon_{ijl}\varphi\ast J_{lk}(I)
=\sum_{l=1}^3 \epsilon_{ijl}\varphi\ast 
\sum_{e\in E} (n_e)_l(I_e)_k\lambda_e.
\end{align}  
For a graphical illustration of $I$ and $b(I)$ see Figure 1.

\begin{figure}[h]
\centering\includegraphics[width=0.2\textwidth, angle =-90] {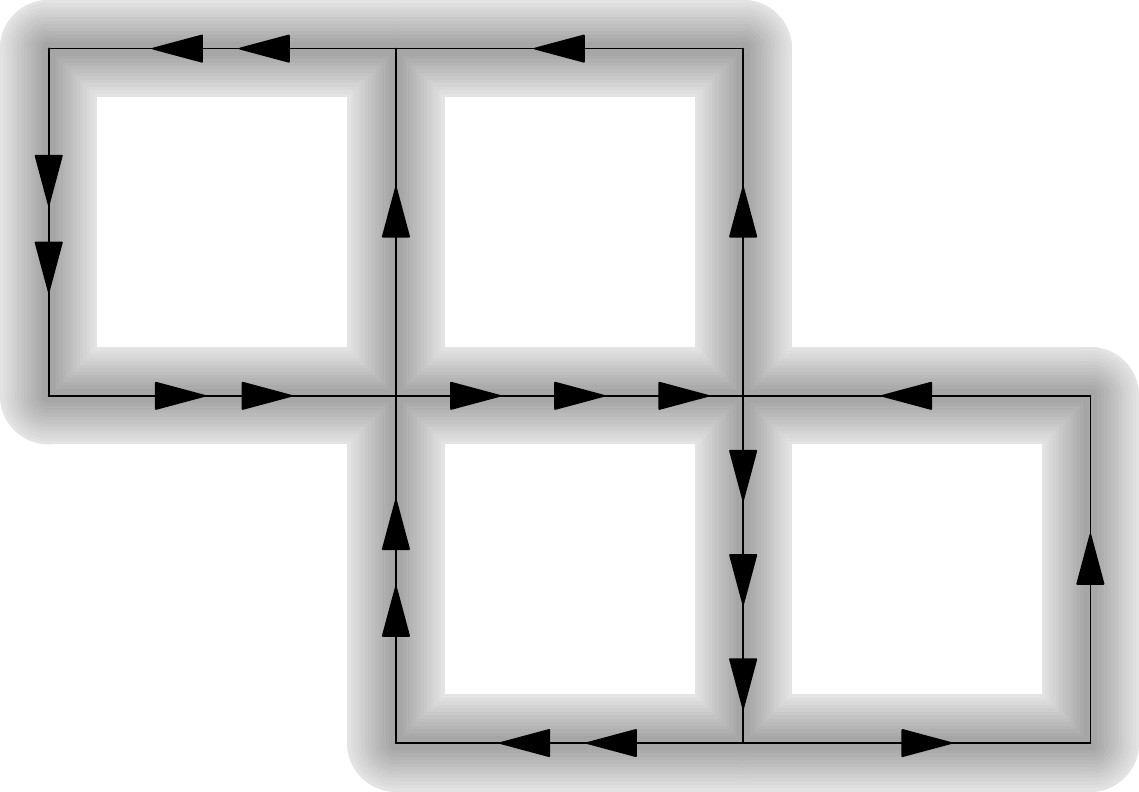}
\caption{Possible components $J_{j k}(I)$ as well as their ``smoothed versions'' $\tilde b_{j k}(I)$ shaded gray.}
\end{figure}

It is shown in Appendix~\ref{sec:appendix-kirchhoff} that the Kirchhoff node law 
\eqref{eq:Kirchhoff} implies that $\tilde b(I)$ is 
sourceless, i.e., equation \eqref{eq:div-tilde-b-zero} holds for it, or equivalently, 
that the integrability condition $d_2b=0$ is valid. 

We now impose an additional discreteness condition on $I$,
which encodes the restriction that Burgers vectors should
take values in a \textit{microscopic} lattice reflecting the atomic 
structure of the solid.
Let $\Gamma\subset\R^3$ be a lattice, interpreted as the
microscopic lattice (scaled to length scale 1).
We set
\begin{align}
\label{eq:def-cal-I}
\cI=\cI(E)=\{I\in\Gamma^E:\;\text{\eqref{eq:Kirchhoff} holds for $I$}\}.
\end{align}
Note that the current $I$ is indexed by the edges in the mesoscopic  
graph $(V,E)$ but 
takes values in the microscopic lattice $\Gamma$. One should not
confuse the mesoscopic graph $(V,E)$ nor the mesoscopic lattice $\Lambda$ 
with the microscopic lattice $\Gamma$;
they have nothing to do with each other.
A motivation for the introduction of two different lattices $\Gamma$ and 
$V_\Lambda$ on two different length scales is described in the discussion of 
the model at the end of this section.

From now on, we abbreviate 
$H_{\mathrm{disl}}(I):=\cH_{\mathrm{disl}}(b(I))$ and $\supp I:=\{e\in E:I_e\neq 0\}$. 
We require the following general assumptions: 
\begin{assumption}
\label{ass:H-disl}
\begin{itemize}
\item \textit{Symmetry:} $H_{\mathrm{disl}}(I)=H_{\mathrm{disl}}(-I)$ for all $I\in\cI$.
\item \textit{Locality:} For $I=I_1+I_2$ with $I_1,I_2\in\cI$ such that no edge in $\supp I_1$ has a common vertex with another edge in $\supp I_2$
we have $H_{\mathrm{disl}}(I)=H_{\mathrm{disl}}(I_1)+H_{\mathrm{disl}}(I_2)$.
Moreover, $H_{\mathrm{disl}}(0)=0$.
\item \textit{Lower bound:} For some constant $c>0$ and all $I\in \cI$, 
\begin{align}
\label{eq:lower-bound-H-disl-in-terms-of-I}
H_{\mathrm{disl}}(I)\ge c\|I\|_1 := c  \sum_{e\in E} |I_e|.
\end{align}
\end{itemize}
\end{assumption}
Condition \eqref{eq:lower-bound-H-disl-in-terms-of-I} reflects the local energetic costs of dislocations in addition to the elastic energy costs reflected by $H_{\rm el}(w)$.

One obtains typical examples for $H_{\mathrm{disl}}(I)$ by requiring a number of assumptions on the form function $\varphi$ and $\cH_{\mathrm{disl}}(b)$: 
First, for all $\{u,v\},\{x,y\}\in\Lambda$ with $\{u,v\}\cap\{x,y\}=\emptyset$, 
we assume
$([u,v]+\supp\varphi)\cap ([x,y]+\supp\varphi)=\emptyset$.
Furthermore, 
\begin{align}
\label{eq:condition-supp-varphi}
\inf_{e=\{x,y\}\in\Lambda}\lambda_e\big(\big\{z\in[x,y]: & (z+\supp\varphi)\cap(w+\supp\varphi)=\emptyset \nonumber\\
& \forall w\in[u,v]\text{ with }\{u,v\}\in\Lambda\setminus\{e\}\big\}\big)> 0.  
\end{align}
Roughly speaking, the last condition means that different
edges $e\in\Lambda$ do not overlap too much after broadening with $\supp\varphi$.
Finally, for all $b$, 
\begin{align}\label{eq:H_disl_lower_bound_b}
	\cH_{\mathrm{disl}}(b)=\cH_{\mathrm{disl}}(-b)&\ge\ceinundzwanzig\int_{\R^3}|b(x)|_1\,dx
	=\ceinundzwanzig\sum_{i,j,k\in[3]}\int_{\R^3}|b_{ijk}(x)|\,dx 
\end{align}
for some constant $\ceinundzwanzig>0$, and for $b=b_1+b_2$ with 
$\supp b_1\cap\supp b_2=\emptyset$,  
it is true that $\cH_{\mathrm{disl}}(b)=\cH_{\mathrm{disl}}(b_1)+\cH_{\mathrm{disl}}(b_2)$.
One particular example is obtained by taking equality in \eqref{eq:H_disl_lower_bound_b}.

\subsection{Model and main result}
\label{sec:hamiltonian-of-model}

One summand in the Hamiltonian of our model is defined by 
\begin{align}\label{eq:Hel-star-def}
H^*_{\rm el}(I)=\inf_{\substack{w\in C^\infty_c(\R^3,\R^{3\times 3}):\\d_1w= b(I)}}
H_{\rm el}(w)
\end{align}
for all $I\in(\R^3)^E$ satisfying \eqref{eq:Kirchhoff}.  
The condition that $w$ is compactly supported reflects the boundary condition, in the
sense that close to infinity the solid must not be moved away from its reference 
location. The symmetry with respect to linearized global rotations is reflected
by the fact $H_{\rm el}(w)=H_{\rm el}(w+w_{\rm{const}})$ for every constant 
antisymmetric matrix $w_{\rm{const}}\in\R^{3\times 3}$. Only the boundary condition, 
i.e., only the restriction that $w$ should have compact support, 
breaks this global symmetry. This paper is about the question whether 
this symmetry breaking persists in the thermodynamic limit $E\uparrow\Lambda$. 

Because $H_{\rm el}$ is positive semidefinite, $H^*_{\rm el}$ is positive semidefinite as well, cf.\ \eqref{eq:claim-minimizer} below. 
This gives us the following linearized model for the dislocation lines at inverse temperature $\beta<\infty$:
\begin{align}
\label{e:Zbeta-def}  
Z_{\beta,E}:=&\sum_{I\in \mathcal{I}(E)}
\ee^{-\beta(H^*_{\mathrm{el}}(I)+H_{\mathrm{disl}}(I))},\\
\label{e:Pbeta-def}  
P_{\beta,E}:=&\frac{1}{Z_{\beta,E}}\sum_{I\in \mathcal{I}(E)}
\ee^{-\beta(H^*_{\mathrm{el}}(I)+H_{\mathrm{disl}}(I))}\delta_I,
\end{align}
where $\delta_I$ denotes the Dirac measure in $I\in\mathcal{I }$ and 
we use the convention $\ee^{-\infty}=0$ throughout the paper.
Whenever the $E$-dependence is kept fixed, we use the abbreviations 
$Z_\beta=Z_{\beta,E}$ and $P_\beta=P_{\beta,E}$. 

The following preliminary result shows that any sequence of smooth configurations satisfying the boundary conditions (i.e., being compactly supported) with prescribed Burgers vectors has a limit $w^*$ in $L^2$ provided that the energy is approaching the infimum of all energies within the class. We show later that this limit is a unique minimizer 
of $H_{\rm el}$ in a suitable Sobolev space. 
An explicit description of $w^*$
is provided in Lemma~\ref{le:minimizer-elastic-energy} below.

\begin{proposition}[Compactly supported approximations of the minimizer]
\label{pro:convergence-to-w-star}
For any $I\in\cI$, there is a bounded smooth function 
$w^*(\cdot,I)\in L^2(\R^3,\R^{3\times 3})$
such that for any sequence $(w^n)_{n\in\N}$ in 
$C^\infty_c(\R^3,\R^{3\times 3})$ with $d_1w^n=b(I)$ for all $n\in\N$ and 
$\displaystyle\lim_{n\to\infty}H_{\rm el}(w^n)=H^*_{\rm el}(I)$
we have $\displaystyle\lim_{n\to\infty}\|w^n-w^*(\cdot,I)\|_2=0$.
\end{proposition}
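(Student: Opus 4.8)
The plan is to exploit the Hilbert-space structure of $H_{\rm el}$ via the polarization identity. Observe first that the map $w\mapsto H_{\rm el}(w)^{1/2}$ is a seminorm on $C^\infty_c(\R^3,\R^{3\times 3})$: by \eqref{eq:def_H_el}–\eqref{eq:def-F}, $H_{\rm el}(w)=\int_{\R^3}F(w+w^t)$ with $F$ a positive semidefinite quadratic form, so $H_{\rm el}$ is a nonnegative quadratic form and the associated bilinear form $B_{\rm el}(w,w')$ satisfies Cauchy--Schwarz. The seminorm degenerates exactly on $w$ with $w+w^t=0$, i.e.\ on constant-in-$F$ directions, but this will be handled by the $L^2$-estimate below. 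Now fix $I\in\cI$ and let $m:=H^*_{\rm el}(I)=\inf\{H_{\rm el}(w):w\in C^\infty_c,\ d_1w=b(I)\}$, which is finite since $b(I)$ is smooth, compactly supported and satisfies $d_2b(I)=0$, so at least one competitor exists. Given two competitors $w,w'$ (both smooth, compactly supported, $d_1w=d_1w'=b(I)$), the difference $w-w'$ is a compactly supported smooth field with $d_1(w-w')=0$, hence $w-w'=\nabla u$ for some $u\in C^\infty_c(\R^3,\R^3)$. The parallelogram law gives
\begin{align}
H_{\rm el}\!\left(\tfrac{w+w'}{2}\right)+H_{\rm el}\!\left(\tfrac{w-w'}{2}\right)=\tfrac12 H_{\rm el}(w)+\tfrac12 H_{\rm el}(w'),
\end{align}
and since $\tfrac{w+w'}{2}$ is again an admissible competitor, $H_{\rm el}\!\left(\tfrac{w+w'}{2}\right)\geq m$, whence
\begin{align}
\label{eq:plan-cauchy}
H_{\rm el}(w-w')=4H_{\rm el}\!\left(\tfrac{w-w'}{2}\right)\leq 2H_{\rm el}(w)+2H_{\rm el}(w')-4m.
\end{align}
Applying this to a minimizing sequence $(w^n)$ with $H_{\rm el}(w^n)\to m$ shows that $H_{\rm el}(w^n-w^m)\to 0$ as $n,m\to\infty$; that is, $(w^n)$ is Cauchy in the $H_{\rm el}$-seminorm.

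The next step is to upgrade the $H_{\rm el}$-Cauchy property to an $L^2$-Cauchy property. Since $d_1(w^n-w^m)=0$ we can write $w^n-w^m=\nabla u^{n,m}$ with $u^{n,m}\in C^\infty_c(\R^3,\R^3)$, and $H_{\rm el}(\nabla u)=\int_{\R^3}F(\nabla u+(\nabla u)^t)$. The key inequality is a Korn-type (or, via Fourier transform, a direct) estimate: there is a constant $c>0$ with
\begin{align}
\|\nabla u\|_2^2 \leq c\int_{\R^3}F\big(\nabla u+(\nabla u)^t\big)\,dx
\end{align}
for all $u\in C^\infty_c(\R^3,\R^3)$. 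On the Fourier side, $\widehat{\nabla u}(\xi)=\mathrm{i}\,\xi\otimes\widehat u(\xi)$ and $F(\xi\otimes a+a\otimes\xi)=2\mu|\xi|^2|a|^2+2\mu(\xi\cdot a)^2+2\lambda(\xi\cdot a)^2\geq 2\mu|\xi|^2|a|^2=2\mu|\xi\otimes a|^2$ using $\mu>0$; integrating against $d\xi$ and applying Plancherel gives the claim with $c=1/(2\mu)$. (No appeal to the full rotational symmetry is needed here; only $\mu>0$ from \eqref{eq:def-F}.) Combining this with the $H_{\rm el}$-Cauchy property yields that $(w^n)$ is Cauchy in $L^2(\R^3,\R^{3\times 3})$, hence converges in $L^2$ to some limit, which we call $w^*(\cdot,I)$. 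The $L^2$-limit is independent of the minimizing sequence: interleaving two minimizing sequences produces another minimizing sequence, still $L^2$-Cauchy, forcing the limits to agree. This establishes the convergence statement $\|w^n-w^*(\cdot,I)\|_2\to0$.

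It remains to verify that $w^*(\cdot,I)$ can be taken bounded and smooth. For this I would give the explicit description promised in the statement (and in Lemma~\ref{le:minimizer-elastic-energy}): the Euler--Lagrange equation for minimizing $H_{\rm el}(w)$ subject to $d_1w=b(I)$ is a constant-coefficient linear elliptic system, so $w^*$ is obtained by applying a Fourier multiplier (a matrix-valued symbol, homogeneous of degree $0$ off the origin, built from $\lambda,\mu,\xi$) to $\widehat{b(I)}$. Since $b(I)=\varphi\ast J(I)$ with $\varphi$ smooth and compactly supported, $\widehat{b(I)}(\xi)=\widehat\varphi(\xi)\,\widehat{J(I)}(\xi)$ decays faster than any polynomial; multiplying by a degree-$0$ symbol that is smooth away from $\xi=0$ keeps rapid decay at infinity, giving smoothness of $w^*$, and the (possibly singular at $\xi=0$, but still locally integrable in dimension $3$ because the symbol is bounded) behaviour near the origin together with the rapid decay gives $w^*\in L^1\cap L^2$, hence bounded by Fourier inversion. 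One checks directly that this $w^*$ has $H_{\rm el}(w^*)=m$ and is the $L^2$-limit of the $w^n$ constructed above (e.g.\ by a standard mollification/truncation argument producing an explicit minimizing sequence converging to it in both $H_{\rm el}$ and $L^2$), closing the loop.

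The main obstacle is the Korn-type $L^2$-estimate and, relatedly, making precise the Fourier-multiplier description near $\xi=0$: the symbol is only homogeneous of degree $0$, so one must be careful that $w^*$ genuinely lies in $L^2$ (not merely in a homogeneous Sobolev space modulo constants) and that the boundary condition "compactly supported approximants" is compatible with this limit — i.e.\ that the degenerate directions $w+w^t=\text{const antisymmetric}$, which $H_{\rm el}$ does not see, are pinned down to $0$ precisely by the requirement of compact support of the approximants. This is exactly the point where the boundary condition breaks the linearized rotational symmetry, and it is the conceptual heart of the estimate \eqref{eq:plan-cauchy} being enough.
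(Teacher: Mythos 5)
Your argument is essentially correct and runs parallel to the paper's, but with a somewhat different organization that is worth comparing. The paper first constructs $w^*$ explicitly as $w^b+d_0\psi^*$ (Lemma~\ref{le:minimizer-elastic-energy}), proves regularity by tracking $b(I)\in L^{2\vee}_{>-3/2}(\bbV_2)\Rightarrow w^*\in L^{2\vee}_{>-1/2}(\bbV_1)$ and invoking Sobolev embedding, and then deduces the convergence of minimizing sequences from the orthogonality relation \eqref{eq:orthogonality}: $H_{\rm el}(w^*+d_0f^n)-H_{\rm el}(w^*)=\scalara{d_0f^n}{d_0f^n}_F\ge 2\mu\|d_0f^n\|_2^2$. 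Your route inverts the order: you get Cauchyness of $(w^n)$ directly from the parallelogram law (a clean, abstract argument that does not need an explicit minimizer as a reference point), then upgrade to $L^2$-Cauchy via the Korn-type estimate $H_{\rm el}(\nabla u)\ge 2\mu\|\nabla u\|_2^2$. That estimate is exactly the paper's comparison $\scalara{d_0f}{d_0f}_F=2\scalara{Df}{f}\ge 2\mu\scalara{f}{-\Delta f}$ from \eqref{eq:comparison-Delta-D}, so both proofs rest on the same quantitative coercivity; the parallelogram law is just a different packaging of the convexity/orthogonality argument. Two small inaccuracies in your write-up: the Fourier symbol taking $\hat b$ to $\hat w^*$ is homogeneous of degree $-1$, not $0$ (each term involves one derivative and one inverse Laplacian, or three derivatives and $\Delta^{-2}$), so it is unbounded near the origin — but this does not break the argument, since a degree $-1$ symbol is still locally square-integrable in $\R^3$ (and one even has $\hat b(0)=0$, as the paper notes, because $b$ is compactly supported and divergence-free). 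Also, the representation $w^n-w^m=\nabla u^{n,m}$ with $u^{n,m}\in C^\infty_c$ is not automatic (the primitive $u^{n,m}$ is only locally constant on $\R^3\setminus\operatorname{supp}(w^n-w^m)$, which may have bounded components); this is harmless since the Plancherel computation only needs $\widehat{w^n-w^m}(\xi)$ to lie pointwise in the range of $a\mapsto\xi\otimes a$, which follows directly from $d_1(w^n-w^m)=0$. The genuine remaining gap, which you acknowledge, is the final ``closing the loop'' step: to identify the abstract $L^2$-limit with the Fourier-multiplier-defined $w^*$, one must exhibit a compactly supported minimizing sequence converging to $w^*$ in $L^2$, and this requires exactly the density of $C^\infty_c$ in the Sobolev scale $L^{2\vee}_1$ and the identity $\ker d_1=\operatorname{range} d_0$ that the paper sets up; the ``standard mollification/truncation'' claim is not as routine as it sounds because $w^*$ is not compactly supported and the constraint $d_1w=b(I)$ must be preserved.
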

In the whole paper, constants are denoted by 
$c_1,c_2,$ etc. They may depend on the fixed model ingredients: 
the microscopic lattice $\Gamma$, the mesoscopic lattice $\Lambda$, the 
constant $c$ from
formula~\eqref{eq:lower-bound-H-disl-in-terms-of-I}, and the form function
$\varphi$. All constants keep their meaning throughout the paper.
Similarly, the expression ``$\beta$ large enough'' means 
``$\beta>\beta_0$ with some constant $\beta_0$
depending also only on $\Gamma$, $\Lambda$, $c$, and $\varphi$''. 

The following theorem shows that the breaking of linearized rotational 
symmetry $w\leadsto w+w_{\rm{const}}$ induced by the boundary conditions
persists in the thermodynamic limit $E\uparrow\Lambda$, provided that 
$\beta$ is large enough. 

\begin{theorem}{\bf (Spontaneous breaking of linearized rotational symmetry).}
\label{thm:main-result}
\noindent There is a constant $\celf>0$ such that for all $\beta$
large enough and for all $t\in\R$, 
\begin{align}
\label{eq:lower-bound-fourier-trafo-main}
\inf_{E\Subset\Lambda}\inf_{x,y\in\R^3}\min_{i,j\in[3]}
\EE_{P_{\beta,E}}\Big[\ee^{\ii t(w^*_{ij}(x,I)-w^*_{ij}(y,I))}\Big]
\ge \exp\Big\{-\frac{t^2}2\,\ee^{-\celf\beta}\Big\},  
\end{align}
and consequently
\begin{align}
\sup_{E\Subset\Lambda}\sup_{x,y\in\R^3}\max_{i,j\in[3]}
\var_{P_{\beta,E}}\big(w^*_{ij}(x,I)-w^*_{ij}(y,I)\big) \le \ee^{-\celf\beta}.
\end{align}
\end{theorem}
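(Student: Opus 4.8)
The plan is to follow the Fröhlich--Spencer strategy for proving continuous symmetry breaking in abelian systems, adapted to the vector-valued, line-like dislocation currents here. The key observation is that both summands of the Hamiltonian, $H^*_{\mathrm{el}}(I)$ and $H_{\mathrm{disl}}(I)$, are even in $I$, and that $w^*_{ij}(x,I)$ is \emph{linear} in $I$ (since $I\mapsto b(I)$ is linear and the minimizer $w^*(\cdot,I)$ of the quadratic form $H_{\mathrm{el}}$ subject to the linear constraint $d_1w=b(I)$ depends linearly on $I$). Hence $w^*_{ij}(x,I)-w^*_{ij}(y,I) = \langle \psi, I\rangle$ for a suitable (deterministic, $E$-independent up to the constraint) coefficient vector $\psi=\psi_{ij,x,y}$; what we need is a uniform bound $\EE_{P_{\beta,E}}[\ee^{\ii t\langle\psi,I\rangle}] \ge \exp\{-\tfrac{t^2}2 \ee^{-\czwei\beta}\}$, and the variance bound then follows by differentiating at $t=0$ (or by a standard second-moment argument once the Fourier transform is controlled uniformly near the origin). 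So everything reduces to a lower bound on a characteristic function of a linear statistic of a Gibbs measure whose energy is even.

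The standard route is a \emph{complex translation / shift of contour} argument. First I would establish a lower bound on $H^*_{\mathrm{el}}(I)$ in terms of $I$: because $H_{\mathrm{el}}$ is a positive-definite quadratic form and $b(I)$ is obtained from $J(I)$ by convolution with the fixed form function $\varphi$, one expects $H^*_{\mathrm{el}}(I)\ge c\|I\|_2^2$ or at least $H^*_{\mathrm{el}}(I) \ge c\,|\hat I|^2$ in an appropriate Fourier/quadratic sense — combined with the \emph{local} lower bound $H_{\mathrm{disl}}(I)\ge c\|I\|_1$ from Assumption~\ref{ass:H-disl}, this controls the energy from below both in an $\ell^1$ (large-field) and an $\ell^2$ (small-field) regime. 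Next, writing $\EE_{P_\beta}[\ee^{\ii t\langle\psi,I\rangle}] = Z_\beta^{-1}\sum_{I\in\cI}\ee^{-\beta H(I) + \ii t\langle\psi,I\rangle}$, the idea is to "complete the square": find a real vector $h$ (a deterministic shift, lying in the linear span of $\cI$ so that it preserves Kirchhoff's law — note $\cI$ is a coset structure but differences lie in a lattice, so one works with the quadratic part) such that the linear term $\ii t\langle\psi,I\rangle$ is absorbed, at the cost of a Gaussian-type prefactor $\exp\{-\tfrac{t^2}{2}\,\langle\psi, (\text{something})^{-1}\psi\rangle\}$ governed by the \emph{effective stiffness} of the model. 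Concretely one compares $\sum_I \ee^{-\beta H(I)+\ii t\langle\psi,I\rangle}$ with $\sum_I \ee^{-\beta H(I)}$ by a Cauchy--Schwarz / Jensen step after a real shift of the summation variable combined with a Gaussian domination: the fluctuation of $\langle\psi,I\rangle$ is of order $\ee^{-\czwei\beta}$ precisely because, at large $\beta$, the measure $P_{\beta,E}$ is overwhelmingly concentrated on $I\equiv 0$ (the empty configuration), with the probability of \emph{any} nonzero current exponentially small in $\beta$ by the $\ell^1$ lower bound \eqref{eq:lower-bound-H-disl-in-terms-of-I} and \eqref{eq:H_disl_lower_bound_b}.

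In more detail, the cleanest packaging is: (i) show $Z_\beta \ge 1$ trivially (the $I=0$ term), and $Z_\beta \le 1 + \sum_{I\neq 0}\ee^{-c\beta\|I\|_1} = 1 + O(\ee^{-c'\beta})$ by a Peierls-type counting of currents in $\Gamma^E$ satisfying Kirchhoff, using bounded degree of $\Lambda$ so the entropy is controlled; (ii) for the numerator, use $|\ee^{\ii t\langle\psi,I\rangle}-1|\le |t\langle\psi,I\rangle| \le |t|\,\|\psi\|_\infty\|I\|_1$ together with the boundedness and suitable decay of $w^*(\cdot,I)$ as a function of $I$ (Proposition~\ref{pro:convergence-to-w-star} gives $w^*(\cdot,I)\in L^2$; one needs in addition that $\|\psi_{ij,x,y}\|$ is controlled uniformly in $x,y$ and $E$, which should follow from the explicit Green's-function description promised in Lemma~\ref{le:minimizer-elastic-energy}). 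Then $\EE_{P_\beta}[\ee^{\ii t\langle\psi,I\rangle}] \ge 1 - |t|\cdot O(\ee^{-c'\beta})$, and upgrading $1-\tfrac{t^2}2 x$-type bounds to $\ee^{-\tfrac{t^2}2 x}$ for the range of $t$ where the linear bound is weaker than the Gaussian one, while for large $|t|$ one uses that $|\EE[\ee^{\ii t\langle\psi,I\rangle}]|$ is anyway bounded below by the $I=0$ mass $Z_\beta^{-1}\ge 1-O(\ee^{-c'\beta})$ when $\psi$ takes values making $\langle\psi, I\rangle$ small — here one must be slightly careful, since $\langle\psi,I\rangle$ need not be small for large $I$; the honest argument reinstates the complex-shift/Gaussian-domination step so that the bound $\exp\{-\tfrac{t^2}2\ee^{-\czwei\beta}\}$ holds for \emph{all} real $t$, not just small $t$.

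The main obstacle I anticipate is exactly this last point: getting the \emph{Gaussian} form of the lower bound uniformly in $t\in\R$ (not merely a quadratic expansion near $t=0$), which is what genuinely encodes spontaneous symmetry breaking rather than just finiteness of the variance. This requires the complex-translation argument — shifting $I \rightsquigarrow I + \ii s\psi^\sharp$ for a real parameter $s$ and a dual vector $\psi^\sharp$, exploiting that $H^*_{\mathrm{el}}$ is quadratic (so the shift produces exactly a Gaussian factor plus a cross term that cancels $\ii t\langle\psi,I\rangle$) while $H_{\mathrm{disl}}$ is controlled because it is only evaluated at the \emph{real} lattice points $I\in\cI$ and the shift is absorbed analytically. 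The technical heart is then to verify that the "effective temperature" seen by the linear statistic $\langle\psi,I\rangle$ — i.e., the relevant diagonal entry of the covariance of the Gibbs measure — is bounded by $\ee^{-\czwei\beta}$ uniformly in $E\Subset\Lambda$ and in $x,y\in\R^3$; this uniformity in the infinite-volume limit is where one must combine the a priori $\ell^1$ estimate on currents with the quadratic elastic lower bound and the uniform control of $\psi_{ij,x,y}$ from the explicit minimizer formula. Everything else (the counting bound on $\cI$, the linearity of $I\mapsto w^*(\cdot,I)$, differentiation at $t=0$ to deduce the variance bound) is routine.
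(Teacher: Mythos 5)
You have correctly identified the Fr\"ohlich--Spencer framework and the right obstacle (a Gaussian lower bound on the characteristic function that holds for \emph{all} $t$, not just a quadratic expansion near $t=0$). You also correctly anticipate the role of the linearity of $I\mapsto w^*(\cdot,I)$, the $\ell^1$ lower bound on $H_{\mathrm{disl}}$, and the explicit kernel bounds from Lemma~\ref{le:minimizer-elastic-energy}. However, the central mechanism you propose does not work as stated, and the actual mechanism of the paper is missing.

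The sticking point is the ``complex shift $I\rightsquigarrow I+\ii s\psi^\sharp$''. The variable $I$ ranges over a \emph{discrete} set $\cI(E)\subset\Gamma^E$; there is no contour in a lattice sum to shift, and analytic continuation of the summand in $I$ does not give you a change-of-variables identity for $\sum_{I\in\cI}$. What the paper does instead is to \emph{dualize first}: since $H^*_{\mathrm{el}}$ is a positive-semidefinite quadratic form, one writes $\ee^{-\beta H^*_{\mathrm{el}}(I)}=\bbE[\ee^{\ii\scalara{\phi}{I}}]$ for an auxiliary Gaussian field $\phi$ (Sine--Gordon transformation), so that the shift is performed in the \emph{continuous} $\phi$-variable via $\phi\mapsto\sigma+\phi$, where $\scalara{\sigma_{ij}(x,y)}{I}=w^*_{ij}(x,I)-w^*_{ij}(y,I)$. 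Then $\cZ_{\beta,\phi}=\sum_I \ee^{\ii\scalara{\phi}{I}}\ee^{-\beta H_{\mathrm{disl}}(I)}$ is controlled by a convergent \emph{cluster expansion}, yielding $\cZ_{\beta,\phi}=\exp\bigl(\sum_{I}z(\beta,I)\cos\scalara{\phi}{I}\bigr)$ with $|z(\beta,I)|\le\ee^{-\beta\ceins\size I}$-type decay; the Gaussian bound then comes from the cosine addition formula combined with averaging over a sign flip $\phi\mapsto\Sigma\phi$, $\Sigma=\pm1$, which kills the dangerous $\sin\scalara{\phi}{I}\sin\scalara{\sigma}{I}$ cross term via $\tfrac12(\ee^x+\ee^{-x})\ge 1$ (Lemma~\ref{le:fourier-observable}). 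None of the cluster expansion, the factorization over connected components, or the sign-averaging appears in your sketch, and these are precisely the steps that replace the ill-posed complex shift in $I$-space and produce the Gaussian form $\exp(-\tfrac{t^2}{2}\sum_I|z(\beta,I)|\scalara{\sigma}{I}^2)$ uniformly in $t$.

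A secondary gap: you assert that the coefficient vector $\sigma_{ij}(x,y)$ is ``controlled uniformly in $x,y$ and $E$'', but the honest estimate is more delicate. For $x$ close to $\supp I$, $|w^*_{ij}(x,I)|$ is \emph{not} small; what saves the argument is that $|z(\beta,I)|\le\ee^{-\beta\ceins\size I}$ provides a huge margin, so one only needs the much weaker bound $w^*_{ij}(x,I)^2\le W(o,x,I)\ee^{\beta\ceins\size I}$ with $\sum_o W(o,x,I)$ uniformly bounded (Lemma~\ref{le:w-star-bd}), obtained by a two-regime dipole-expansion analysis ($|x-v(o)|\ge 2R$ versus $<2R$). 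The simplified route you sketch ($Z_\beta\ge 1$, numerator $\ge 1-|t|\cdot O(\ee^{-c'\beta})$, then ``upgrade'') cannot be upgraded to the claimed bound for all $t\in\R$; you acknowledge this, but the fallback you point to is exactly the argument with the gap described above.
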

We remark that the symmetry $I\leftrightarrow -I$ implies that 
$w^*(x,I)$ is a centered random matrix. 
Since $w^*(x,I)$ encodes in particular the orientation of the crystal 
at location $x$, this result may be interpreted as the presence of 
long range orientational order in the thermodynamic limit.

\paragraph{Discussion of the model.} 
If we compare our model to the rotator model, 
purely elastic deformations correspond to ``spin wave'' contributions, while 
deformations induced by the Burgers vectors correspond to ``vortex'' contributions. 
In our model, the purely elastic deformations are orthogonal to the deformations 
induced by the Burgers vectors in a suitable inner product $\scalara{{\cdot}}{{\cdot}}_F$; 
this is made precise in Eq.\ \eqref{eq:orthogonality} below. 
Therefore, we do not model the purely elastic part stochastically. 
It would not be relevant for our purposes, because in a linearized model, it is 
expected to be independent of the Burgers vectors anyway. 

The mixed continuum / lattice structure of the model has the following 
motivation: A realistic microscopic description of a crystal at positive
temperature would be very complicated, including, e.g., vacancies and
interstitial atoms. This makes a global indexing of all atoms by a lattice
intrinsically hard. On a mesoscopic scale, we expect that these difficulties
can be neglected when the elastic parameters of the model are renormalized.
Our model should be understood as a two-scale description of the crystal 
in which the microscopic Burgers vectors are represented on a scale such that
their discreteness is still visible, but the physical space is smoothed out;
recall that the factor $\epsilon$ used in the approximation~\eqref{e:F-first}
encodes the ratio between the two scales.
Hence we use continuous derivatives in physical space, but 
discrete calculus for the Burgers vectors.
The mesoscopic lattice $V_\Lambda$ serves only as a convenient
spatial regularization. Unlike the microscopic lattice $\Gamma$,
it has no intrinsic physical meaning.

\paragraph{Organization of the paper.}
In Section~\ref{sec:MinimizingEnergy}, we identify the minimal energy configuration 
$w^\ast$ in the sense of Proposition~\ref{pro:convergence-to-w-star} in the appropriate 
Sobolev space. 
Section~\ref{sec:ClusterExpansion} deals with the statistical mechanics of Burgers 
vector configurations by means of a Sine-Gordon transformation and a cluster expansion 
in the spirit of the Fr\"ohlich-Spencer treatment of the Villain model \cite{MR634447,MR649811}. 
Section~\ref{sec:bounding-obs} provides the bounds for the observable, which manifests 
the spontaneous breaking of linearized rotational symmetry. It uses variants of a dipole 
expansion, which we provide in the appendix.

\section{Minimizing the elastic energy}
\label{sec:MinimizingEnergy}
In this section, we collect various properties of $H_{\rm el}^*(I)$ defined in \eqref{eq:Hel-star-def}.
In particular, we prove Proposition~\ref{pro:convergence-to-w-star}.

\subsection{Sobolev spaces}
Let $\bbV$ be a finite-dimensional $\C$-vector space with a norm $|\cdot|$
coming from a scalar product $\scalara{\cdot}{\cdot}_{\bbV}$. 
For integrable $f:\R^3\to\bbV$, let 
\begin{align}
\label{eq:def-Fouriertrafo}
\hat f(k)=(2\pi)^{-\frac32}\int_{\R^3}\ee^{-\ii\scalara{k}{x}}f(x)\,dx
\end{align}
denote its Fourier transform, 
normalized such that the transformation becomes unitary.
For any $\alpha\in\R$ and $f\in C^\infty_c(\R^3,\bbV)$, 
we define 
\begin{align}
\| f\|^\ve_\alpha:=\|\hat f\|_{2,\alpha}, \text{ where }
\|g\|_{2,\alpha}^2:=\int_{\R^3}  |k|^{2\alpha}|g(k)|^2\, dk.
\end{align}
We set
\begin{align}
C_\alpha(\bbV):=\big\{f\in C^\infty_c(\R^3,\bbV):\;\| f\|^\ve_\alpha<\infty\big\}.
\end{align}
Then $\| f\|^\ve_\alpha$ is a norm on the $\C$-vector space $C_\alpha(\bbV)$.
For $\alpha>-3/2$, we know that $C_\alpha(\bbV)=C^\infty_c(\R^3,\bbV)$
because $|k|^{2\alpha}$
is integrable near 0 and $\hat f$ decays fast at infinity.
Let $(L^{2\ve}_\alpha(\bbV),\|{\cdot}\|^\vee_\alpha)$ denote the completion of 
$(C_\alpha(\bbV),\|{\cdot}\|^\vee_\alpha)$ and
$L^2_\alpha(\bbV):=
\big\{g:\R^3\to\bbV\text{ measurable mod changes on null sets}: 
\|g\|_{2,\alpha}<\infty\big\}$.
The Fourier transform $f\mapsto\hat f$ gives rise to 
a natural isometric isomorphism $L^{2\ve}_\alpha(\bbV)\to L^2_\alpha(\bbV)$. 
For any $\alpha\in\R$, the sesquilinear form
\begin{align}
\scalara{\cdot}{\cdot}:C_{-\alpha}(\bbV)\times C_\alpha(\bbV)\to\C
,\quad
\scalara{f}{g}=\int_{\R^3}\scalara{f(x)}{g(x)}_{\bbV}\,dx
\end{align}
extends to a continuous sesquilinear form
\begin{align}
\scalara{\cdot}{\cdot}:L^{2\ve}_{-\alpha}(\bbV)\times L^{2\ve}_\alpha(\bbV)\to\C.
\end{align}
Partial derivatives $\partial_j:C^\infty_c(\R^3,\bbV)\to C^\infty_c(\R^3,\bbV)$ (acting component-wise)
extend to bounded operators $\partial_j:L^{2\ve}_{\alpha+1}(\bbV)\to L^{2\ve}_\alpha(\bbV)$. 
Consequently, the (component-wise) Laplace operator $\Delta=\sum_{j\in[3]}\partial_j^2$ 
extends to an isometric isomorphism
$\Delta:L^{2\ve}_{\alpha+2}(\bbV)\to L^{2\ve}_\alpha(\bbV)$. 

We will mainly have 
$\bbV=\bbV_j$ for $j\in\N_0$, where
\begin{align}
\bbV_j=\big\{ (a_{i_1\ldots i_jk})_{i_1,\ldots, i_j,k}\in\C^{3\times\cdots\times 3}:
a_{i_1\ldots i_jk}\text{ is antisymmetric in }i_1,\ldots, i_j\big\}, 
\end{align}
endowed with the norm 
\begin{align}
|a|=\left(\frac{1}{j!}\sum_{i_1,\ldots,i_j,k\in[3]}|a_{i_1\ldots i_jk}|^2\right)^{\frac12}. 
\end{align}
Functions with values in $\bbV_j$ are 
just $\C^3$-valued $j$-forms. Note that the last index $k$ has a special role 
since there is no antisymmetry condition for it. The real part of the space 
$\bbV_0=\C^3$ may be interpreted as a vector space containing
Burgers vectors.  
For any $\alpha\in\R$ and $j\in\N_0$, we introduce the exterior derivative 
$d_j:L^{2\ve}_{\alpha+1}(\bbV_j)\to L^{2\ve}_\alpha(\bbV_{j+1})$ and co-derivative 
$d_j^*:L^{2\ve}_{\alpha+1}(\bbV_{j+1})\to L^{2\ve}_\alpha(\bbV_j)$,
\begin{align}
(d_ja)_{i_1\ldots i_{j+1}k}&=
\sum_{l=1}^{j+1}
(-1)^{l+1}\partial_{i_l}a_{i_1\ldots \not i_l\ldots i_{j+1}k},
\\
(d_j^*a)_{i_1\ldots i_jk}&=-\sum_{i_0=1}^3
\partial_{i_0}a_{i_0i_1\ldots i_jk}.
\end{align}
They are adjoint to each other in the sense that for any $\alpha\in\R$,
\begin{align}
\scalara{d_j^*a}{b}=\scalara{a}{d_jb}\quad\text{ for }a\in L^{2\ve}_{-\alpha+1}(\bbV_{j+1}),\;
b\in L^{2\ve}_\alpha(\bbV_j).
\end{align}
Since in the following we are mostly interested in the cases $j=0,1,2$, we spell
out the definition of $d_j$ explicitly:
\begin{align}
(d_0f)_{ij}&=\partial_if_j,
\\
(d_1w)_{ijk}&=\partial_iw_{jk}-\partial_jw_{ik},
\\
(d_2b)_{ijkl}&=\partial_ib_{jkl}+\partial_jb_{kil}+\partial_kb_{ijl}
=
\partial_ib_{jkl}-\partial_jb_{ikl}+\partial_kb_{ijl}
.
\end{align}
The Laplace operator $\Delta:L^{2\ve}_{\alpha+2}(\bbV_j)\to L^{2\ve}_\alpha(\bbV_j)$ 
then fulfills
\begin{align}
\label{eq:def-laplace}
\Delta=-(d_j^*d_j+d_{j-1}d_{j-1}^*),
\quad (j\in\N_0);
\end{align}
here, $d_{-1}$ and $d_{-1}^*$ should be interpreted as 0. 
Equation~\eqref{eq:def-laplace} is only important for $j=0,1,2,3$ because $\bbV_j=\{0\}$
holds for $j\ge 4$ in three dimensions. 
To see \eqref{eq:def-laplace}, we calculate
\begin{align}
  (-d_j^*d_ja)_{i_1\ldots i_jk}
&= \sum_{i_0=1}^3\partial_{i_0}(d_ja)_{i_0i_1\ldots i_jk}=\sum_{i_0=1}^3\partial_{i_0}\left[\partial_{i_0}a_{i_1\ldots i_jk}
+\sum_{l=1}^j(-1)^{l+2}\partial_{i_l}a_{i_0i_1\ldots \not i_l\ldots i_jk}\right]
\nonumber\\
&=  \Delta a_{i_1\ldots i_jk}
-\sum_{i_0=1}^3\sum_{l=1}^j(-1)^{l+1}\partial_{i_0}\partial_{i_l}a_{i_0i_1\ldots \not i_l\ldots i_jk}
\end{align}
and
\begin{align}
   (-d_{j-1}d_{j-1}^*a)_{i_1\ldots i_jk}
=& -\sum_{l=1}^j(-1)^{l+1}\partial_{i_l}(d_{j-1}^*a)_{i_1\ldots \not i_l\ldots i_jk}
=\sum_{i_0=1}^3\sum_{l=1}^j(-1)^{l+1}\partial_{i_l}\partial_{i_0}
a_{i_0i_1\ldots \not i_l\ldots i_jk}.
\end{align}

\subsection{Elastic Hamiltonian}
Given $I\in\cI$ and $b=b(I)$, we calculate now $H^*_{\rm el}(I)$.
To begin with, we observe that $H_{\rm el}$, introduced in \eqref{eq:def_H_el}, is a quadratic form, and therefore it can be written as 
\begin{align}
\label{eq:H-el-with-scalar-product}
H_{\rm el}(w)=\scalara{w}{w}_F, 
\end{align}
with a sesquilinear form  $\scalara{{\cdot}}{{\cdot}}_F$ depending on $F$ defined in \eqref{eq:def-F}. More precisely, using $\scalara{A}{B}=\tr(AB^t)$ for 
the Euclidean scalar product for matrices $A,B$, we introduce 
 $\scalara{{\cdot}}{{\cdot}}_F:L^2(\R^3,\C^{3\times 3})\times L^2(\R^3,\C^{3\times 3})\to\C$ through 
\begin{align}
\scalara{w}{\tilde w}_F:=&\,
\int_{\R^3}\bigg[\frac{\lambda}{2}(\tr(\overline{w(x)+w^t(x)})\tr(\tilde w(x)+\tilde w^t(x))+\mu \tr\big[(\overline{w(x)+w^t(x)})(\tilde w(x)+\tilde w^t(x))\big]\bigg]\,dx
\nonumber\\
=&\,
2\sum_{i,j=1}^3\int_{\R^3}\left[\lambda(\overline{w_{ii}(x)}\tilde w_{jj}(x))
+\mu 
\overline{(w_{ij}(x)+w_{ji}(x))}\tilde w_{ij}(x)\right]\,dx.
\end{align}
Because of the stability condition \eqref{eq:def-F}, the inner product 
$\scalara{{\cdot}}{{\cdot}}_F$ is positive semidefinite.
For any $\alpha\in\R$, we consider the restriction of $\scalara{{\cdot}}{{\cdot}}_F$ to $C_{-\alpha}(\bbV_1)\times C_{\alpha}(\bbV_1)$, and then extend it to a sesquilinear form 
$\scalara{{\cdot}}{{\cdot}}_F:L^{2\vee}_{-\alpha}(\bbV_1)\times L^{2\vee}_\alpha(\bbV_1)\to \C$ being continuous w.r.t.\ $\|{\cdot}\|_{-\alpha}^\vee$ and $\|{\cdot}\|_\alpha^\vee$.

In  \eqref{eq:Hel-star-def} we may take the infimum over $w^b+\ker(d_1) $ with a suitable  $w^b\in L^{2\vee}_{0}(\bbV_1) = L^2(\R^3,\C^{3\times 3})$ satisfying $d_1w^b=b(I)$.
We claim that a convenient choice is 
\begin{align}
\label{eq:def-wb}
w^b:=-\Delta^{-1}d_1^*b\in L^{2\vee}_0(\bbV_1).
\end{align}
To see this, we observe that $\Delta^{-1}$ commutes with $d_1$ and $d_1^*$
because $\Delta^{-1}$ corresponds to multiplication with the scalar $|k|^{-2}$
in Fourier space, and $d_1,d_1^*$ correspond to (multi-component) multiplication 
operators in Fourier space, as well. Therefore, since  
$b=b(I)\in\operatorname{ker}(d_2:L^{2\vee}_{-1}(\bbV_2)\to L^{2\vee}_{-2}(\bbV_3))$, 
we obtain 
\begin{align}
d_1w^b=-d_1d_1^*\Delta^{-1}b=-\Delta^{-1}d_1d_1^*b
=-\Delta^{-1}(d_1d_1^*+d_2^*d_2)b=b.
\end{align}
Using that $\operatorname{ker}(d_1:C_0(\bbV_1)\to C_{-1}(\bbV_2))$ is dense in 
$\operatorname{ker}(d_1:L^{2\vee}_0(\bbV_1)\to L^{2\vee}_{-1}(\bbV_2)) =
\operatorname{range}(d_0:L^{2\vee}_1(\bbV_0)\to L^{2\vee}_{0}(\bbV_1))$, 
we obtain
\begin{align}
H_{\rm el}^*(I)
=&\inf_{\substack{w\in L^{2\vee}_0(\bbV_1):\\d_1w=0}}\scalara{w^b+w}{w^b+w}_F
=
   \inf_{\psi\in L^{2\vee}_1(\bbV_0) }\scalara{w^b+d_0\psi}{w^b+d_0\psi}_F.
   \label{eq:inf-H-ell}
\end{align}

\subsection{Minimizer}
\paragraph{Differential operators.}
In order to analyze the $d_0\psi$-dependence in \eqref{eq:inf-H-ell}, we derive an adjoint $\nabla^F$ for $d_0$ with respect to $\scalara{{\cdot}}{{\cdot}}_F$ and $\scalara{{\cdot}}{{\cdot}}$. 
Let $\nabla^F:L^{2\ve}_{-\alpha+1}(\bbV_1)\to L^{2\ve}_{-\alpha}(\bbV_0)$,
\begin{align}
(\nabla^Fg)_j:=-2\sum_{i=1}^3
\big[\lambda \partial_jg_{ii}
+\mu 
\partial_i(g_{ij}+g_{ji})\big],\quad (j\in[3]).
\end{align}
Indeed, it satisfies the following adjointness relation for any $g\in L^{2\vee}_{-\alpha}(\bbV_1)$ and $f\in L^{2\vee}_{\alpha+1}(\bbV_0)$ 
(using that $-\partial_j$ is adjoint 
to $\partial_j$ w.r.t.\ $\scalara{{\cdot}}{{\cdot}}$):
\begin{align}
&\scalara{g}{d_0f}_F
=
2\sum_{i,j=1}^3\big[\lambda\scalara{g_{ii}}
{\partial_jf_j}
+\mu 
\scalara{g_{ij}+g_{ji}}{\partial_if_j}\big]
\nonumber
\\&=
-2\sum_{i,j=1}^3
\scalara{\lambda \partial_jg_{ii}
+\mu 
\partial_i(g_{ij}+g_{ji})}{f_j}
=
\scalara{\nabla^Fg}{f}.
\label{eq:ScalarF-Scalar-Correspondence}
\end{align}
The identity $\scalara{d_0\psi}{d_0 \psi}_F=\scalara{\nabla^Fd_0\psi}{\psi}$ motivates us to introduce the following differential operator for any $\alpha\in\R$:
\begin{align}
&D:=\tfrac12\nabla^Fd_0:
L^{2\vee}_{\alpha+2}(\bbV_0)\to L^{2\vee}_\alpha(\bbV_0),
\\
&D\psi=-\mu\Delta\psi-(\mu+\lambda)\operatorname{grad}\operatorname{div}\psi
=\Big(-\mu\Delta\psi_j-(\mu+\lambda)
\partial_j\sum_{i=1}^3\partial_i\psi_i\Big)_{j\in[3]}.
\end{align}
At this moment, we are most interested in the case $\alpha=-1$; the case of general values for $\alpha$ is needed for regularity considerations in the proof of Proposition~\ref{pro:convergence-to-w-star} later on.

\begin{lemma}[Properties of $D$]
For any $\alpha\in\R$,
the map $D:L^{2\vee}_{\alpha+2}(\bbV_0)\to L^{2\vee}_\alpha(\bbV_0)$ is invertible
with the inverse
$D^{-1}:L^{2\vee}_\alpha(\bbV_0)\to L^{2\vee}_{\alpha+2}(\bbV_0)$,
\begin{align}
D^{-1}\psi = -\Delta^{-1}
\left(\frac{1}{\mu}\psi+
\left(\frac{1}{2\mu+\lambda}-\frac{1}{\mu}\right)
\Delta^{-1}\operatorname{grad}\operatorname{div}\psi\right).
\end{align}
In coordinate notation,
\begin{align}
(D^{-1}\psi)_j = -\Delta^{-1}
\left(\frac{1}{\mu}\psi_j+
\left(\frac{1}{2\mu+\lambda}-\frac{1}{\mu}\right)
\Delta^{-1}\partial_j\sum_{i=1}^3\partial_i\psi_i\right).
\end{align}
The map $D$ is symmetric for $\alpha=-1$, i.e., 
$\scalara{\tilde\psi}{D\psi}=\scalara{D\tilde\psi}{\psi}$
for $\psi,\tilde\psi\in L^{2\vee}_1(\bbV_0)$,
and bounded from above and from below as follows:
\begin{align}
\label{eq:comparison-Delta-D}
0\le \mu\scalara{\psi}{-\Delta\psi}
\le \scalara{\psi}{D\psi}
\le (2\mu+\lambda)\scalara{\psi}{-\Delta\psi}.
\end{align}
\end{lemma}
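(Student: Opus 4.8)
The plan is to work entirely in Fourier space, where $D$ becomes a multiplication operator by a $3\times 3$ matrix-valued symbol, and then transfer the conclusions back. First I would compute the symbol: since $-\Delta$ acts as multiplication by $|k|^2$ and $\operatorname{grad}\operatorname{div}$ acts as multiplication by $-k k^t$ (i.e.\ $(\operatorname{grad}\operatorname{div}\psi)^\wedge(k) = -k\,(k\cdot\hat\psi(k))$), the operator $D\psi = -\mu\Delta\psi - (\mu+\lambda)\operatorname{grad}\operatorname{div}\psi$ has symbol
\begin{align}
\widehat{D}(k) = \mu|k|^2\,\Id + (\mu+\lambda)\, k k^t.
\end{align}
This is, for each $k\neq 0$, a symmetric positive matrix: it acts as multiplication by $\mu|k|^2$ on the orthogonal complement of $k$ and by $\mu|k|^2 + (\mu+\lambda)|k|^2 = (2\mu+\lambda)|k|^2$ on the span of $k$. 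Since $\mu>0$ and $2\mu+\lambda>0$ (which follows from $\mu>0$ and $\mu+3\lambda/2>0$, hence $\lambda > -2\mu/3 > -2\mu$), both eigenvalues are strictly positive, and we get the two-sided bound of eigenvalues
\begin{align}
\mu|k|^2 \le \widehat{D}(k) \le (2\mu+\lambda)|k|^2
\end{align}
in the sense of symmetric matrices. Integrating against $|k|^{2\alpha}|\hat\psi(k)|^2$ and $\overline{\hat\psi(k)}$ and using Plancherel yields \eqref{eq:comparison-Delta-D} and the symmetry of $D$ for $\alpha=-1$; symmetry of $D$ for general $\alpha$ would be an identical computation with the pairing $\scalara{\cdot}{\cdot}:L^{2\ve}_{-\alpha}\times L^{2\ve}_\alpha\to\C$.

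For invertibility, I would invert the symbol explicitly. On the span of $k$ we invert $(2\mu+\lambda)|k|^2$, on its complement we invert $\mu|k|^2$; writing the orthogonal projection onto $\operatorname{span}(k)$ as $k k^t/|k|^2$, the inverse symbol is
\begin{align}
\widehat{D}(k)^{-1} = \frac{1}{\mu|k|^2}\Big(\Id - \frac{kk^t}{|k|^2}\Big) + \frac{1}{(2\mu+\lambda)|k|^2}\,\frac{kk^t}{|k|^2}
= \frac{1}{\mu|k|^2}\,\Id + \Big(\frac{1}{2\mu+\lambda} - \frac{1}{\mu}\Big)\frac{kk^t}{|k|^4}.
\end{align}
Translating back: $1/(\mu|k|^2)$ is $-\Delta^{-1}/\mu$, the factor $|k|^{-2}$ in $kk^t/|k|^4$ is a further $-\Delta^{-1}$, and $kk^t$ is $-\operatorname{grad}\operatorname{div}$ with the two minus signs cancelling, so we recover exactly the claimed formula
\begin{align}
D^{-1}\psi = -\Delta^{-1}\Big(\tfrac{1}{\mu}\psi + \big(\tfrac{1}{2\mu+\lambda} - \tfrac1\mu\big)\Delta^{-1}\operatorname{grad}\operatorname{div}\psi\Big),
\end{align}
and the coordinate version follows by reading off components. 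One then checks $D D^{-1} = D^{-1} D = \Id$ either by direct symbol multiplication or simply by noting the symbols are mutual inverses for every $k\neq 0$. The mapping properties $D:L^{2\ve}_{\alpha+2}(\bbV_0)\to L^{2\ve}_\alpha(\bbV_0)$ and $D^{-1}:L^{2\ve}_\alpha(\bbV_0)\to L^{2\ve}_{\alpha+2}(\bbV_0)$ are immediate from the two-sided bound $\mu|k|^2 \le \widehat D(k) \le (2\mu+\lambda)|k|^2$, which shows $D$ is bounded with bounded inverse between these spaces, using the isometric isomorphism $f\mapsto\hat f:L^{2\ve}_\alpha(\bbV_0)\to L^2_\alpha(\bbV_0)$ recorded earlier in the paper.

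I do not expect a genuine obstacle here; the only point requiring mild care is the decomposition of $\widehat D(k)$ into its action on $\operatorname{span}(k)$ versus $k^\perp$ — one must note that $k k^t$ is exactly $|k|^2$ times the orthogonal projection onto $\operatorname{span}(k)$, so that $\Id$ and $kk^t/|k|^2$ are complementary projections that are simultaneously diagonalized, which is what makes both the eigenvalue bounds and the explicit inverse transparent. Everything else is bookkeeping with the Fourier multipliers $|k|^2 \leftrightarrow -\Delta$, $-kk^t \leftrightarrow \operatorname{grad}\operatorname{div}$ already used verbatim elsewhere in the section (e.g.\ in the verification $d_1 w^b = b$).
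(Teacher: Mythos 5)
Your Fourier-symbol argument is correct, and it takes a genuinely different route than the paper's proof. The paper verifies the formula for $D^{-1}$ by a direct algebraic computation: it multiplies out $D^{-1}D\psi$ using only the identity $\operatorname{div}\operatorname{grad}=\Delta$ and checks that the terms collapse to $\psi$, and it obtains symmetry from the adjointness relation $2\scalara{D\psi'}{\psi}=\scalara{d_0\psi'}{d_0\psi}_F$ rather than from the realness and symmetry of the symbol. For the two-sided bound \eqref{eq:comparison-Delta-D}, the paper writes $\scalara{\psi}{D\psi}=\mu\scalara{\psi}{-\Delta\psi}+(\mu+\lambda)\scalara{\operatorname{div}\psi}{\operatorname{div}\psi}$ and separately proves $\scalara{\operatorname{div}\psi}{\operatorname{div}\psi}\le\scalara{\psi}{-\Delta\psi}$ (this step \emph{is} done via Fourier and Cauchy--Schwarz, so the spirit is not entirely foreign to the paper). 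Your approach instead diagonalizes the symbol $\mu|k|^2\Id+(\mu+\lambda)kk^t$ once by the orthogonal decomposition $\R^3=\operatorname{span}(k)\oplus k^\perp$, which simultaneously delivers the eigenvalue bounds, symmetry, invertibility, and the explicit inverse. The advantage of your version is that the formula for $D^{-1}$ is \emph{derived} rather than guessed and then verified, and the positivity/bounds are immediate from the eigenvalues; the advantage of the paper's version is that it stays at the level of elementary operator algebra with no appeal to spectral decomposition of the symbol. Both establish exactly the claimed statements, including the mapping properties between the weighted Sobolev spaces $L^{2\vee}_{\alpha+2}(\bbV_0)$ and $L^{2\vee}_\alpha(\bbV_0)$ via the two-sided symbol bound, which you correctly observe is all that is needed once the Fourier transform is an isometric isomorphism $L^{2\vee}_\alpha\to L^2_\alpha$. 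One tiny point worth noting explicitly (you gesture at it): the upper bound in \eqref{eq:comparison-Delta-D} requires $(2\mu+\lambda)\ge\mu$, i.e.\ $\mu+\lambda\ge 0$, which indeed follows from the stability condition $\mu>0,\ \mu+3\lambda/2>0$; your chain $\lambda>-2\mu/3>-2\mu$ shows $2\mu+\lambda>0$ but the eigenvalue ordering also needs $\mu+\lambda\ge0$, which is equally easy.
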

\begin{proof}
Using $\operatorname{div}\operatorname{grad}=\Delta$ and abbreviating
\begin{align}
\gamma:=\frac{1}{2\mu+\lambda}-\frac{1}{\mu},
\end{align}
we calculate
\begin{align}
D^{-1} D\psi=&
-\Delta^{-1}\left(\frac{1}{\mu}\operatorname{Id}+
\gamma
\Delta^{-1}\operatorname{grad}\operatorname{div}\right)
\left(-\mu\Delta\psi
-(\mu+\lambda)\operatorname{grad}\operatorname{div}\psi\right)
\nonumber
\\=&
\,\psi
+
  \gamma  \mu
\Delta^{-1}\operatorname{grad}\operatorname{div}\psi
+\frac{\mu+\lambda}{\mu}\Delta^{-1}
\operatorname{grad}\operatorname{div}\psi
+\gamma(\mu+\lambda)
\Delta^{-1}\operatorname{grad}\operatorname{div}\psi
=\psi
\end{align}
and similarly $DD^{-1}=\operatorname{id}$.

By the adjointness property \eqref{eq:ScalarF-Scalar-Correspondence}, the symmetry of $D$ is obvious by its definition:
$2\scalara{D\psi'}{\psi}
=\scalara{\nabla^Fd_0\psi'}{\psi}
=\scalara{d_0\psi'}{d_0 \psi}_F$ for $\psi,\psi'\in L^{2\vee}_1(\bbV_0)$.
Furthermore, one has 
\begin{align}
\scalara{\psi}{D\psi}=
\mu\scalara{\psi}{-\Delta\psi}+
(\mu+\lambda)\scalara{\operatorname{div}\psi}{\operatorname{div}\psi}.
\end{align}
We claim that
\begin{align}
\label{eq:fact-div-psi}
\scalara{\operatorname{div}\psi}{\operatorname{div}\psi}
\le
\scalara{\psi}{-\Delta\psi}.
\end{align}
This is best seen using a Fourier transform and the Cauchy-Schwarz inequality in $\C^3$:
\begin{align}
  \scalara{\operatorname{div}\psi}{\operatorname{div}\psi}
=&\|\wh{\operatorname{div}\psi}\|^2_2
=\|\ii k\cdot\wh{\psi}(k)\|^2_2
\le \||k| |\wh{\psi}(k)| \|^2_2
=\|\wh{d_0\psi}\|^2_2
=\|d_0\psi\|^2_2
=\scalara{\psi}{-\Delta\psi}
\end{align}
where $k\cdot\wh{\psi}(k)$ denotes the Euclidean scalar product in $\C^3$. 
Using fact \eqref{eq:fact-div-psi} and the stability condition
for $\mu$ and $\lambda$ given in \eqref{eq:def-F}, 
which implies $\mu+\lambda>0$,
we obtain also claim \eqref{eq:comparison-Delta-D}.
\end{proof}

\paragraph{Definition of the minimizer.}
In the next lemma, it is shown that the minimizer of the elastic
energy has the following form:
\begin{align}
\label{eq:def-w-star}
w^*&:=w^b+d_0\psi^*
\end{align}
with $w^b$ defined in \eqref{eq:def-wb}, 
\begin{align}
\label{eq:def-v-b}
\psi^*&:=D^{-1}v^b,\quad\text{and}\quad
v^b :=-\frac12\nabla^F w^b.
\end{align}

\begin{lemma}[Minimizer of the elastic energy]
\label{le:minimizer-elastic-energy}
The infimum in \eqref{eq:inf-H-ell} is a minimum:
\begin{align}
\label{eq:claim-minimizer}
H_{\rm el}^*(I)=\scalara{w^*}{w^*}_F.
\end{align}
It is unique in the following sense:
For all $w\in L^{2\vee}_0(\bbV_1)$ with $d_1w=b(I)$ and
$\scalara{w}{w}_F=\scalara{w^*}{w^*}_F$,
we have $w=w^*$. The summands of the minimizer $w^*$ given in \eqref{eq:def-w-star} have the following components:
\begin{align}
\label{eq:components-of-w-b-componentwise}
w^b_{ij}&=-(d_1^*\Delta^{-1}b)_{ij}=\sum_{l=1}^3\Delta^{-1}\partial_l b_{lij}, \\
d_0\psi^*_{ij}&=\Delta^{-1}\partial_i\sum_{k=1}^3\left(\partial_k(d_1^*\Delta^{-1}b)_{jk}
+\frac{\lambda}{2\mu+\lambda}\partial_j(d_1^*\Delta^{-1}b)_{kk}\right)\nonumber\\
&=-\partial_i\Delta^{-2}\sum_{k,l=1}^3 
\left(
\partial_k\partial_lb_{ljk}+
\frac{\lambda}{2\mu+\lambda}
\partial_j\partial_lb_{lkk}
\right), \qquad i,j\in[3].
\label{eq:components-of-d0-psi-stern-componentwise}
\end{align}
\end{lemma}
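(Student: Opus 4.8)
The plan is to exploit the orthogonal decomposition already set up in \eqref{eq:inf-H-ell}, namely that $H_{\rm el}^*(I)$ equals $\inf_{\psi\in L^{2\vee}_1(\bbV_0)}\scalara{w^b+d_0\psi}{w^b+d_0\psi}_F$. Since $\scalara{{\cdot}}{{\cdot}}_F$ is a (semidefinite) sesquilinear form and the map $\psi\mapsto d_0\psi$ is linear, this is a quadratic minimization problem on the Hilbert space $L^{2\vee}_1(\bbV_0)$. Expanding, $\scalara{w^b+d_0\psi}{w^b+d_0\psi}_F=\scalara{w^b}{w^b}_F+2\real\scalara{d_0\psi}{w^b}_F+\scalara{d_0\psi}{d_0\psi}_F$. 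First I would rewrite the cross term using the adjointness \eqref{eq:ScalarF-Scalar-Correspondence}: $\scalara{d_0\psi}{w^b}_F=\scalara{\psi}{\nabla^F w^b}=-2\scalara{\psi}{v^b}$, and the quadratic term as $\scalara{d_0\psi}{d_0\psi}_F=2\scalara{\psi}{D\psi}$, using $D=\tfrac12\nabla^F d_0$ and the symmetry of $D$ for $\alpha=-1$. So the functional to minimize over $\psi$ is $2\scalara{\psi}{D\psi}-4\real\scalara{\psi}{v^b}+\mathrm{const}$. Because $D$ is invertible and positive (by \eqref{eq:comparison-Delta-D}, $\scalara{\psi}{D\psi}\ge\mu\scalara{\psi}{-\Delta\psi}$, which controls the $\|{\cdot}\|_1^\vee$-norm), completing the square gives the unique minimizer $\psi^*=D^{-1}v^b$, exactly as in \eqref{eq:def-v-b}. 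This establishes \eqref{eq:claim-minimizer} with $w^*=w^b+d_0\psi^*$ and that the infimum is attained.

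Next I would address uniqueness in the stated sense. Suppose $w\in L^{2\vee}_0(\bbV_1)$ satisfies $d_1w=b(I)$ and $\scalara{w}{w}_F=\scalara{w^*}{w^*}_F$. Then $w-w^b\in\ker d_1$ on the relevant Sobolev scale, and by the density statement preceding \eqref{eq:inf-H-ell} we have $w=w^b+d_0\psi$ for some $\psi\in L^{2\vee}_1(\bbV_0)$. The equality of $F$-norms forces $\psi$ to also minimize the quadratic functional above; but strict convexity — which comes precisely from the lower bound $\scalara{\psi}{D\psi}\ge\mu\scalara{\psi}{-\Delta\psi}>0$ for $\psi\neq 0$ — gives $\psi=\psi^*$, hence $w=w^*$. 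One subtlety to handle carefully: positive \emph{semi}definiteness of $\scalara{{\cdot}}{{\cdot}}_F$ on all of $L^{2\vee}_0(\bbV_1)$ does not immediately give strict convexity, but restricted to the range of $d_0$ the form is genuinely positive definite because $\scalara{d_0\psi}{d_0\psi}_F=2\scalara{\psi}{D\psi}$ and $D>0$; this is the point that makes the uniqueness argument go through.

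Finally I would derive the explicit component formulas \eqref{eq:components-of-w-b-componentwise} and \eqref{eq:components-of-d0-psi-stern-componentwise}. The first is immediate: $w^b=-\Delta^{-1}d_1^*b$ from \eqref{eq:def-wb}, and writing out $(d_1^*b)_{ij}=-\sum_l\partial_l b_{lij}$ gives $w^b_{ij}=\sum_l\Delta^{-1}\partial_l b_{lij}$. For the second, I would compute $v^b=-\tfrac12\nabla^F w^b$ in coordinates using the formula for $\nabla^F$, then apply $D^{-1}$ using its explicit expression from the preceding Lemma, and finally apply $d_0$ (i.e.\ $(d_0\psi^*)_{ij}=\partial_i\psi^*_j$). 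The term involving $\tfrac{\lambda}{2\mu+\lambda}$ arises exactly from the factor $\bigl(\tfrac{1}{2\mu+\lambda}-\tfrac1\mu\bigr)\Delta^{-1}\operatorname{grad}\operatorname{div}$ in $D^{-1}$ combined with the $\lambda$-part of $\nabla^F$; getting this algebra right, and in particular tracking the index that carries no antisymmetry (the last index $k$), is the most error-prone but entirely routine part. I expect the main obstacle to be purely bookkeeping: keeping the Sobolev exponents $\alpha$ consistent across the chain $b\in L^{2\vee}_{-1}(\bbV_2)\Rightarrow w^b\in L^{2\vee}_0(\bbV_1)\Rightarrow v^b\in L^{2\vee}_{-1}(\bbV_0)\Rightarrow\psi^*\in L^{2\vee}_1(\bbV_0)\Rightarrow d_0\psi^*\in L^{2\vee}_0(\bbV_1)$, so that every operator is applied on the space where the earlier Lemma guarantees it is a well-defined (iso)morphism, and then verifying that $w^*$ is indeed bounded and smooth — smoothness following from elliptic regularity since $b(I)$ is smooth with compactly supported "singular part" mollified by $\varphi$.
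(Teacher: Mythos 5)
Your proposal is correct and follows essentially the same route as the paper's proof: you derive $\psi^*=D^{-1}v^b$ by completing the square in $\psi$ for the functional $2\scalara{\psi}{D\psi}-4\real\scalara{\psi}{v^b}+\mathrm{const}$, whereas the paper presents the identical calculation in reverse, first verifying $\nabla^F(w^b+d_0\psi^*)=0$ and then expanding $\scalara{w^*+d_0f}{w^*+d_0f}_F=\scalara{w^*}{w^*}_F+\scalara{d_0f}{d_0f}_F$. Your uniqueness argument via strict positivity of $\scalara{d_0f}{d_0f}_F=2\scalara{Df}{f}\ge 2\mu\|d_0f\|_2^2$, and your sketch of the component formulas from $w^b=-\Delta^{-1}d_1^*b$, $v^b=-\tfrac12\nabla^F w^b$, $\psi^*=D^{-1}v^b$, match the paper's proof exactly.
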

\begin{proof}
The calculation
$\nabla^F(w^b+d_0\psi^*)=
\nabla^F(w^b-\frac12d_0D^{-1}\nabla^F w^b)
=
\nabla^Fw^b-DD^{-1}\nabla^F w^b=0
$
shows that
the function $\psi^*$
solves the system of equations
\begin{align}
\label{eq:orthogonality-dual}
\scalara{\nabla^F(w^b+d_0\psi^*)}{f}=0, \quad (f\in L^{2\vee}_1(\bbV_0)),
\end{align}
or equivalently, using \eqref{eq:ScalarF-Scalar-Correspondence} and 
\eqref{eq:def-w-star},
\begin{align}
\label{eq:orthogonality}
\scalara{w^*}{d_0f}_F=0, \quad (f\in L^{2\vee}_1(\bbV_0)).
\end{align}
By the above, the following calculation shows that $w^*$ is a minimizer 
in \eqref{eq:inf-H-ell}
as claimed:
For all $f\in L^{2\vee}_1(\bbV_0)$:
\begin{align}
\scalara{w^*+d_0f}{w^*+d_0f}_F&=
\scalara{w^*}{w^*}_F+
2\operatorname{Re}\scalara{w^*}{d_0f}_F+
\scalara{d_0f}{d_0 f}_F
\nonumber\\&
=
\scalara{w^*}{w^*}_F+
\scalara{d_0f}{d_0 f}_F
\ge \scalara{w^*}{w^*}_F.
\end{align}
Furthermore, using \eqref{eq:comparison-Delta-D} we obtain :
\begin{align}
\scalara{d_0f}{d_0 f}_F
=&\scalara{\nabla^Fd_0f}{f}
=2\scalara{Df}{f}
\ge2\mu\scalara{f}{-\Delta f}=2\mu\scalara{d_0f}{d_0f}
=2\mu\|d_0f\|_2^2.
\end{align}
In particular, $d_0f\neq 0$ implies $\scalara{d_0f}{d_0 f}_F>0$,
which yields the claimed uniqueness of the minimizer.
Let $i,j\in[3]$. The identity  \eqref{eq:components-of-w-b-componentwise} follows from the definition \eqref{eq:def-wb} of $w^b$. 
Using it, we express $v^b \in L^{2\vee}_{-1}(\bbV_0)$ as follows:
\begin{align}
v^b_j&=\sum_{k,l=1}^3 \big[\lambda\partial_j\partial_l(\Delta^{-1}b)_{lkk}
+\mu \partial_k\partial_l((\Delta^{-1}b)_{lkj}+(\Delta^{-1}b)_{ljk})\big]
\nonumber\\
&=\Delta^{-1}\sum_{k,l=1}^3 \big[\lambda\partial_j\partial_lb_{lkk}
+\mu \partial_k\partial_lb_{ljk}\big].
\end{align}
Because of the antisymmetry 
$\partial_k\partial_l(\Delta^{-1}b)_{lkj}
=-\partial_l\partial_k(\Delta^{-1}b)_{klj}$,
one term dropped out in the last step.
It follows that 
\begin{align}
\psi^*_j&=(D^{-1}v^b)_j
=
-\Delta^{-1}
\left(\frac{1}{\mu}v^b_j+
\left(\frac{1}{2\mu+\lambda}-\frac{1}{\mu}\right)
\Delta^{-1}\partial_j\sum_{m=1}^3\partial_mv^b_m\right)
\\&=
-\Delta^{-2}\sum_{k,l=1}^3 
\Bigg(\frac{1}{\mu}[\lambda\partial_j\partial_lb_{lkk}
     +\mu \partial_k\partial_lb_{ljk}]+
\left(\frac{1}{2\mu+\lambda}-\frac{1}{\mu}\Bigg)
\Delta^{-1}\partial_j\sum_{m=1}^3\partial_m
[\lambda\partial_m\partial_lb_{lkk}
+\mu \partial_k\partial_lb_{lmk}]
\right). \nonumber
\end{align}
Using $\sum_{l,m}\partial_m\partial_lb_{lmk}=0$ from 
the antisymmetry $b_{lmk}=-b_{mlk}$, this equals
\begin{align}
\psi^*_j&=
-\Delta^{-2}\sum_{k,l=1}^3 
\left(\frac{1}{\mu}[\lambda\partial_j\partial_lb_{lkk}
+\mu \partial_k\partial_lb_{ljk}]+
\left(\frac{1}{2\mu+\lambda}-\frac{1}{\mu}\right)
\lambda\partial_j\partial_lb_{lkk}
\right)
\nonumber\\&=
-\Delta^{-2}\sum_{k,l=1}^3 
\left(
\partial_k\partial_lb_{ljk}+
\frac{\lambda}{2\mu+\lambda}
\partial_j\partial_lb_{lkk}
\right).
\end{align}
This shows that $d_0\psi^*$ has the form given in \eqref{eq:components-of-d0-psi-stern-componentwise}. 
\end{proof}

\medskip\noindent
\paragraph{Regularity of the minimizer.}
\begin{proof}[Proof of Proposition~\ref{pro:convergence-to-w-star}]
We set $L_{>\alpha}^{2\vee}(\bbV):=\bigcap_{\alpha':\alpha'>\alpha}L_{\alpha'}^{2\vee}(\bbV)$.
From 
$b(I)\in C^\infty_c(\R^3,\bbV_2)=\bigcap_{\alpha>-3/2}C_\alpha(\bbV_2)$ $\subseteq
L_{>-3/2}^{2\vee}(\bbV_2)$ it follows from \eqref{eq:def-wb} that 
$w^b\in L_{>-1/2}^{2\vee}(\bbV_1)$. Hence, by \eqref{eq:def-v-b}, 
$v^b\in L_{>-3/2}^{2\vee}(\bbV_0)$,
and then $\psi^*=D^{-1}v^b\in L_{>1/2}^{2\vee}(\bbV_0)$. We conclude 
$w^*=w^b+d_0\psi^*\in L_{>-1/2}^{2\vee}(\bbV_1)$. By Sobolev's embedding
theorem, $w^*$ is a bounded smooth function with all derivatives being 
bounded. 
In particular, pointwise evaluation $w^*(x)$ of $w^*$ makes sense for every
$x\in\R^3$.

For the remaining claim,
take a sequence $f^n\in L_1^{2\vee}(\bbV_0)$, $n\in\N$, with 
$H_{\rm{el}}(w^*+d_0f^n)\to H_{\rm{el}}(w^*)=H_{\rm{el}}^*(I)$ as $n\to\infty$.
Using $\operatorname{ker}(d_1:L^{2\vee}_0(\bbV_1)\to L^{2\vee}_{-1}(\bbV_2)) =
\operatorname{range}(d_0:L^{2\vee}_1(\bbV_0)\to L^{2\vee}_{0}(\bbV_1))$, 
it suffices to show that $\|d_0f^n\|_2$ converges to $0$
as $n\to\infty$.
In view of the system \eqref{eq:orthogonality} of equations, we know 
\begin{align}
2\scalara{f^n}{Df^n}&=\scalara{d_0f^n}{d_0f^n}_F
=\scalara{d_0f^n}{d_0f^n}_F+2\real\scalara{w^*}{d_0f^n}_F \nonumber\\
&=\scalara{w^*+d_0f^n}{w^*+d_0f^n}_F-\scalara{w^*}{w^*}_F
\nonumber\\
&=H_{\rm{el}}(w^*+d_0f^n)-H_{\rm{el}}(w^*)
\stackrel{n\to\infty}{\longrightarrow}0.
\end{align}
Using the comparison \eqref{eq:comparison-Delta-D} between $D$ and
$-\Delta$,
we conclude
\begin{align}
\|d_0f^n\|_2^2=\scalara{d_0f^n}{d_0f^n}=\scalara{f^n}{-\Delta f^n}
\stackrel{n\to\infty}{\longrightarrow}0.
\end{align}
\end{proof}

We remark that the facts $d_2b(I)=0$ and $b(I)\in C^\infty_c(\R^3,\bbV_2)$
imply $\int_{\R^3}b(I)(x)\,dx=0$ and hence 
$b(I)\in C_\alpha(\bbV_2)$ for all $\alpha>-5/2$, not only for all $\alpha>-3/2$.
As a consequence, $w^*\in L_{>-3/2}^{2\vee}(\bbV_1)$. 

\section{Cluster expansion}
\label{sec:ClusterExpansion}

We now develop a cluster expansion (polymer expansion) of the measures 
$P_{\beta,E}$
  defined in \eqref{e:Pbeta-def}, using the strategy of Fr\"ohlich and Spencer \cite{MR649811}.
In the following, $E\Subset\Lambda$ is a given finite set 
of edges in the mesoscopic lattice. We take the thermodynamic
limit $E\uparrow\Lambda$ only in the end.
\subsection{Sine-Gordon transformation}
The elastic energy $H^*_{\rm el}(I)$ defined in \eqref{eq:Hel-star-def} is a 
quadratic form. If $I=I_1+\cdots+I_n$ is the decomposition of $I$ into its connected
components, the mixed terms in $H^*_{\rm el}(I)$ induce non-local interactions 
between different $I_i$ and $I_j$. The Sine-Gordon transformation introduced now 
is a tool to avoid these non-localities. 

Because the quadratic form $H^*_{\rm el}$ is positive semidefinite, the 
function $\exp\{-\beta H^*_{\rm el}\}$ is the Fourier transform of a centered 
Gaussian random vector 
$\phi=(\phi_e)_{e\in E}$ on some auxiliary probability space with corresponding 
expectation operator denoted by $\bbE$:
\begin{align}
\bbE\big[\ee^{\ii\scalara{\phi}{I}}\big]=\ee^{-\beta H^*_{\rm el}(I)}.
\end{align}
For any observable of the form $\cI\ni I\mapsto\scalara{\sigma}{I}$ with 
$\sigma\in\R^E$, we define
\begin{align}
\label{eq:def-cal-Z}
\cZ_{\beta,\phi}:=&\sum_{I\in\cI}\ee^{\ii\scalara{\phi}{I}}\ee^{-\beta H_{\rm disl}(I)}, \\
Z_\beta(\sigma):= & \sum_{I\in\cI} \ee^{\ii\scalara{\sigma}{I}}\ee^{-\beta (H^*_{\rm el}(I)+H_{\rm disl}(I))}
=\bbE\left[\cZ_{\beta,\sigma+\phi}\right].
\label{eq:def-Z-beta-sigma}
\end{align}
In order to exchange expectation and summation,
we used that $\ee^{-\beta H_{\rm disl}(I)}$ is summable over the 
set $\cI$ by \eqref{eq:lower-bound-H-disl-in-terms-of-I}.
Note that $Z_\beta=Z_\beta(0)$ implies
\begin{align}
\label{eq:ratio-Zsigma-Z0}
  \frac{Z_\beta(\sigma)}{Z_\beta(0)}=\EE_{P_\beta}\big[\ee^{\ii\scalara{\sigma}{I}}\big].
\end{align}

\subsection{Preliminaries on cluster expansions}
In this section, we collect some background on cluster expansions (polymer expansions).
For recent treatments of cluster expansions, see in particular Poghosyan and Ueltschi \cite{MR2531305} or 
Bovier and Zahradn\'{i}k \cite{MR1788485} and references. To make our presentation most accessible, we use the textbook version given in \cite{friedli-velenik2018}.

Let $\cB$ denote the set of all non-empty connected subsets of $E$.
We call $X,Y\in\cB$ compatible, $X\not\sim Y$, if no edge in $X$ has a 
common vertex with an edge in $Y$. 
Otherwise $X,Y$ are called incompatible, $X\sim Y$. In particular, 
$X\sim X$. Recall $\supp I=\{e\in E:I_e\neq 0\}$ for $I\in\cI$, where $\cI$ is 
defined in \eqref{eq:def-cal-I}. Let 
\begin{align}
 \label{eq:def-cJ}
\cJ=\{I\in\cI:\supp I\in\cB\}.
\end{align} 
The incompatibility relation $\sim$ on
$\cB$ is inherited to an incompatibility relation, also denoted by $\sim$,
on $\cJ$ via
\begin{align}
I\sim I'\quad:\Leftrightarrow\quad
\supp I\sim \supp I'.
\end{align}
Every subset of $E$ can be uniquely decomposed in a set of pairwise compatible 
connected components, which is a subset of $\cB$. For $n\in\N$, let 
\begin{align}
\label{eq:def-cJnotsim}
\cJ_{\not\sim}^n=
& \{(I_1,\ldots,I_n)\in\cJ^n:I_i\not\sim I_j\text{ for all }i\neq j\}.
\end{align}
Consider $I\in\cI$ and the connected components $X_1,\ldots,X_n$ ($n\in\N_0$) 
of $\supp I$.
We set $I_j:=I1_{X_j}\in\cJ$. Here it is crucial that the Kirchhoff rule
\eqref{eq:Kirchhoff} holds for $I$ if and only if it holds for all $I_j$. 
Then, using the locality of $H_{\rm disl}$ given in Assumption 
\ref{ass:H-disl}, we obtain
\begin{align}
\scalara{\phi}{I}=\sum_{j=1}^n\scalara{\phi}{I_j},\qquad
H_{\rm disl}(I)=\sum_{j=1}^n H_{\rm disl}(I_j).
\end{align}
For $I\in\cI$ and some $\beta>0$, we abbreviate
\begin{align}
\label{eq:KIphi}
K(I,\phi):=
\ee^{\ii\scalara{\phi}{I}}\ee^{-\beta H_{\rm disl}(I)}. 
\end{align}
The function $K$ fulfills the following important factorization property: 
For $I\in\cI$ with connected components $I_1,\ldots,I_n$ as above, one has
\begin{align}
\label{eq:factorization-K-I-phi}
K(I,\phi)=\prod_{j=1}^nK(I_j,\phi).
\end{align}
This fact relies on the dimension being at least $3$.
In $d=2$, the Burgers vector density would not be locally neutral,
resulting in a significant complication of the argument (as in
\cite{MR634447} compared to \cite{MR649811}).
In view of the definition~\eqref{eq:def-cal-Z} of $\cZ_{\beta,\phi}$, 
equation \eqref{eq:factorization-K-I-phi} yields
\begin{align}
\label{eq:Zsigmaxy} 
\cZ_{\beta,\phi}
=& \sum_{I\in\cI}K(I,\phi)
= 1+\sum_{n=1}^\infty\frac{1}{n!}\sum_{(I_1,\ldots,I_n)\in\cJ^n_{\not\sim}}
\prod_{j=1}^n K(I_j,\phi).
\end{align}
The summand $1$ comes from the contribution of $I=0$, using $H_{\rm disl}(0)=0$.
Recall that by \eqref{eq:lower-bound-H-disl-in-terms-of-I} 
$|K(I,\phi)|=\ee^{-\beta H_{\rm disl}(I)}\le 
\ee^{-\beta c\|I\|_1}$ is summable over $I\in\cI$, which shows that
all expressions in \eqref{eq:Zsigmaxy} are absolutely summable. 
To control $\cZ_{\beta,\phi}$, we use a cluster expansion.
Next we cite the relevant theorems. 

Let $\cG_n$ denote the set of all connected subgraphs $G_n=([n],E_n)$ of the 
complete graph with vertex set $[n]=\{1,\ldots,n\}$. Let 
$\cE_n=\{E_n:G_n=([n],E_n)\in\cG_n\}$ denote the set of all corresponding 
edge sets. Consider the Ursell functions
\begin{align}
U(I_1,\ldots,I_n)=\frac{1}{n!}\sum_{E_n\in\cE_n}\prod_{\{i,j\}\in E_n}(-1_{\{I_i\sim I_j\}}).
\end{align}

Let $\tilde \cJ$ be any finite set 
endowed with a reflexive and symmetric incompatibility relation $\sim$.
We define $\tilde\cJ_{\not\sim}^n$ by \eqref{eq:def-cJnotsim}
with $\cJ$ replaced by $\tilde\cJ$.

\begin{fact}[Formal cluster expansion, {\cite[Proposition 5.3]{friedli-velenik2018}}]
\label{fact:velenik-clusterexpansion}
For every $I\in\tilde\cJ$, let $K(I)$ be a variable. Consider the polynomial 
in these variables 
\begin{align}
{\rm Z}:=1+\sum_{n=1}^\infty\frac{1}{n!}\sum_{(I_1,\ldots,I_n)\in\tilde\cJ^n_{\not\sim}}
\prod_{j=1}^n K(I_j). 
\end{align}
As a formal power series 
\begin{align}
\label{eq:cluster-expansion}
\log{\rm Z}=\sum_{n=1}^\infty \sum_{(I_1,\ldots,I_n)\in\tilde\cJ^n}
U(I_1,\ldots,I_n)\prod_{j=1}^n K(I_j).
\end{align}
Moreover, if the right hand side in
\eqref{eq:cluster-expansion} is absolutely summable,
then equation \eqref{eq:cluster-expansion} holds also in the 
classical sense as follows:
$\exp(\operatorname{rhs} \eqref{eq:cluster-expansion})={\rm Z}$.
\end{fact}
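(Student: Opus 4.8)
The plan is to prove the identity first as one of formal power series in the variables $\{K(I):I\in\tilde\cJ\}$ and then to upgrade it to a genuine numerical identity under the absolute-summability hypothesis. The key preliminary observation is that, since $\sim$ is reflexive, a tuple $(I_1,\dots,I_n)$ can belong to $\tilde\cJ^n_{\not\sim}$ only if its entries are pairwise distinct; because $\tilde\cJ$ is finite, the series defining ${\rm Z}$ is therefore actually a finite sum, so ${\rm Z}$ is a polynomial with constant term $1$. Consequently $\log{\rm Z}$ is a well-defined formal power series with zero constant term, and the right-hand side of \eqref{eq:cluster-expansion} is likewise a well-defined formal power series with zero constant term (each monomial in the $K(I)$ receives contributions from only finitely many tuples). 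Since $\exp$ and $\log$ are mutually inverse bijections between power series with constant term $1$ and those with constant term $0$, it suffices to prove ${\rm Z}=\exp(\text{right-hand side of }\eqref{eq:cluster-expansion})$ as formal power series.

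The first step is to drop the compatibility constraint: for $(I_1,\dots,I_n)\in\tilde\cJ^n$ one has $\prod_{1\le i<j\le n}(1-1_{\{I_i\sim I_j\}})=1$ if all pairs are compatible and $0$ otherwise, so the inner sum in ${\rm Z}$ may be extended to all of $\tilde\cJ^n$ with this weight inserted. Then I would expand the weight by the Mayer trick, with $g_{ij}:=-1_{\{I_i\sim I_j\}}$:
\begin{align}
\prod_{1\le i<j\le n}(1+g_{ij})=\sum_{E_n\subseteq\binom{[n]}{2}}\ \prod_{\{i,j\}\in E_n}g_{ij}=\sum_{\pi\vdash[n]}\ \prod_{B\in\pi}\ \sum_{E'\in\cE_{|B|}}\ \prod_{\{i,j\}\in E'}g_{ij},
\end{align}
where in the second equality each graph on $[n]$ is grouped according to the partition $\pi$ of $[n]$ into the vertex sets of its connected components (isolated vertices giving singleton blocks), the edge product factorizing over the blocks, and where the vertices of a block $B$ are identified with $[|B|]$. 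Since $\sum_{E'\in\cE_m}\prod_{\{i,j\}\in E'}g_{ij}=m!\,U$ of the relevant tuple by the very definition of $U$, substituting back yields
\begin{align}
{\rm Z}=1+\sum_{n\ge 1}\frac{1}{n!}\sum_{(I_1,\dots,I_n)\in\tilde\cJ^n}\ \sum_{\pi\vdash[n]}\ \prod_{B\in\pi}\Bigl(|B|!\,U\bigl((I_j)_{j\in B}\bigr)\prod_{j\in B}K(I_j)\Bigr).
\end{align}

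Now $|B|!\,U((I_j)_{j\in B})\prod_{j\in B}K(I_j)$ is a symmetric function of the tuple $(I_j)_{j\in B}$ (permuting the arguments just permutes the connected graphs summed in $U$), so the previous display has exactly the form to which the exponential formula applies — the identity that the exponential generating function of structures assembled by superposing connected pieces equals the exponential of the exponential generating function of the connected pieces. Invoking it turns the sum over partitions into an exponential and gives
\begin{align}
{\rm Z}=\exp\Bigl(\sum_{m\ge 1}\ \sum_{(I_1,\dots,I_m)\in\tilde\cJ^m}U(I_1,\dots,I_m)\prod_{j=1}^m K(I_j)\Bigr),
\end{align}
which is the asserted formal identity. (If one prefers not to cite it, the exponential formula can be proved by induction on $n$, matching the coefficient of each monomial and using the multinomial count of set partitions with prescribed block sizes.)

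Finally, suppose the right-hand side of \eqref{eq:cluster-expansion} is absolutely summable after substituting numerical values for the $K(I)$. Expanding $\exp$ of its value as a power series, each term is a finite product of absolutely convergent series, hence absolutely convergent, and by Fubini I may regroup all the terms according to their monomial in the $K(I)$; the numerical coefficient of each monomial then equals its coefficient in the formal power series $\exp(\text{right-hand side of }\eqref{eq:cluster-expansion})$, which by the step above equals its coefficient in the polynomial ${\rm Z}$. Summing over the finitely many monomials that occur gives $\exp(\text{right-hand side of }\eqref{eq:cluster-expansion})={\rm Z}$ in the classical sense. I expect the only delicate point to be the bookkeeping of combinatorial factors in the exponential-formula step — reconciling the $1/n!$ in front of ${\rm Z}$ with the $1/k!$ produced by expanding $\exp$ and the $1/m_s!$ carried by the connected pieces — while the removal of the hard-core constraint, the graphical expansion, and the formal-to-analytic passage are routine.
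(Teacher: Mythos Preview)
The paper does not prove this statement at all: it is quoted verbatim as a ``Fact'' from the textbook of Friedli and Velenik \cite[Proposition~5.3]{friedli-velenik2018}, so there is no proof in the paper to compare against. Your argument is the standard derivation --- remove the hard-core constraint via $\prod_{i<j}(1-1_{\{I_i\sim I_j\}})$, expand over graphs, decompose into connected components, recognize the Ursell function, and invoke the exponential formula --- and it is correct; the passage from the formal identity to the numerical one under absolute summability is also handled properly, since $\sum_{k\ge 0}\tfrac{1}{k!}\bigl(\sum_\alpha |c_\alpha K^\alpha|\bigr)^k=\exp\bigl(\sum_\alpha|c_\alpha K^\alpha|\bigr)<\infty$ justifies the Fubini step.
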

A criterion for convergence of the cluster expansion is cited in
the following fact: 
\begin{fact}[Convergence of cluster expansions, {\cite[Theorem 5.4]{friedli-velenik2018}}]
\label{fact:velenik-conv-clusterexpansion}
Assume that there are ``sizes'' $(a(I))_{I\in\tilde\cJ}$ $\in\R_{\ge 0}^{\tilde\cJ}$
and ``weights'' $(K(I))_{I\in\tilde\cJ}\in\C^{\tilde\cJ}$
such that for all $I\in\tilde\cJ$, the following bound holds: 
\begin{align}
\label{eq:assumption-cluster}
\sum_{J\in \tilde\cJ} |K(J)|1_{\{I\sim J\}}\ee^{a(J)}\le a(I).
\end{align}
Then we have for all $J\in\tilde\cJ$:
\begin{align}
1+\sum_{n=2}^\infty n \sum_{(I_1,\ldots,I_{n-1})\in \tilde\cJ^{n-1}}
|U(J,I_1,\ldots,I_{n-1})|\prod_{j=1}^{n-1} |K(I_j)|
\le \ee^{a(J)}.
\end{align}
Moreover, in this case, the series \eqref{eq:cluster-expansion}
is absolutely convergent.
\end{fact}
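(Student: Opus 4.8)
The final statement is the Koteck\'y--Preiss convergence criterion for abstract polymer (cluster) expansions, and the plan is to follow the standard two-step route: first bound the Ursell functions by a \emph{positive} sum over spanning trees, and then control the resulting tree sum by an induction on its order in which hypothesis~\eqref{eq:assumption-cluster} is exactly what closes the recursion. For the first step I would invoke the Penrose tree-graph inequality: since $\sim$ is reflexive and symmetric and the Mayer weights $-\mathbf 1_{\{I_i\sim I_j\}}$ take values in $\{-1,0\}$, the Penrose partition-scheme argument gives, for every $n$ and every $(I_1,\ldots,I_n)\in\tilde\cJ^n$,
\[
n!\,\bigl|U(I_1,\ldots,I_n)\bigr|
=\Bigl|\sum_{E_n\in\cE_n}\ \prod_{\{i,j\}\in E_n}\bigl(-\mathbf 1_{\{I_i\sim I_j\}}\bigr)\Bigr|
\le \sum_{T}\ \prod_{\{i,j\}\in T}\mathbf 1_{\{I_i\sim I_j\}},
\]
the sum running over spanning trees $T$ of the complete graph on $[n]$. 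This replaces the alternating sum over connected graphs by a manifestly nonnegative quantity, which is what makes the subsequent estimate work.

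For the second step, fix $J\in\tilde\cJ$, identify it with a root vertex labeled $0$, and for $N\in\N$ set
\[
\Phi_N(J):=1+\sum_{n=2}^{N}\frac{n}{n!}
\sum_{(I_1,\ldots,I_{n-1})\in\tilde\cJ^{n-1}}
\Bigl(\sum_{T}\ \prod_{\{i,j\}\in T}\mathbf 1_{\{(\cdot)_i\sim(\cdot)_j\}}\Bigr)
\prod_{j=1}^{n-1}|K(I_j)|,
\]
where $(\cdot)_0:=J$, $(\cdot)_i:=I_i$, and $T$ runs over spanning trees of the complete graph on $\{0,1,\ldots,n-1\}$. Decomposing each such tree, rooted at $0$, into the rooted subtrees hanging off the root and resumming via the exponential formula for rooted forests organizes the combinatorial factors into an exponential and yields a recursion of the schematic form
\[
\Phi_N(J)\ \le\ \exp\Bigl(\,\sum_{I\in\tilde\cJ:\ I\sim J} |K(I)|\,\Phi_{N-1}(I)\Bigr),
\]
where the quantity inside the exponential is a related sum over rooted subtrees. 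One then argues by induction on $N$: the base case $\Phi_1(J)=1\le\ee^{a(J)}$ is immediate from $a(J)\ge 0$, and if $\Phi_{N-1}(I)\le\ee^{a(I)}$ for all $I\in\tilde\cJ$, then inserting this into the recursion and using \eqref{eq:assumption-cluster} in the form $\sum_{I\sim J}|K(I)|\,\ee^{a(I)}\le a(J)$ gives $\Phi_N(J)\le\ee^{a(J)}$. By the first step, $\Phi_N(J)$ dominates the $N$-th partial sum of the series in the claim, so letting $N\to\infty$ yields the asserted bound.

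Absolute convergence of the series in~\eqref{eq:cluster-expansion} then follows from the symmetry of $U$ in its arguments: pulling out one argument as a root,
\begin{align*}
\sum_{n\ge 1}\sum_{(I_1,\ldots,I_n)\in\tilde\cJ^n}|U(I_1,\ldots,I_n)|\prod_{j=1}^n|K(I_j)|
&=\sum_{J\in\tilde\cJ}|K(J)|\Bigl(1+\sum_{n\ge 2}\sum_{(I_1,\ldots,I_{n-1})}|U(J,I_1,\ldots,I_{n-1})|\prod_{j=1}^{n-1}|K(I_j)|\Bigr)\\
&\le\sum_{J\in\tilde\cJ}|K(J)|\,\ee^{a(J)},
\end{align*}
which is finite because $\tilde\cJ$ is a finite set; hence~\eqref{eq:cluster-expansion} holds in the classical sense by Fact~\ref{fact:velenik-clusterexpansion}.

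The main obstacle is the resummation in the second step: one must identify the correct companion rooted-tree functional so that peeling the subtrees off the root reproduces exactly an exponential, and one must carry the extra combinatorial factor $n$ --- morally, the freedom to attach one more polymer to the cluster --- through the induction without losing it. The tree-graph inequality of the first step is a clean, purely combinatorial input, and once the resummation is set up correctly the verification that~\eqref{eq:assumption-cluster} closes the induction is routine.
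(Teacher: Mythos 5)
The paper states this as a \emph{Fact} and cites it to Friedli--Velenik, Theorem 5.4, without giving its own proof, so there is no in-paper argument to compare against. Your proposal is a correct reconstruction of exactly the route taken in that textbook (and in Koteck\'y--Preiss more generally): the Penrose tree-graph bound to replace $n!\,|U|$ by a positive sum over spanning trees, followed by the rooted-forest resummation and induction on the truncated sum closed by hypothesis~\eqref{eq:assumption-cluster}. The ``obstacle'' you flag in fact dissolves cleanly: with your normalization $n/n! = 1/(n-1)!$, the companion rooted-tree functional that appears inside the exponential after peeling subtrees off the root is $\Phi_{N-1}$ itself (equivalently, $\Phi_N(J)=\sum_{m=0}^{N-1}\sigma_m(J)/m!$ where $\sigma_m$ is the rooted-tree sum with $m$ non-root vertices, and the exponential formula for labelled rooted forests gives $\Phi_N(J)\le\exp\bigl(\sum_{I\sim J}|K(I)|\,\Phi_{N-1}(I)\bigr)$ directly), so the induction you sketch carries through without further ado.
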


\subsection{Partial partition sums}
We take a sequence $(\cJ_m)_{m\in\N}$ of finite subsets $\cJ_m\subseteq \cJ$
with $\cJ_m\uparrow \cJ$ and set $\cJ_\infty:=\cJ$, with $\cJ$ being defined in \eqref{eq:def-cJ}. 
For $m\in\N\cup\{\infty\}$ and $I\in\cI$, let
\begin{align}
\label{eq:zm-series}
z_m(\beta,I):=\sum_{n=1}^\infty\sum_{(I_1,\ldots,I_n)\in\cJ_m^n} 
U(I_1,\ldots,I_n) 1_{\{I_1+\cdots +I_n=I\}}
\prod_{j=1}^n\ee^{-\beta H_{\rm disl}(I_j)}\in\R
\end{align}
whenever this double series is absolutely convergent. 
Note that $z_m(\beta,I)=z_m(\beta,-I)$
because $H_{\rm disl}(I)=H_{\rm disl}(-I)$ by Assumption~\ref{ass:H-disl}. 
We abbreviate also $z(\beta,I):=z_\infty(\beta,I)$.
Uniformly in $m$, the summands in the series~\eqref{eq:zm-series}
are dominated by
the corresponding ones in $z^+(\beta,I):=z^+_\infty(\beta,I)$, where
\begin{align}
\label{eq:def-zmplus}
z^+_m(\beta,I)&:=\sum_{n=1}^\infty\sum_{(I_1,\ldots,I_n)\in\cJ_m^n} 
|U(I_1,\ldots,I_n)| 1_{\{I_1+\cdots +I_n=I\}}
\prod_{j=1}^n\ee^{-\beta H_{\rm disl}(I_j)}\in[0,\infty].
\end{align}
By monotone convergence for series, 
\begin{align}
\label{eq:monotone-conv-z+m}
z^+_m(\beta,I)\uparrow z^+_\infty(\beta,I)
\text{ as }m\to\infty.
\end{align}
For $I\in\cI$ we define its size 
\begin{align}
\size I :=\|I\|_1+\diam\supp I. 
\end{align}
Here $\diam$ denotes the diameter in the graph distance in the mesoscopic lattice $G=(V,E)$.
The size has the following properties. For $I_1,I_2\in\cI$ with $I_1\sim I_2$, one has 
\begin{align}
\label{eq:size-subadditive}
\size (I_1+I_2)\le\size I_1+\size I_2.
\end{align}
Recall that $I$ takes values in the microscopic lattice $\Gamma$. We set 
\begin{align}
\eta:=\min\{|\gamma|:\gamma\in\Gamma\setminus\{0\}\}
\end{align}
and observe for all $I\in \cI$:
\begin{align}
\label{eq:supp-vs-1-norm}
\eta|\supp I|\le \|I\|_1.
\end{align}
If in addition $\supp I$ is connected, we have $\diam\supp I\le |\supp I|$
and hence
\begin{align}
\label{eq:eq-1-norm-size}
\|I\|_1\le\size I\le\czwei\|I\|_1
\end{align}
with the constant $\czwei:=1+\eta^{-1}$. Using the constant $c$ from 
\eqref{eq:lower-bound-H-disl-in-terms-of-I}, let 
$\ceins=\ceins(c,\eta):=c/(2\czwei)$. Then, for $I\in\cI$, it follows 
\begin{align}
\label{eq:estimate-Hdisl-size}
H_{\rm disl}(I)\ge c\|I\|_1 \ge \frac{c}{\czwei}\size I \ge \ceins\size I.   
\end{align}
We choose now a constant 
$\cfuenf=\cfuenf(c,\eta)$ with 
$0<\cfuenf<\ceins$ and set $\cdrei=\cdrei(c,\eta):=\ceins-\cfuenf>0$. 
Fact~\ref{fact:velenik-conv-clusterexpansion} is applied twice, 
later with the weight $K(I,\phi)$ introduced in \eqref{eq:KIphi}, but
first with the weight 
\begin{align}
\label{eq:def-K}
K(J):=\ee^{-\beta\ceins\size J},\quad J\in\cJ,
\end{align}
and the size function $a:\cJ\to\R_{>0}$,
\begin{align}
a(J):=\beta \cfuenf \eta |\supp J|, \quad J\in\cJ.
\end{align}
The following lemma serves to verify the hypothesis
\eqref{eq:assumption-cluster} of the cluster expansion. 

\begin{lemma}[Peierls argument]
\label{le:hypothesis-convergence-cluster}
There is $\cvierundzwanzig>0$ such that for all $\beta$ large enough, 
\begin{align}
\label{eq:claim-Peierls}
\sup_{E\Subset\Lambda}\sup_{o\in E}\sum_{\substack{J\in\cJ:\\o\in\supp J}}
\ee^{-\beta\cdrei\size J}
\le \ee^{-\beta \cvierundzwanzig}.
\end{align}
Furthermore, one has 
\begin{align}
\label{eq:hilfsbeh-sum-K-exp-a}
\sup_{m\in\N}\sup_{E\Subset\Lambda}\sup_{o\in E}
\sum_{\substack{J\in \cJ_m:\\o\in\supp J}} |K(J)|\ee^{a(J)}
\le \ee^{-\beta \cvierundzwanzig},
\end{align}
and the hypothesis \eqref{eq:assumption-cluster} holds for $\tilde\cJ=\cJ_m$ for 
all $m\in\N$ and all $\beta$ large enough.
\end{lemma}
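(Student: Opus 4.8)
The plan is to establish the three assertions in the order stated, treating \eqref{eq:claim-Peierls} as the only one that requires real work; the bound \eqref{eq:hilfsbeh-sum-K-exp-a} and then the cluster hypothesis \eqref{eq:assumption-cluster} will follow from it by short manipulations.

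For \eqref{eq:claim-Peierls}, I would fix $E\Subset\Lambda$ and a root edge $o\in E$, and first discard the geometric part of the size: since $\size J\ge\|J\|_1$ by \eqref{eq:eq-1-norm-size}, it suffices to bound $\sum_{J\in\cJ,\,o\in\supp J}\ee^{-\beta\cdrei\|J\|_1}$. I would then organize this sum by first choosing the support $S=\supp J$, which must be a connected subset of $E$ containing $o$, and then choosing the nonzero values $J_e\in\Gamma\setminus\{0\}$ on each $e\in S$; dropping Kirchhoff's law \eqref{eq:Kirchhoff} only removes terms, so
\begin{align}
\sum_{\substack{J\in\cJ\\ o\in\supp J}}\ee^{-\beta\cdrei\|J\|_1}
\ \le\ \sum_{\substack{S\subseteq E\ \text{connected}\\ o\in S}}\ \prod_{e\in S}\Bigl(\sum_{\gamma\in\Gamma\setminus\{0\}}\ee^{-\beta\cdrei|\gamma|}\Bigr)
\ =\ \sum_{\substack{S\subseteq E\ \text{connected}\\ o\in S}}\varepsilon(\beta)^{|S|},
\end{align}
with $\varepsilon(\beta):=\sum_{\gamma\in\Gamma\setminus\{0\}}\ee^{-\beta\cdrei|\gamma|}$. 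Two elementary inputs then finish the estimate: (i) since $|\gamma|\ge\eta$ on $\Gamma\setminus\{0\}$ and $\sum_{\gamma\ne0}\ee^{-\beta\cdrei|\gamma|/2}\to0$ as $\beta\to\infty$, one has $\varepsilon(\beta)\le\ee^{-\beta\cdrei\eta/2}$ for $\beta$ large; (ii) since $(V_\Lambda,\Lambda)$ has bounded degree, a standard lattice-animal count gives a constant $c_0=c_0(\Lambda)$ such that there are at most $c_0^{\,s}$ connected edge subsets of $\Lambda$ of cardinality $s$ through a fixed edge, uniformly in that edge and in $E$. Combining, the displayed sum is at most $\sum_{s\ge1}(c_0\varepsilon(\beta))^s\le 2c_0\,\ee^{-\beta\cdrei\eta/2}$ once $c_0\varepsilon(\beta)\le 1/2$, and choosing (say) $\cvierundzwanzig:=\cdrei\eta/4$ and $\beta$ large enough that $2c_0\ee^{-\beta\cdrei\eta/4}\le1$ yields \eqref{eq:claim-Peierls}.

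For \eqref{eq:hilfsbeh-sum-K-exp-a}, I would observe that $\eta|\supp J|\le\|J\|_1\le\size J$ by \eqref{eq:supp-vs-1-norm} and \eqref{eq:eq-1-norm-size}, so that with $K(J)=\ee^{-\beta\ceins\size J}$ as in \eqref{eq:def-K} and $a(J)=\beta\cfuenf\eta|\supp J|$ one gets $|K(J)|\,\ee^{a(J)}\le\ee^{-\beta(\ceins-\cfuenf)\size J}=\ee^{-\beta\cdrei\size J}$; summing over $J\in\cJ_m\subseteq\cJ$ with $o\in\supp J$ and applying \eqref{eq:claim-Peierls} gives the claim, with the supremum over $m$ automatic since $\cJ_m\subseteq\cJ$. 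Finally, to verify the hypothesis \eqref{eq:assumption-cluster} for $\tilde\cJ=\cJ_m$, I would fix $I\in\cJ_m$ and use that $J\sim I$ forces $\supp J$ to contain an edge incident to one of the at most $2|\supp I|$ vertices of $\supp I$; summing over those vertices and over the at most $r=r(\Lambda)$ edges of $\Lambda$ meeting each vertex, and invoking \eqref{eq:hilfsbeh-sum-K-exp-a} with that edge as root, bounds the left-hand side of \eqref{eq:assumption-cluster} by $2r\,|\supp I|\,\ee^{-\beta\cvierundzwanzig}$, which is $\le\beta\cfuenf\eta\,|\supp I|=a(I)$ for $\beta$ large.

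What needs care here is bookkeeping rather than mathematics: making sure $\cvierundzwanzig$ is pinned down once and for all (it is reused later), that each ``$\beta$ large enough'' threshold depends only on $\Gamma$, $\Lambda$, $c$ and $\varphi$, and that the cancellation $\ceins-\cfuenf=\cdrei$ is precisely what lets the Peierls weight $\ee^{-\beta\ceins\size J}$ absorb the extra factor $\ee^{a(J)}$. The only genuinely external ingredient is the standard exponential bound on the number of connected subgraphs of a bounded-degree graph; everything else is a routine factorization of a product over edges and a geometric series in $\varepsilon(\beta)$.
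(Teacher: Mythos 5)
Your argument is correct and follows the paper's proof essentially verbatim: you reduce $\size J$ to $\|J\|_1$, drop Kirchhoff's law to factor the sum over edges and reach the geometric series in $\varepsilon(\beta)=\csieben(\beta)$, invoke the standard exponential count of connected edge sets through a fixed edge (the paper derives this via closed Peierls paths of length $2n$, giving $M^{2n}$; you cite it as a known lattice-animal bound, which is the same fact), and then handle \eqref{eq:hilfsbeh-sum-K-exp-a} and \eqref{eq:assumption-cluster} by the identical cancellation $\ceins-\cfuenf=\cdrei$ and a union bound over edges incompatible with $I$. The only cosmetic difference is that the paper packages the latter union bound through the closure $\overline{\supp I}$ with $|\overline{\supp I}|\le M|\supp I|$, while you enumerate vertices of $\supp I$ and incident edges directly, yielding a constant $2r$ in place of $M$; both give the linear-in-$|\supp I|$ bound needed to absorb into $a(I)$ for large $\beta$.
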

\begin{proof}
The claim \eqref{eq:claim-Peierls}
is verified as follows:
Take $o\in E\Subset\Lambda$. We estimate 
\begin{align}
\label{eq:remaining-claim}
\sum_{\substack{J\in\cJ:\\o\in\supp J}}
\ee^{-\beta\cdrei\size J}
\le 
\sum_{\substack{J\in\cJ:\\o\in\supp J}}
\ee^{-\beta\cdrei\|J\|_1}
=
\sum_{\substack{X\in\cB:\\o\in X}}
\sum_{\substack{J\in\cJ:\\\supp J=X}}
\ee^{-\beta\cdrei\|J\|_1}.
\end{align}
Dropping the condition that $J$ should fulfill the Kirchhoff rules,
we obtain the following bound for any given $X\in\cB$:
\begin{align}
\label{eq:sum-I}
\sum_{\substack{J\in\cJ:\\\supp J=X}}
\ee^{-\beta\cdrei\|J\|_1}
\le
\sum_{\substack{J\in(\Gamma\setminus \{0\})^X}}
\ee^{-\beta\cdrei\|J\|_1}
=
\csieben(\beta)^{|X|}
\end{align}
with the abbreviation
\begin{align}
\csieben(\beta):=\sum_{\substack{\iota\in\Gamma\setminus \{0\}}}
\ee^{-\beta\cdrei|\iota|}.
\end{align}
Because $\Gamma$ is a three-dimensional lattice, for any $k\in\N$ there are at most $\czweiundzwanzig k^2$ lattice points within distance $[\eta k,\eta {(k+1)})$ from $0$, where $\czweiundzwanzig>0$ is a constant only depending on $\Gamma$. Thus 
\begin{align}\label{eq:csieben-bd}
\csieben(\beta)
\le \sum_{k=1}^\infty \czweiundzwanzig k^2 \ee^{-\beta\cdrei \eta k }
\le \ee^{-\beta \cdreiundzwanzig}
\end{align}
for all large $\beta$ and a positive constant $\cdreiundzwanzig=\cdreiundzwanzig(\eta,\cdrei,\czweiundzwanzig)$. 
Substituting \eqref{eq:sum-I} and \eqref{eq:csieben-bd} into \eqref{eq:remaining-claim}, we obtain
\begin{align}
\sum_{\substack{J\in\cJ:\\o\in\supp J}}
\ee^{-\beta\cdrei\size J}
\le \sum_{\substack{X\in\cB:\\o\in X}}\ee^{-\beta \cdreiundzwanzig |X|}.
\end{align}
The last sum is estimated with the following Peierls argument:
Let $M<\infty$ be the maximal vertex degree in the mesoscopic lattice with 
edge set $\Lambda$.
Let $n\in\N$.
For every set $X\in \cB$ with $o\in X$ and $|X|=n$, 
there is a closed path of length $2n$ steps that starts in $o$ and
visits every edge in $X$.
There are at most $M^{2n}$ choices of closed paths of length $2n$ 
starting in $o$, and therefore at most $M^{2n}$ choices of $X$.
We conclude for all large $\beta$:
\begin{align}
\operatorname{lhs} \eqref{eq:claim-Peierls}
&\le\sup_{E\Subset\Lambda}\sup_{o\in E}
        \sum_{\substack{X\in\cB:\\o\in X}}\ee^{-\beta \cdreiundzwanzig |X|}
  \le 
\sum_{n=1}^\infty M^{2n}\ee^{-\beta \cdreiundzwanzig n}
=\frac{M^2\ee^{-\beta \cdreiundzwanzig}}{1-M^2\ee^{-\beta \cdreiundzwanzig}}\le 2M^2\ee^{-\beta \cdreiundzwanzig}
\le \ee^{-\beta \cvierundzwanzig}
\end{align}
with $\cvierundzwanzig>0$ only depending on $\cdreiundzwanzig $ and $M$. This proves the claim \eqref{eq:claim-Peierls}. 

Next, we prove claim \eqref{eq:hilfsbeh-sum-K-exp-a}. We observe that 
\eqref{eq:supp-vs-1-norm} and \eqref{eq:eq-1-norm-size} imply 
$\eta|\supp J|\le\|J\|_1\le\size J$. Using this, $\ceins-\cfuenf=\cdrei$, and 
\eqref{eq:claim-Peierls}, the claim \eqref{eq:hilfsbeh-sum-K-exp-a} follows 
from the estimate
\begin{align}
&\sum_{\substack{J\in \cJ_m:\\o\in\supp J}} |K(J)|\ee^{a(J)}
=\sum_{\substack{J\in\cJ_m:\\o\in\supp J}}
\ee^{-\beta(\ceins\size J-\cfuenf \eta |\supp J|)}\nonumber\\
\le &\sum_{\substack{J\in\cJ:\\o\in\supp J}}
\ee^{-\beta(\ceins-\cfuenf)\size J}
=\sum_{\substack{J\in\cJ:\\o\in\supp J}}
\ee^{-\beta\cdrei\size J}\le \ee^{-\beta \cvierundzwanzig},
\qquad (m\in\N).
\label{eq:split-sum}
\end{align}
Note that we have dropped the index $m$ in the last two sums. 

To verify \eqref{eq:assumption-cluster} for $\tilde\cJ=\cJ_m$,  
we define the closure of any edge set $F\subseteq E$ by 
\begin{align}
\overline{F}:=\{f\in E|f \text{ has a common vertex with some }e\in F\}. 
\end{align}
Let $m\in\N$ and $I\in\cJ_m$. Summing \eqref{eq:split-sum} over 
$o\in\overline{\supp I}$, we conclude
\begin{align}
&\sum_{J\in \cJ_m} |K(J)|1_{\{I\sim J\}}\ee^{a(J)}
\le \sum_{o\in\overline{\supp I}}
\sum_{\substack{J\in \cJ_m:\\o\in\supp J}} |K(J)|\ee^{a(J)}\nonumber\\
\le& \ee^{-\beta \cvierundzwanzig}|\overline{\supp I}|\le \ee^{-\beta \cvierundzwanzig}M|\supp I|
\le
\beta \cfuenf \eta |\supp I|=a(I)
\end{align}
for all large $\beta$, uniformly in $I\in\cJ_m$. Here we have used 
that $\ee^{-\beta \cvierundzwanzig}M\le \beta \cfuenf \eta$ for large $\beta$. 
\end{proof}

\begin{lemma}[Exponential decay of partial partition sums]
\label{le:exp-convergence-zplus}
For all sufficiently large $\beta>0$ and $m\in\N\cup\{\infty\}$, the following holds with the constants $\ceins=c/(2\czwei)$ and $\cvierundzwanzig$ as in Lemma
\ref{le:hypothesis-convergence-cluster}:
\begin{align}
\label{eq:claim-bound-z-beta+}
\sup_{m\in\N\cup\{\infty\}}\sup_{E\Subset\Lambda}\sup_{o\in E}
\sum_{\substack{I\in\cI:\\o\in\supp I}}\ee^{\beta\ceins\size I}z^+_m(\beta,I)
\le \ee^{-\beta \cvierundzwanzig}.
\end{align}
In particular, in this case, $z_m(\beta,I)$ is well-defined
for all $I\in\cI$ and fulfills the same bound 
\begin{align}
\label{eq:claim-bound-z-beta}
\sup_{m\in\N\cup\{\infty\}}\sup_{E\Subset\Lambda}\sup_{o\in E}
\sum_{\substack{I\in\cI:\\o\in\supp I}}\ee^{\beta\ceins\size I}|z_m(\beta,I)|
\le \ee^{-\beta \cvierundzwanzig}.
\end{align}
\end{lemma}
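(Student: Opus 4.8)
\emph{Proof proposal.}

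The plan is to unfold $z^+_m(\beta,I)$ by its definition \eqref{eq:def-zmplus}, exchange the order of summation so that $I$ is reconstructed as $I=I_1+\cdots+I_n$ from the cluster $(I_1,\ldots,I_n)$, and then absorb the prefactor $\ee^{\beta\ceins\size I}$ using two structural facts: the size is subadditive along a connected cluster, and $H_{\rm disl}$ dominates $\ceins\,\size$ by a factor $2$. What is left is exactly a sum over all clusters meeting a fixed edge $o$, weighted by the products $\prod_j K(I_j)$ with $K$ as in \eqref{eq:def-K}, and this is precisely the quantity controlled by the already–verified convergent cluster expansion, via Fact~\ref{fact:velenik-conv-clusterexpansion} together with Lemma~\ref{le:hypothesis-convergence-cluster}.

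First I would fix $m\in\N$; the case $m=\infty$ follows at the end from monotone convergence \eqref{eq:monotone-conv-z+m} (applied also to the outer sum). Inserting \eqref{eq:def-zmplus} and exchanging sums (justified once finiteness is established, see below) gives
\begin{align}
&\sum_{\substack{I\in\cI:\\o\in\supp I}}\ee^{\beta\ceins\size I}\,z^+_m(\beta,I)\nonumber\\
&\qquad=\sum_{n=1}^\infty\sum_{\substack{(I_1,\ldots,I_n)\in\cJ_m^n:\\o\in\supp(I_1+\cdots+I_n)}}|U(I_1,\ldots,I_n)|\,\ee^{\beta\ceins\size(I_1+\cdots+I_n)}\prod_{j=1}^n\ee^{-\beta H_{\rm disl}(I_j)}.\nonumber
\end{align}
Since $|U(I_1,\ldots,I_n)|\neq0$ forces the incompatibility graph of the tuple to be connected, $o\in\supp(I_1+\cdots+I_n)$ implies $o\in\supp I_j$ for some $j$; moreover, running along a spanning tree of this connected incompatibility graph gives $\diam\bigl(\bigcup_j\supp I_j\bigr)\le\sum_j\diam\supp I_j$, and combined with the triangle inequality for $\|\cdot\|_1$ and monotonicity of the diameter under $\supp(I_1+\cdots+I_n)\subseteq\bigcup_j\supp I_j$ this yields the subadditivity $\size(I_1+\cdots+I_n)\le\sum_j\size I_j$, which extends \eqref{eq:size-subadditive}. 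Each $I_j\in\cJ$ has connected support, so \eqref{eq:estimate-Hdisl-size} with $\ceins=c/(2\czwei)$ gives $H_{\rm disl}(I_j)\ge(c/\czwei)\size I_j=2\ceins\,\size I_j$. Hence
\begin{align}
\ee^{\beta\ceins\size(I_1+\cdots+I_n)}\prod_{j=1}^n\ee^{-\beta H_{\rm disl}(I_j)}&\le\ee^{\beta\ceins\sum_j\size I_j}\prod_{j=1}^n\ee^{-2\beta\ceins\size I_j}\nonumber\\&=\prod_{j=1}^n\ee^{-\beta\ceins\size I_j}=\prod_{j=1}^nK(I_j),\nonumber
\end{align}
so the displayed double sum is bounded by $\sum_{n\ge1}\sum_{(I_1,\ldots,I_n)\in\cJ_m^n,\ o\in\bigcup_j\supp I_j}|U(I_1,\ldots,I_n)|\prod_jK(I_j)$.

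To estimate this I would single out one cluster element: using $1_{\{o\in\bigcup_j\supp I_j\}}\le\sum_j1_{\{o\in\supp I_j\}}$, the permutation symmetry of $U$, and relabelling $I_1=J$, the sum is at most
\begin{align}
\sum_{\substack{J\in\cJ_m:\\o\in\supp J}}K(J)\Bigl(1+\sum_{n=2}^\infty n\sum_{(I_2,\ldots,I_n)\in\cJ_m^{n-1}}|U(J,I_2,\ldots,I_n)|\prod_{j=2}^nK(I_j)\Bigr).\nonumber
\end{align}
By Lemma~\ref{le:hypothesis-convergence-cluster} the hypothesis \eqref{eq:assumption-cluster} holds for $\tilde\cJ=\cJ_m$ with weights \eqref{eq:def-K} and sizes $a(J)=\beta\cfuenf\eta|\supp J|$, so Fact~\ref{fact:velenik-conv-clusterexpansion} bounds the bracketed factor by $\ee^{a(J)}$, and then \eqref{eq:hilfsbeh-sum-K-exp-a} gives $\sum_{J\in\cJ_m,\ o\in\supp J}K(J)\ee^{a(J)}\le\ee^{-\beta\cvierundzwanzig}$. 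This proves \eqref{eq:claim-bound-z-beta+} for finite $m$, whence, via \eqref{eq:monotone-conv-z+m}, also for $m=\infty$. For the remaining assertions: the same computation with the weaker bound $\ee^{-\beta H_{\rm disl}(I_j)}\le K(I_j)$ shows $z^+_m(\beta,I)\le\sum_n\sum_{(I_1,\ldots,I_n)\in\cJ_m^n}|U|\prod_jK(I_j)$, which is finite by Fact~\ref{fact:velenik-conv-clusterexpansion}; hence $z^+_m(\beta,I)<\infty$ for every $I\in\cI$, the series \eqref{eq:zm-series} defining $z_m(\beta,I)$ converges absolutely by the termwise domination $|z_m(\beta,I)|\le z^+_m(\beta,I)$ stated before \eqref{eq:def-zmplus}, that same domination gives \eqref{eq:claim-bound-z-beta}, and it retroactively justifies the exchange of sums used above.

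The main obstacle will be the bookkeeping of the cluster-sum reduction — above all, obtaining the subadditivity $\size(I_1+\cdots+I_n)\le\sum_j\size I_j$ with constant exactly $1$, since the $H_{\rm disl}$–versus–$\size$ margin is only a factor $2$ (one copy of $\ee^{-\beta\ceins\size I_j}$ is spent on this estimate and exactly one survives as $K(I_j)$), together with the single-index extraction from $\bigcup_j\supp I_j$ and the passage to $m=\infty$; everything else is a routine invocation of the cluster-expansion apparatus set up earlier in this section.
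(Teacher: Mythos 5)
Your proof is correct and follows essentially the same route as the paper's: unfold $z^+_m$ by its definition, use subadditivity of $\size$ along a connected cluster together with $H_{\rm disl}\ge 2\ceins\size$ to peel off one factor $\ee^{-\beta\ceins\size I}$, extract a single cluster element via permutation symmetry and $\mathbf{1}_{\{o\in\bigcup_j\supp I_j\}}\le\sum_j\mathbf{1}_{\{o\in\supp I_j\}}$, then invoke Fact~\ref{fact:velenik-conv-clusterexpansion} together with \eqref{eq:hilfsbeh-sum-K-exp-a}, and finally pass to $m=\infty$ by monotone convergence and deduce \eqref{eq:claim-bound-z-beta} from $|z_m|\le z_m^+$. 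The only minor differences (exchanging sums up front rather than after bounding $z^+_m(\beta,I)$ termwise, and making the $n$-term extension of \eqref{eq:size-subadditive} explicit) are cosmetic.
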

\begin{proof}
Using \eqref{eq:monotone-conv-z+m} and monotone convergence for series, it 
suffices to consider only finite 
$m\in\N$ to prove \eqref{eq:claim-bound-z-beta+}.
Let $\beta>0$, $I\in\cI\setminus\{0\}$, and $m\in\N$.
Inserting \eqref{eq:estimate-Hdisl-size} into the definition 
\eqref{eq:def-zmplus} of $z_m^+$  
yields 
\begin{align}
\label{eq:first-est-z-beta}
z^+_m(\beta,I)\le\sum_{n=1}^\infty
\sum_{\substack{(I_1,\ldots,I_n)\in\cJ_m^n:\\ I_1+\cdots +I_n=I}}
|U(I_1,\ldots,I_n)|
\prod_{j=1}^n\ee^{-\beta \frac{c}{ \czwei} \size I_j}.
\end{align}
For $(I_1,\ldots,I_n)\in\cJ^n_m$ 
with $U(I_1,\ldots,I_n)\neq 0$ 
and $I=I_1+\cdots +I_n$ as in the above summation, 
we have 
\begin{align}
\sum_{j=1}^n \size I_j\ge \size I
\end{align}
from \eqref{eq:size-subadditive}, and hence, taking again the constant 
$\ceins=c/(2\czwei)$
\begin{align}
\label{eq:decompose-product}
\prod_{j=1}^n\ee^{-\beta \frac{c}{ \czwei} \size I_j}
\le \ee^{-\beta\ceins\size I} \prod_{j=1}^n\ee^{-\beta\ceins\size I_j}.
\end{align}
We choose a reference edge $o\in\supp I$.
Substituting \eqref{eq:decompose-product} and \eqref{eq:def-K} in 
\eqref{eq:first-est-z-beta} yields 
\begin{align}
\label{eq:second-est-z-beta}
z^+_m(\beta,I)\le & \ee^{-\beta\ceins\size I} 
\sum_{n=1}^\infty\sum_{\substack{(I_1,\ldots,I_n)\in\cJ_m^n:\\ I_1+\cdots +I_n=I}}
|U(I_1,\ldots,I_n)|\prod_{j=1}^nK(I_j)
.
\end{align}
The inner sum on the right-hand side can be extended to run over all 
$n$-tuples $(I_1,\ldots,I_n)$ in 
$\cJ_m^n$ with $o\in\supp I_1\cup\ldots\cup\supp I_n$, since by definition, 
for any $I$ which cannot be written as a sum 
of such $I_1,\ldots, I_n$, for some $n\in\N$, $z^+_m(\beta,I)=0$ or $o\notin\supp I$. 
It follows
\begin{align}
\sum_{\substack{I\in\cI:\\o\in\supp I}}\ee^{\beta\ceins\size I}z^+_m(\beta,I)
\le 
\sum_{n=1}^\infty\sum_{\substack{(I_1,\ldots,I_n)\in\cJ_m^n:\\
o\in\supp I_1\cup\ldots\cup\supp I_n}}
|U(I_1,\ldots,I_n)|\prod_{j=1}^nK(I_j)
=:C_{m,o,\beta}.
\end{align}

As we observed above, it suffices to consider only finite $m$ in the claim 
\eqref{eq:claim-bound-z-beta+}. 
It remains to show that for $\beta$ large enough it is true that
\begin{align}
\label{eq:reduced-claim1}
\sup_{m\in\N}\sup_{E\Subset\Lambda}
\sup_{o\in E}
C_{m,o,\beta}
\le \ee^{-\beta \cvierundzwanzig}.
\end{align}
Note that this condition does not involve $I$.
Because $|U(I_1,\ldots,I_n)|$ is invariant under permutation of its
arguments, we can bound $C_{m,o,\beta}$ by
\begin{align}
C_{m,o,\beta}\le&
\sum_{n=1}^\infty 
n \sum_{\substack{(I_1,\ldots,I_n)\in\cJ_m^n:\\o\in\supp I_1}}
|U(I_1,\ldots,I_n)|\prod_{j=1}^nK(I_j)
\nonumber\\
=&
\sum_{\substack{I_1\in\cJ_m:\\o\in\supp I_1}} K(I_1) \bigg(1+\sum_{n=2}^\infty 
n \sum_{(I_2,\ldots,I_n)\in\cJ_m^{n-1}}
|U(I_1,\ldots,I_n)|\prod_{j=2}^nK(I_j)\bigg);
\label{eq:auxiliary-partition-sum}
\end{align}
for the summand indexed by $n=1$ we have used $U(I_1)=1$.

By Lemma~\ref{le:hypothesis-convergence-cluster}, we may apply 
Fact~\ref{fact:velenik-conv-clusterexpansion}, yielding
\begin{align}
\label{eq:cluster-convergence}
1+\sum_{n=2}^\infty 
n \sum_{(I_2,\ldots,I_n)\in\cJ_m^{n-1}}
|U(I_1,\ldots,I_n)|\prod_{j=2}^nK(I_j)\le \ee^{a(I_1)}.
\end{align}
Combining \eqref{eq:auxiliary-partition-sum}, \eqref{eq:cluster-convergence}, and \eqref{eq:hilfsbeh-sum-K-exp-a} from 
Lemma~\ref{le:hypothesis-convergence-cluster} gives
\begin{align}
\label{eq:bound-K-exp-a}
\sup_{m\in\N}\sup_{E\Subset\Lambda}\sup_{o\in E}C_{m,o,\beta}
\le \sup_{m\in\N}\sup_{E\Subset\Lambda}\sup_{o\in E} 
\sum_{\substack{I_1\in\cJ_m:\\o\in\supp I_1}}K(I_1)\ee^{a(I_1)}
\le \ee^{-\beta \cvierundzwanzig},
\end{align} 
yielding claim~\eqref{eq:reduced-claim1}.
Since $|z_m(\beta,I)|\le z_m^+(\beta,I)$, the claim \eqref{eq:claim-bound-z-beta} 
is an immediate consequence of \eqref{eq:claim-bound-z-beta+}. 
\end{proof}

\subsection{Gaussian lower bound for Fourier transforms}
Next, we apply a cluster expansion with $K(I,\phi)$ defined in \eqref{eq:KIphi}
to obtain a representation of $\cZ_{\beta,\phi}$ and finally a bound for the Fourier 
transform of the observable. 

\begin{lemma}[Partition sums in the presence of $\phi$]
\label{le:Z-beta-real}
For all $\beta$ large enough the following identity holds for any 
$\phi\in\R^E$:
\begin{align}
\label{eq:z-cos}
0<\cZ_{\beta,\phi}= &\exp\left(\sum_{I\in\cI}z(\beta,I)\ee^{\ii\scalara{\phi}{I}}\right)
= \exp\left(\sum_{I\in\cI}z(\beta,I)\cos\scalara{\phi}{I}\right)
\nonumber\\\le& \exp\left(\sum_{I\in\cI}z^+(\beta,I)\right)
<\infty.
\end{align}
\end{lemma}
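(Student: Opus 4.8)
The plan is to run the convergent cluster (polymer) expansion of Facts~\ref{fact:velenik-clusterexpansion} and~\ref{fact:velenik-conv-clusterexpansion} on the \emph{finite} polymer systems $\cJ_m$ with the $\phi$-dependent weights $K(\cdot,\phi)$ from \eqref{eq:KIphi}, then to regroup the resulting logarithm according to the value of the total current $I=I_1+\dots+I_n$, and finally to let $m\to\infty$. Fix $\beta$ large enough for Lemmas~\ref{le:hypothesis-convergence-cluster} and~\ref{le:exp-convergence-zplus}, and fix $\phi\in\R^E$. Since $|K(J,\phi)|=\ee^{-\beta H_{\rm disl}(J)}\le\ee^{-\beta\ceins\size J}=K(J)$ by \eqref{eq:estimate-Hdisl-size} and \eqref{eq:def-K}, the hypothesis \eqref{eq:assumption-cluster} verified in Lemma~\ref{le:hypothesis-convergence-cluster} for the weights $K(J)$ and sizes $a(J)=\beta\cfuenf\eta|\supp J|$ continues to hold verbatim for the weights $K(\cdot,\phi)$; hence Fact~\ref{fact:velenik-conv-clusterexpansion} applies with $\tilde\cJ=\cJ_m$, the cluster series converges absolutely, and Fact~\ref{fact:velenik-clusterexpansion} yields
\begin{align}
\cZ^{(m)}_{\beta,\phi}=\exp\Big(\sum_{n=1}^\infty\sum_{(I_1,\dots,I_n)\in\cJ_m^n}U(I_1,\dots,I_n)\prod_{j=1}^n K(I_j,\phi)\Big),
\end{align}
where $\cZ^{(m)}_{\beta,\phi}$ is the finite-volume analogue of $\cZ_{\beta,\phi}$ obtained by replacing $\cJ$ with $\cJ_m$ in the polymer representation \eqref{eq:Zsigmaxy} (equivalently, by the factorization \eqref{eq:factorization-K-I-phi}, $\cZ^{(m)}_{\beta,\phi}=\sum_I K(I,\phi)$ summed over those $I\in\cI$ all of whose connected components lie in $\cJ_m$).

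Next I would use $\prod_{j=1}^n K(I_j,\phi)=\ee^{\ii\scalara{\phi}{I_1+\dots+I_n}}\prod_{j=1}^n\ee^{-\beta H_{\rm disl}(I_j)}$ and reorganize the (absolutely convergent) cluster series according to the value $I=I_1+\dots+I_n\in\cI$; by the very definition \eqref{eq:zm-series} of $z_m(\beta,I)$ this gives
\begin{align}
\log\cZ^{(m)}_{\beta,\phi}=\sum_{I\in\cI}z_m(\beta,I)\,\ee^{\ii\scalara{\phi}{I}}.
\end{align}
The rearrangement is justified by absolute convergence of the cluster series; moreover $\sum_{I\in\cI}z^+_m(\beta,I)$, which equals that series, is bounded uniformly in $m$ by $2|E|\,\ee^{-\beta\cvierundzwanzig}$ --- for $I\neq0$ this is Lemma~\ref{le:exp-convergence-zplus}, and the $I=0$ term is handled by assigning each contributing tuple a reference edge $o\in\supp I_1\cup\dots\cup\supp I_n$ and bounding by $\sum_{o\in E}C_{m,o,\beta}\le|E|\,\ee^{-\beta\cvierundzwanzig}$ in the notation of the proof of that lemma. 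Now let $m\to\infty$: the left-hand side converges to $\cZ_{\beta,\phi}$ by dominated convergence, because $\cJ_m\uparrow\cJ$ and $|K(I,\phi)|=\ee^{-\beta H_{\rm disl}(I)}\le\ee^{-\beta c\|I\|_1}$ is summable over $\cI$ by \eqref{eq:lower-bound-H-disl-in-terms-of-I}; on the right-hand side $z_m(\beta,I)\to z(\beta,I)$ for each fixed $I$ (dominated convergence in \eqref{eq:zm-series}, using $\cJ_m\uparrow\cJ$ and domination by the $m=\infty$ series of nonnegative terms $|U|\prod_j\ee^{-\beta H_{\rm disl}(I_j)}$), while $|z_m(\beta,I)|\le z^+(\beta,I)$ with $\sum_{I\in\cI}z^+(\beta,I)<\infty$ by \eqref{eq:monotone-conv-z+m} and the uniform bound above, so a further dominated convergence yields $\sum_I z_m(\beta,I)\ee^{\ii\scalara{\phi}{I}}\to\sum_I z(\beta,I)\ee^{\ii\scalara{\phi}{I}}$. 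Hence $\cZ_{\beta,\phi}=\exp\big(\sum_{I\in\cI}z(\beta,I)\ee^{\ii\scalara{\phi}{I}}\big)$.

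It remains to record the reality statement and the bounds, which are immediate. Substituting $I\mapsto-I$ (a bijection of $\cI$) in the exponent and using $z(\beta,I)=z(\beta,-I)$ shows that $\sum_{I\in\cI}z(\beta,I)\ee^{\ii\scalara{\phi}{I}}$ equals its own complex conjugate, hence is real, and equals its average with that conjugate, namely $\sum_{I\in\cI}z(\beta,I)\cos\scalara{\phi}{I}$; this is the second equality in the claim and in particular shows $\cZ_{\beta,\phi}>0$. For the upper bound, $\sum_I z(\beta,I)\cos\scalara{\phi}{I}\le\sum_I|z(\beta,I)|\le\sum_I z^+(\beta,I)$, so monotonicity of $\exp$ gives $\cZ_{\beta,\phi}\le\exp\big(\sum_{I\in\cI}z^+(\beta,I)\big)<\infty$. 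I expect the only genuinely delicate point to be the combinatorial bookkeeping legitimizing the interchange of summations --- especially the uniform-in-$m$ finiteness of $\sum_{I\in\cI}z^+_m(\beta,I)$ including the $I=0$ contribution --- while the convergence of the cluster expansion and the two limits $m\to\infty$ are routine once Lemmas~\ref{le:hypothesis-convergence-cluster} and~\ref{le:exp-convergence-zplus} are available.
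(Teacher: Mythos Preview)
Your proposal is correct and follows essentially the same route as the paper: apply the convergent cluster expansion on the finite polymer systems $\cJ_m$ with the $\phi$-dependent weights (using $|K(J,\phi)|\le K(J)$ to inherit the convergence criterion from Lemma~\ref{le:hypothesis-convergence-cluster}), regroup the logarithm by the total current $I$ to obtain $\sum_I z_m(\beta,I)\ee^{\ii\scalara{\phi}{I}}$, and pass to the limit $m\to\infty$ by dominated convergence on both sides, followed by the $I\leftrightarrow -I$ symmetry for the cosine identity. Your explicit treatment of the $I=0$ contribution to $\sum_I z^+_m(\beta,I)$ via a reference edge in $\supp I_1\cup\dots\cup\supp I_n$ is a small extra care the paper leaves implicit (it simply cites \eqref{eq:claim-bound-z-beta+} after identifying $\sum_I z^+(\beta,I)$ with the full unsigned cluster series), but the argument and the resulting bound are the same.
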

\begin{proof}
Recall the definitions \eqref{eq:zm-series} and \eqref{eq:def-zmplus} of 
$z_m$ and $z_m^+$. Take any $\phi\in\R^E$.
Rearranging a multiple series with positive summands and using 
\eqref{eq:claim-bound-z-beta+} for $\beta$ large enough, we obtain 
\begin{align}
\label{eq:z+summable}
\sum_{n=1}^\infty\sum_{(I_1,\ldots,I_n)\in\cJ^n} 
|U(I_1,&\ldots,I_n)| \prod_{j=1}^n \ee^{-\beta H_{\rm disl}(I_j)}
=\sum_{I\in\cI}z^+(\beta,I)\le |E|\ee^{-\beta\cvierundzwanzig}<\infty.
\end{align}
Using this as a dominating series and the fact $|K(I,\phi)|=\ee^{-\beta H_{\rm disl}(I)}$, 
the following rearrangement of the series is valid for all $m\in\N$:
\begin{align}
\label{eq:sum-z-m-exp}
\sum_{n=1}^\infty\sum_{(I_1,\ldots,I_n)\in\cJ_m^n} 
U(I_1,\ldots,I_n) \prod_{j=1}^n K(I_j,\phi)
=\sum_{I\in\cI}z_m(\beta,I)\ee^{\ii\scalara{\phi}{I}}.
\end{align}
By \eqref{eq:estimate-Hdisl-size}, 
\begin{align}
|K(I,\phi)|=\ee^{-\beta H_{\rm disl}(I)}  
\le \ee^{-\beta\ceins\size I}=K(I). 
\end{align}
According to Lemma \ref{le:hypothesis-convergence-cluster} and 
Facts \ref{fact:velenik-clusterexpansion} and 
\ref{fact:velenik-conv-clusterexpansion}, one has for all $m\in\N$:
\begin{align}
\exp(\operatorname{lhs} \eqref{eq:sum-z-m-exp})
&=
1+\sum_{n=1}^\infty\frac{1}{n!}\sum_{(I_1,\ldots,I_n)\in(\cJ_m)^n_{\not\sim}}
\prod_{j=1}^n K(I_j,\phi).
\label{eq:cluster-exp2}
\end{align}
From monotone convergence, we know 
\begin{align}
1+\sum_{n=1}^\infty\frac{1}{n!}\sum_{(I_1,\ldots,I_n)\in(\cJ_m)^n_{\not\sim}}
\prod_{j=1}^n |K(I_j,\phi)|
\stackrel{m\to\infty}{\longrightarrow}
1+\sum_{n=1}^\infty\frac{1}{n!}\sum_{(I_1,\ldots,I_n)\in\cJ^n_{\not\sim}}
\prod_{j=1}^n |K(I_j,\phi)| <\infty;
\label{eq:partial-sum-m-to-infty}
\end{align}
the finiteness follows as in the argument below \eqref{eq:Zsigmaxy}.  
Consequently, applying dominated convergence in \eqref{eq:cluster-exp2}
and using $\cJ_m\uparrow \cJ$ and \eqref{eq:Zsigmaxy} yields
\begin{align}
\exp(\operatorname{lhs} \eqref{eq:sum-z-m-exp})
\stackrel{m\to\infty}{\longrightarrow}&
1+\sum_{n=1}^\infty\frac{1}{n!}\sum_{(I_1,\ldots,I_n)\in\cJ^n_{\not\sim}}
\prod_{j=1}^n K(I_j,\phi)
=\cZ_{\beta,\phi}.
\end{align}
On the other hand, 
from \eqref{eq:z+summable} and dominated convergence for series,
\begin{align}
\label{eq:zm-to-z-series}
\sum_{I\in\cI}z_m(\beta,I)\ee^{\ii\scalara{\phi}{I}}
\stackrel{m\to\infty}{\longrightarrow}
\sum_{I\in\cI}z(\beta,I)\ee^{\ii\scalara{\phi}{I}}.
\end{align}
Taking the limit $m\to\infty$ in equation \eqref{eq:sum-z-m-exp} 
yields the first equality in claim \eqref{eq:z-cos}.  

The second equality of claim \eqref{eq:z-cos} follows from the following 
symmetry consideration. One has 
$I\in\cI$ if and only if $-I\in\cI$ and $z(\beta,I)=z(\beta,-I)$ by definition, hence
\begin{align}
\sum_{I\in\cI}z(\beta,I)\ee^{\ii\scalara{\phi}{I}}=&\frac12\left[\sum_{I\in\cI}z(\beta,I)\ee^{\ii\scalara{\phi}{I}}+\sum_{I\in\cI}z(\beta,-I)\ee^{\ii\scalara{\phi}{-I}}\right]=\sum_{I\in\cI}z(\beta,I)\cos\scalara{\phi}{I}.
\end{align}
The last series converges absolutely and its absolute value is bounded by 
$\sum_{I\in\cI}z^+(\beta,I)<\infty$. 
\end{proof}

\begin{lemma}[Gaussian lower bound for Fourier transforms]
\label{le:fourier-observable}
For all $\beta$ large enough, the following holds for any $\sigma\in\R^E$:
\begin{align}
\label{eq:lower-bound-fourier-trafo-sigma-I}
\EE_{P_\beta}\Big[\ee^{\ii\scalara{\sigma}{I}}\Big]\ge
\exp\left(-\frac{1}{2}\sum_{I\in\cI}|z(\beta,I)|\scalara{\sigma}{I}^2\right)
.
\end{align}
\end{lemma}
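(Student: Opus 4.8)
The plan is to use \eqref{eq:ratio-Zsigma-Z0} to reduce the claim to the lower bound
$Z_\beta(\sigma)\ge Z_\beta(0)\,\exp\bigl(-\tfrac12\sum_{I\in\cI}|z(\beta,I)|\,\scalara{\sigma}{I}^2\bigr)$, and to extract this from the cosine representation of Lemma~\ref{le:Z-beta-real} by a symmetrisation trick. Throughout, $\beta$ is taken large enough that Lemmas~\ref{le:exp-convergence-zplus} and~\ref{le:Z-beta-real} apply; in particular $\sum_{I\in\cI}|z(\beta,I)|<\infty$, so every series below converges absolutely and uniformly in the auxiliary field $\phi\in\R^E$, and all rearrangements are legitimate. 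Note $Z_\beta(0)=\bbE[\cZ_{\beta,\phi}]>0$ by Lemma~\ref{le:Z-beta-real}.

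First I would exploit the $I\leftrightarrow-I$ symmetry — which gives $\cZ_{\beta,\psi}=\cZ_{\beta,-\psi}$ for every $\psi\in\R^E$ — together with the fact that $\phi$ and $-\phi$ have the same law, to rewrite $Z_\beta(\sigma)=\bbE[\cZ_{\beta,\sigma+\phi}]$ (from \eqref{eq:def-Z-beta-sigma}) symmetrically as $Z_\beta(\sigma)=\bbE[\cZ_{\beta,-\sigma+\phi}]$, hence by the arithmetic--geometric mean inequality applied to the nonnegative quantities $\cZ_{\beta,\pm\sigma+\phi}$,
\begin{align}
Z_\beta(\sigma)=\bbE\Bigl[\tfrac12\bigl(\cZ_{\beta,\sigma+\phi}+\cZ_{\beta,-\sigma+\phi}\bigr)\Bigr]\ge\bbE\Bigl[\bigl(\cZ_{\beta,\sigma+\phi}\,\cZ_{\beta,-\sigma+\phi}\bigr)^{1/2}\Bigr].
\end{align}
Then, using the representation $\cZ_{\beta,\psi}=\exp\bigl(\sum_{I\in\cI}z(\beta,I)\cos\scalara{\psi}{I}\bigr)$ from \eqref{eq:z-cos} and the elementary identity $\cos(u+v)+\cos(u-v)=2\cos u\cos v$ with $u=\scalara{\phi}{I}$ and $v=\scalara{\sigma}{I}$, the product collapses:
\begin{align}
\bigl(\cZ_{\beta,\sigma+\phi}\,\cZ_{\beta,-\sigma+\phi}\bigr)^{1/2}=\exp\Bigl(\sum_{I\in\cI}z(\beta,I)\cos\scalara{\sigma}{I}\,\cos\scalara{\phi}{I}\Bigr).
\end{align}

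The last step is to factor out $\cZ_{\beta,\phi}=\exp\bigl(\sum_{I}z(\beta,I)\cos\scalara{\phi}{I}\bigr)$, so that
\begin{align}
\bigl(\cZ_{\beta,\sigma+\phi}\,\cZ_{\beta,-\sigma+\phi}\bigr)^{1/2}=\cZ_{\beta,\phi}\,\exp\Bigl(-\sum_{I\in\cI}z(\beta,I)\,\bigl(1-\cos\scalara{\sigma}{I}\bigr)\cos\scalara{\phi}{I}\Bigr),
\end{align}
and to bound the second factor from below pointwise in $\phi$: since $1-\cos\scalara{\sigma}{I}\ge 0$ and $|\cos\scalara{\phi}{I}|\le 1$, the exponent is at least $-\sum_{I}|z(\beta,I)|\,(1-\cos\scalara{\sigma}{I})$, and then $1-\cos\theta\le\tfrac12\theta^2$ gives the clean bound $-\tfrac12\sum_{I\in\cI}|z(\beta,I)|\,\scalara{\sigma}{I}^2$ (this sum is finite, since the exponential decay of $|z(\beta,I)|$ in $\size I$ from \eqref{eq:claim-bound-z-beta} dominates the at most quadratic growth of $\scalara{\sigma}{I}^2$). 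Taking $\bbE$ of $(\cZ_{\beta,\sigma+\phi}\,\cZ_{\beta,-\sigma+\phi})^{1/2}\ge\cZ_{\beta,\phi}\,\exp\bigl(-\tfrac12\sum_{I}|z(\beta,I)|\scalara{\sigma}{I}^2\bigr)$, recalling $Z_\beta(0)=\bbE[\cZ_{\beta,\phi}]$ and dividing, then \eqref{eq:ratio-Zsigma-Z0} gives \eqref{eq:lower-bound-fourier-trafo-sigma-I}.

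The computation is short once the right move is made; the only non-mechanical ingredient is the symmetrisation plus arithmetic--geometric mean step, which converts the additive shift $\phi\mapsto\sigma+\phi$ into the bounded multiplier $\cos\scalara{\sigma}{I}\in[-1,1]$. Everything after that is the scalar inequality for $1-\cos$, together with the absolute-convergence bookkeeping already supplied by the cluster-expansion bounds of Lemma~\ref{le:exp-convergence-zplus}.
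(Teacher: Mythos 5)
Your proof is correct, and it reaches the same bound as the paper's by the same essential ingredients (the cosine representation from Lemma~\ref{le:Z-beta-real}, a $\pm$-symmetrisation, and $1-\cos\theta\le\tfrac12\theta^2$), but with a cleaner organisation. The paper decomposes $\cos\scalara{\sigma+\phi}{I}$ via the addition formula, bounds the $\cos\scalara{\phi}{I}(\cos\scalara{\sigma}{I}-1)$ term at the cost of an $|z(\beta,I)|$, carries the leftover $\sin\scalara{\phi}{I}\sin\scalara{\sigma}{I}$ term along, and only at the end averages over $\Sigma\phi$ and invokes $\tfrac12(\ee^x+\ee^{-x})\ge 1$ to kill it. You instead symmetrise first — apply the arithmetic--geometric mean inequality directly to the pair $\cZ_{\beta,\pm\sigma+\phi}$ — so that the product-to-sum identity $\cos(u+v)+\cos(u-v)=2\cos u\cos v$ makes the $\sin$ terms vanish automatically, and all that remains is the scalar bound on $(1-\cos\scalara{\sigma}{I})\cos\scalara{\phi}{I}$. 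The two arguments are logically equivalent: the paper's $\cosh\ge1$ is precisely the same AM-GM in disguise, applied after the decomposition rather than before. Your version avoids the explicit bookkeeping of the $\sin\cdot\sin$ term and factors out $\cZ_{\beta,\phi}$ in one step, which is a small but genuine streamlining. One point worth making explicit (you gesture at it) is that the identity $\cZ_{\beta,\psi}=\cZ_{\beta,-\psi}$ is needed to pass from $\bbE[\cZ_{\beta,\sigma-\phi}]$ to $\bbE[\cZ_{\beta,-\sigma+\phi}]$, and it is exactly the evenness of cosine in the representation \eqref{eq:z-cos}; this is where the symmetry assumption $z(\beta,I)=z(\beta,-I)$ (i.e.\ $H_{\rm disl}(I)=H_{\rm disl}(-I)$ from Assumption~\ref{ass:H-disl}) enters.
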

\begin{proof}
By \eqref{eq:ratio-Zsigma-Z0}, \eqref{eq:def-Z-beta-sigma},
and Lemma~\ref{le:Z-beta-real}, we have 
\begin{align}
\label{eq:Fourier-trafo-as-ratio}
\EE_{P_\beta}\Big[\ee^{\ii\scalara{\sigma}{I}}\Big]
=&\frac{Z_\beta(\sigma)}{Z_\beta(0)}=
\frac{\bbE[\cZ_{\beta,\sigma+\phi}]}{\bbE[\cZ_{\beta,\phi}]},
\\
\cZ_{\beta,\sigma+\phi}
=&\exp\left(\sum_{I\in\cI}z(\beta,I)
\cos\scalara{\sigma+\phi}{I}\right) 
.
\label{eq:Z-sigma-xy}
\end{align}
Using 
\begin{align}
\cos\scalara{\sigma+\phi}{I}
=\cos\scalara{\phi}{I}  (\cos\scalara{\sigma}{I}-1)
+\cos\scalara{\phi}{I}  
-\sin\scalara{\phi}{I} \sin\scalara{\sigma}{I}
\end{align}
and the bound 
\begin{align}
\cos\scalara{\phi}{I}  (\cos\scalara{\sigma}{I}-1)
\ge& 
-|\cos\scalara{\sigma}{I}-1|
\ge
-\frac{1}{2}\scalara{\sigma}{I}^2, 
\end{align}
we obtain
\begin{align}
&\sum_{I\in\cI}z(\beta,I)
\cos\scalara{\sigma+\phi}{I}
\nonumber\\\ge& 
-\frac{1}{2}\sum_{I\in\cI}|z(\beta,I)|\scalara{\sigma}{I}^2
+\sum_{I\in\cI}z(\beta,I)
\left[\cos\scalara{\phi}{I}  
-\sin\scalara{\phi}{I} \sin\scalara{\sigma}{I} \right]
.
\label{eq:lower-bound-addition-thm}
\end{align}
We take the average over an auxiliary sign $\Sigma$ taking values $\pm1$.
We substitute $\phi$ by $\Sigma\phi$ in \eqref{eq:lower-bound-addition-thm}.
Then, using  the facts 
$\cos\scalara{\Sigma\phi}{I}=\cos\scalara{\phi}{I}$ and 
$\sin\scalara{\Sigma\phi}{I}=\Sigma\sin\scalara{\phi}{I}$, 
it follows
\begin{align}
&\frac12\sum_{\Sigma\in\{\pm 1\}}\cZ_{\beta,\sigma+\Sigma \phi}
=\frac12\sum_{\Sigma\in\{\pm 1\}}\exp\left(\sum_{I\in\cI}z(\beta,I)
\cos\scalara{\sigma+\Sigma\phi}{I}\right)
\nonumber\\\ge& 
\exp\left(-\frac{1}{2}\sum_{I\in\cI}|z(\beta,I)|\scalara{\sigma}{I}^2\right)
\cdot\frac12\sum_{\Sigma\in\{\pm 1\}}\exp\left(\sum_{I\in\cI}z(\beta,I)
\big[\cos\scalara{\phi}{I}  
-\sin\scalara{\Sigma \phi}{I} \sin\scalara{\sigma}{I} \big]\right)
\nonumber\\
=& 
\exp\left(-\frac{1}{2}\sum_{I\in\cI}|z(\beta,I)|\scalara{\sigma}{I}^2\right)
\cZ_{\beta,\phi}
\cdot\frac12\sum_{\Sigma\in\{\pm 1\}}
\exp\left(-\Sigma\sum_{I\in\cI}z(\beta,I)
\sin\scalara{\phi}{I} \sin\scalara{\sigma}{I}\right)
.
\end{align}
Note that $\sum_{I\in\cI}z(\beta,I)\sin\scalara{\phi}{I} \sin\scalara{\sigma}{I}$
converges absolutely for $\beta$ large enough by Lemma \ref{le:Z-beta-real}. 
Since $\frac12(\ee^x+\ee^{-x})\ge 1$ for all $x\in\R$, we obtain
\begin{align}
\frac12\sum_{\Sigma\in\{\pm 1\}}\cZ_{\beta,\sigma+\Sigma \phi}
\ge & 
\exp\left(-\frac{1}{2}\sum_{I\in\cI}|z(\beta,I)|\scalara{\sigma}{I}^2\right)
\cZ_{\beta,\phi}.
\end{align}
Using that $\phi$ is centered
Gaussian and $\cZ_{\beta,\phi}$ is bounded and positive, we conclude
\begin{align}
\bbE[\cZ_{\beta,\sigma+\phi}]
=&\bbE\left[\frac12\sum_{\Sigma\in\{\pm 1\}}\cZ_{\beta,\sigma+\Sigma \phi}\right]
\nonumber\\\ge& 
\exp\left(-\frac{1}{2}\sum_{I\in\cI}|z(\beta,I)|\scalara{\sigma}{I}^2\right)
\bbE[\cZ_{\beta,\phi}]
.
\label{eq:Z-sigma-xy2}
\end{align}
In view of 
\eqref{eq:Fourier-trafo-as-ratio} and $\bbE[\cZ_{\beta,\phi}]>0$,
 this proves the claim.
\end{proof}

\section{Proof of the main result} 
\label{sec:bounding-obs}
\subsection{Bounding the observable}
For $I\in\cI$ and $b=b(I)$ as in \eqref{eq:b-as-fn-of-I}, we take
the minimizer $w^*:\R^3\to\R^{3\times 3}$ defined in \eqref{eq:def-w-star}
and set $w^*(x,I):=w^*(x)$. 
For arbitrary $x,y\in\R^3$, we choose 
$\sigma(x,y)=(\sigma_{ij}(x,y))_{i,j\in[3]}\in(\R^E)^{[3]\times[3]}$ 
satisfying the equation
\begin{align}
\scalara{\sigma_{ij}(x,y)}{I}=w^*_{ij}(x,I)-w^*_{ij}(y,I)
\end{align}
for all $I\in\cI$. Such a choice is possible because $w^*(x,I)$ is a linear function 
of $I$. 

\begin{lemma}[Bounding the observable]
\label{le:w-star-bd}
There are a function $W:\Lambda\times\R^3\times\bigcup_{E\Subset\Lambda}
(\cI(E)\setminus\{0\})\to\R_{\ge 0}$ 
with 
\begin{align}
\label{eq:def-c-10}
\czehn:=\sup_{x\in\R^3}\sup_{E\Subset\Lambda}\sup_{I\in\cI(E)\setminus\{0\}}\sum_{o\in\Lambda}W(o,x,I)<\infty    
\end{align}
and $\beta_1>0$ such that for all $E\Subset\Lambda$, $I\in\cI(E)\setminus\{0\}$, $o\in\supp I$, $x\in\R^3$, $\beta\ge \beta_1$
and $i,j\in[3]$ we have
\begin{align}
\label{eq:bound-wij-star}
w^*_{ij}(x,I)^2\le W(o,x,I)\ee^{\beta\ceins\size I}.   
\end{align}
\end{lemma}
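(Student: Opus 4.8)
The plan is to prove a deterministic, uniform-in-$x$ pointwise bound of the form $|w^*_{ij}(x,I)|\le\kappa\,\size I$ for the minimizer $w^*$ of Lemma~\ref{le:minimizer-elastic-energy}, and then to exhibit $W$ essentially by hand. No statistical mechanics enters.

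\smallskip\noindent\textit{Step 1 (uniform pointwise bound on $w^*$).}
By \eqref{eq:def-w-star}, $w^*(\cdot,I)=w^b+d_0\psi^*$ with $w^b=-\Delta^{-1}d_1^*b(I)$ (see \eqref{eq:def-wb}) and $d_0\psi^*$ given by \eqref{eq:components-of-d0-psi-stern-componentwise}, i.e.\ obtained from $b(I)$ by composing $\Delta^{-2}$ with three partial derivatives, up to bounded (in $\lambda,\mu$) coefficients. On the Fourier side these operators are multiplication by quotients of homogeneous polynomials in $k$ of net degree $-1$ (for $w^b$: the factor $|k|^{-2}$ of $\Delta^{-1}$ against one factor of $k$; for $d_0\psi^*$: the factor $|k|^{-4}$ of $\Delta^{-2}$ against three factors of $k$). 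On the other hand, by \eqref{eq:def-J-jk-I}--\eqref{eq:b-as-fn-of-I} and the convolution theorem, $\widehat{b(I)}$ is, up to fixed constants, a $\pm$-combination over $e\in\supp I$ and over $l,k'\in[3]$ of the functions $(n_e)_l(I_e)_{k'}\,\widehat\varphi(k)\,\widehat{\lambda_e}(k)$, and $|\widehat{\lambda_e}(k)|\le(2\pi)^{-3/2}\lambda_e(\R^3)=(2\pi)^{-3/2}|x_e-y_e|$ is bounded uniformly in $e$ because the mesoscopic lattice has bounded edge lengths. Combining, there is $\kappa_2$ depending only on $\varphi,\lambda,\mu,\Lambda$ with
\begin{align*}
\bigl|\widehat{w^*_{ij}(\cdot,I)}(k)\bigr|\le\kappa_2\,\|I\|_1\,|k|^{-1}\,|\widehat\varphi(k)|,\qquad k\in\R^3\setminus\{0\}.
\end{align*}
Since $\widehat\varphi$ is rapidly decreasing and $|k|^{-1}$ is locally integrable in $\R^3$, the constant $C_\varphi:=\int_{\R^3}|k|^{-1}|\widehat\varphi(k)|\,dk$ is finite; in particular $\widehat{w^*_{ij}(\cdot,I)}\in L^1(\R^3)$, so Fourier inversion gives, uniformly in $x\in\R^3$, $i,j\in[3]$, $E\Subset\Lambda$ and $I\in\cI(E)\setminus\{0\}$,
\begin{align*}
|w^*_{ij}(x,I)|\le(2\pi)^{-3/2}\int_{\R^3}\bigl|\widehat{w^*_{ij}(\cdot,I)}(k)\bigr|\,dk\le\kappa\,\|I\|_1\le\kappa\,\size I,
\end{align*}
with $\kappa:=(2\pi)^{-3/2}\kappa_2C_\varphi$ (depending only on the fixed model ingredients) and the last step being \eqref{eq:eq-1-norm-size}. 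Equivalently, one can argue in physical space: $w^*_{ij}(\cdot,I)$ is a bounded-coefficient combination of convolutions $N\ast\varphi\ast\lambda_e$ with $N$ a first derivative of the Newtonian potential or a third derivative of the biharmonic Green function, each homogeneous of degree $-2$ and hence locally integrable, so $\|N\ast\varphi\ast\lambda_e\|_\infty\le\kappa_2'\,|x_e-y_e|$ uniformly, and one sums over $e\in\supp I$.

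\smallskip\noindent\textit{Step 2 (construction of $W$).}
Set $\beta_1:=3/(\eta\,\ceins)>0$ and, for $o\in\Lambda$, $x\in\R^3$, $I\in\cI(E)\setminus\{0\}$,
\begin{align*}
W(o,x,I):=\kappa^2\,(\size I)^2\,\ee^{-\beta_1\ceins\size I}\,\ind{\{o\in\supp I\}},
\end{align*}
with $\kappa$ as in Step~1; this is $\R_{\ge0}$-valued and independent of $x$ and of $\beta$. If $o\in\supp I$ and $\beta\ge\beta_1$ then, since $\size I>0$,
\begin{align*}
W(o,x,I)\,\ee^{\beta\ceins\size I}=\kappa^2(\size I)^2\,\ee^{(\beta-\beta_1)\ceins\size I}\ge\kappa^2(\size I)^2\ge w^*_{ij}(x,I)^2
\end{align*}
by Step~1, which is \eqref{eq:bound-wij-star}. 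For \eqref{eq:def-c-10}, inequalities \eqref{eq:supp-vs-1-norm} and \eqref{eq:eq-1-norm-size} give $|\supp I|\le\eta^{-1}\|I\|_1\le\eta^{-1}\size I$, while $I\neq0$ forces $\size I\ge\|I\|_1\ge\eta$, hence
\begin{align*}
\sum_{o\in\Lambda}W(o,x,I)=|\supp I|\,\kappa^2(\size I)^2\,\ee^{-\beta_1\ceins\size I}\le\eta^{-1}\kappa^2\,(\size I)^3\,\ee^{-\beta_1\ceins\size I}\le\eta^{-1}\kappa^2\sup_{t\ge\eta}t^3\ee^{-\beta_1\ceins t}.
\end{align*}
Because $\beta_1\ceins=3/\eta$, the function $t\mapsto t^3\ee^{-\beta_1\ceins t}$ is non-increasing on $[\eta,\infty)$, so this supremum equals $\eta^3\ee^{-\beta_1\ceins\eta}\le\eta^3$; therefore $\sum_{o\in\Lambda}W(o,x,I)\le\kappa^2\eta^2=:\czehn<\infty$, uniformly in $x$, $E$, $I$, which is \eqref{eq:def-c-10}. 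This proves the lemma.

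\smallskip\noindent\textit{Where the work is.}
The only non-routine ingredient is the uniform pointwise bound of Step~1: once one recognises that the kernel sending a single edge current $I_e$ to $w^*$ is, after smoothing by $\varphi$, a homogeneous degree $-1$ Fourier multiplier applied to the bounded, rapidly decreasing function $\widehat\varphi\,\widehat{\lambda_e}$ (so that the $L^1$ Fourier norm, hence the sup norm of the field, is controlled by $\|I\|_1$ uniformly in $x$ and in the lattice data), the rest is bookkeeping with $\|I\|_1\le\size I$ and $\eta|\supp I|\le\|I\|_1$. Note that $\beta_1$ and $\czehn$ depend only on $\Gamma$ (through $\eta,\ceins$), on $\Lambda$ and on $\varphi$; if a later estimate required it, one could keep in $W(o,x,I)$ an extra factor decaying in the distance from $x$ to the edge $o$, but the crude choice above already yields \eqref{eq:bound-wij-star}--\eqref{eq:def-c-10}.
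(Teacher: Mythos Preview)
Your proof is correct and in fact considerably simpler than the paper's. Both arguments share the uniform pointwise bound $|w^*_{ij}(x,I)|\le\kappa\,\size I$ (your Step~1 is precisely the paper's Case~2, established there via the explicit kernels in Lemma~\ref{le:uniform-bounds}; your Fourier-side derivation is an equally valid and slightly cleaner route to the same estimate). The difference lies in how this bound is converted into a function $W$ with the summability property~\eqref{eq:def-c-10}.

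The paper supplements the uniform bound with a far-field estimate $|w^*_{ij}(x,I)|\le C(\size I)^2/|v(o)-x|^3$ obtained from a dipole expansion (Lemmas~\ref{le:simplified-dipole-expansion} and~\ref{le:dipole-expansion-wzwei}), and then takes
\[
W(o,x,I)=\ind{\{|v(o)-x|\ge 2R(I,o)\}}|v(o)-x|^{-6}+\ind{\{|v(o)-x|<2R(I,o)\}}(\size I)^{-3},
\]
so that the sum over $o$ converges thanks to genuine spatial decay in the far region and a counting argument in the near region. You instead observe that the exponential weight $\ee^{\beta\ceins\size I}$ in~\eqref{eq:bound-wij-star} can absorb any power of $\size I$, so the crude choice $W(o,x,I)=\kappa^2(\size I)^2\ee^{-\beta_1\ceins\size I}\ind{\{o\in\supp I\}}$ already works: summability follows from $|\supp I|\le\eta^{-1}\size I$ and the trivial bound $t^3\ee^{-3t/\eta}\le\eta^3$ for $t\ge\eta$. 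This bypasses the entire dipole-expansion apparatus of Appendix~\ref{sec:bounds-dipole-expansion}. The paper's construction yields a $W$ with explicit decay in $|v(o)-x|$, which is conceptually closer to the physical picture and would be useful if finer spatial information were needed downstream, but for the lemma as stated (and for its single application in the proof of Theorem~\ref{thm:main-result}) your shortcut is entirely adequate.
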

\begin{proof}
Take $E\Subset\Lambda$, $I\in\cI(E)\setminus\{0\}$, $o\in\supp I$, $x\in\R^3$, and $i,j\in[3]$. 
We choose a vertex $v(o)\in o$. We set 
\begin{align}
M_1(I,o):= & \max_{i,j\in[3]}\sum_{l=1}^3\int_{\R^3}|u-v(o)||b_{lij}(u)|\, du, \\
R(I,o):= & \max\{|x|:x\in\supp\varphi\}+\max\{|v'-v(o)|:v'\in e\text{ for some }
e\in\supp I\}. 
\end{align}
Since $\size I$ is bounded away from $0$, there is a constant $\cdreizehn>0$
such that $R(I,o)\le\cdreizehn\size I$. 
By the definition \eqref{eq:b-as-fn-of-I} of $b(I)$, one has the bound 
$\|b_{lij}(I)\|_1\le\cvierzehn\|I\|_1\le \cvierzehn\size I$
for all its components, with some constant $\cvierzehn>0$. Hence, we obtain 
\begin{align}
M_1(I,o)\le R(I,o)\max_{i,j\in[3]}\sum_{l=1}^3\|b_{lij}(I)\|_1
\le 3\cdreizehn\cvierzehn (\size I)^2. 
\end{align}
Because $b$ is compactly supported and divergence-free in the sense of equation 
\eqref{eq:div-tilde-b-zero},
\begin{align}
Q(I):=\int_{\R^3}b(I)(u)\, du=0
\end{align}
by the fundamental theorem of calculus. 
Recall the representation $w^*=w^b+d_0\psi^*$ with $w^b$, $d_0\psi^*$ as in 
\eqref{eq:components-of-w-b-componentwise} and \eqref{eq:components-of-d0-psi-stern-componentwise}.
We now establish the bound~\eqref{eq:bound-wij-star} in two steps,
first for $x$ far from $o$, then close to~$o$.
A key estimate is provided by bounds on integral kernels
proven in Appendix~\ref{sec:integral-kernels}.

\textit{Case 1:} First we consider the case $|v(o)-x|\ge 2R(I,o)$. 
It follows from \eqref{eq:components-of-w-b-componentwise} and the second inequality 
in \eqref{eq:bound-partial-r1} from Lemma~\ref{le:simplified-dipole-expansion} 
(see Appendix \ref{sec:integral-kernels})
in the case that $v(o)=0$ is the origin that
\begin{align}
\label{eq:intermediate-bound-wb}
|w^b_{ij}(x)|\le\sum_{l=1}^3|\partial_l\Delta^{-1}b_{lij}(I)(x)|
\le \frac{24M_1(I,o)}{\pi}\frac{1}{|v(o)-x|^3}
\le\frac{\cfuenfzehn(\size I)^2}{|v(o)-x|^3}
\end{align}
with the constant $\cfuenfzehn=72\cdreizehn\cvierzehn/\pi$. 

The stability condition \eqref{eq:def-F} implies $|\lambda|/|2\mu+\lambda|\le 1$.
In the same way as in \eqref{eq:intermediate-bound-wb}, \eqref{eq:components-of-d0-psi-stern-componentwise} and 
the second 
inequality in \eqref{eq:bound-partial-r1-strich} from Lemma 
\ref{le:dipole-expansion-wzwei} (see again Appendix \ref{sec:integral-kernels})
in the case that $v(o)=0$ is the origin give
\begin{align}
|d_0\psi^*_{ij}(x)|&\le\sum_{k,l=1}^3\bigg[|\partial_i\Delta^{-2}\partial_k\partial_lb_{ljk}(I)(x)|
+|\partial_i\Delta^{-2}\partial_j\partial_lb_{lkk}(I)(x)|\bigg]\nonumber\\
&\le \frac{2\cdot 9\cdot 36 M_1(I,o)}{\pi}\frac{1}{|v(o)-x|^3}
\le\frac{\csechzehn(\size I)^2}{|v(o)-x|^3}
\end{align}
with the constant $\csechzehn=1944\cdreizehn\cvierzehn/\pi$. 
It follows still in the case $v(o)=0$
\begin{align}
\label{eq:estimates-w1-w2}
|w^*_{ij}(x,I)|\le |w^b_{ij}(x)|+|d_0\psi^*_{ij}(x)|
\le \frac{(\cfuenfzehn+\csechzehn)(\size I)^2}{|v(o)-x|^3}. 
\end{align}
The next step involves translation-invariance: Shifting both $x$ and $I$ by 
a mesoscopic lattice vector $v\in V_\Lambda$ does not change $w^*_{ij}(x,I)$
because $(x,I)\mapsto b(I)(x)$ has the same translation-invariance. 
Because the inequality \eqref{eq:estimates-w1-w2} is written in a 
translation-invariant form, it holds 
also if we drop the assumption $v(o)=0$. This yields
\begin{align}
w^*_{ij}(x,I)^2\le\frac{(\cfuenfzehn+\csechzehn)^2(\size I)^4}{|v(o)-x|^6}
\le |v(o)-x|^{-6}\ee^{\beta\ceins\size I}
\end{align}
for all $\beta\ge\beta_1$ for sufficiently large $\beta_1$, neither depending 
on $o$, $x$, nor $I$.

\textit{Case 2:} Next we consider the case $|v(o)-x|<2R(I,o)$. 
We recall the definition of $J_{jk}(I)$ from \eqref{eq:def-J-jk-I}. 
We now use the symbol $\|{\cdot}\|_1$ in two different ways. On the one hand, $\|I\|_1=\sum_{e\in E}|I_e|$ for $I$. On the other hand, $\|J_{jk}(I)\|_1$ denotes the total unsigned
mass of the signed measure $J_{jk}(I)$ given by the following definition:
For any signed measure $\tilde J$ on $\R^3$ with Hahn decomposition 
$\tilde J=\tilde J_+-\tilde J_-$, we define
$\|\tilde J\|_1:=\tilde J_+(\R^3)+\tilde J_-(\R^3)$. 
With this interpretation, we have 
\begin{align}
\|J_{jk}(I)\|_1\le\sup_{e\in\Lambda}\lambda_e(\R^3)\|I\|_1. 
\label{eq:total-mass-J}
\end{align}
Combining this with \eqref{eq:bound-partial-delta-phiJ} and 
\eqref{eq:bound-partial-dreifach-delta-phiJ} from Lemma \ref{le:uniform-bounds} 
in Appendix \ref{sec:appendix-kernels-diag} yields the bound 
\begin{align}
\label{eq:bound-w-star-diag}
|w^*_{ij}(x,I)|\le\cneunzehn\|I\|_1\le\cneunzehn\size I
\end{align}
for all $x\in\R^3$ and all $I\in\cI\setminus\{0\}$ with a constant
$\cneunzehn>0$. Note that 
\begin{align}
\label{eq:bound-on-number-of-points}
|\{o\in\Lambda:|v(o)-x|<2R(I,o)\}|
\le\czwanzig(\size I)^3
\end{align}
with a constant $\czwanzig>0$ depending on the lattice spacing in $\Lambda$. 
Squaring \eqref{eq:bound-w-star-diag}, we obtain 
\begin{align}
|w^*_{ij}(x,I)|^2\le(\size I)^{-3}\cneunzehn^2(\size I)^5\le (\size I)^{-3}
\ee^{\beta\ceins\size I}
\end{align}
again for all $\beta\ge\beta_1$ for sufficiently large $\beta_1$, neither 
depending on $o$, $x$, nor $I$.

Combining the two cases, the claim \eqref{eq:bound-wij-star} holds for 
\begin{align}
W(o,x,I):=1_{\{|v(o)-x|\ge 2R(I,o)\}}|v(o)-x|^{-6}
+1_{\{|v(o)-x|<2R(I,o)\}}(\size I)^{-3}.    
\end{align}
To bound $\sum_{o:|v(o)-x|\ge 2R(I,o)}|v(o)-x|^{-6}$, observe that $R(I,o)$ is 
bounded away from 0 uniformly in $I$ and $o$, and that $|v(o)-x|^{-6}$ is 
summable away from $x$ in three dimensions. 
We use \eqref{eq:bound-on-number-of-points} to bound 
$\sum_{o:|v(o)-x|<2R(I,o)}(\size I)^{-3}$.  
We conclude
\begin{align}
\sum_{o\in\Lambda}W(o,x,I)\le\czehn
\end{align}
uniformly in $x$, $E\Subset\Lambda$, and $I\in\cI(E)$,
with a constant $\czehn$ depending on $\Lambda$. 
\end{proof}

\subsection{Identifying long-range order}
We finally prove now our main result. 
\begin{proof}[Proof of Theorem~\ref{thm:main-result}]
Applying Lemma~\ref{le:fourier-observable} yields for $t\in\R$, 
$E\Subset\Lambda$, $x,y\in\R^3$, and $i,j\in[3]$
\begin{align}
\EE_{P_{\beta,E}}\Big[\ee^{\ii t\scalara{\sigma_{ij}(x,y)}{I}}\Big]\ge
\exp\left(-\frac{t^2}{2}\sum_{I\in\cI(E)}|z(\beta,I)|\scalara{\sigma_{ij}(x,y)}{I}^2\right)
.
\end{align}
We may drop the summand indexed by $I=0$ because $\scalara{\sigma_{ij}(x,y)}{0}=0$. 
Inserting \eqref{eq:bound-wij-star} and employing Lemma 
\ref{le:exp-convergence-zplus} in the last line in \eqref{eq:hilfsschritt} below, we obtain for sufficiently large $\beta$ that
\begin{align}
&\frac12\sum_{I\in\cI(E)\setminus\{0\}}|z(\beta,I)|\scalara{\sigma_{ij}(x,y)}{I}^2
\le \sum_{I\in\cI(E)\setminus\{0\}}|z(\beta,I)|(w^*_{ij}(x,I)^2+w^*_{ij}(y,I)^2)   
\nonumber\\
\le& \sum_{o\in E}\sum_{\substack{I\in\cI(E):\\o\in\supp I}}|z(\beta,I)|(w^*_{ij}(x,I)^2+w^*_{ij}(y,I)^2) \nonumber\\
\le&\sup_{\tilde I\in\cI(E)\setminus\{0\}}\sum_{o\in E}(W(o,x,\tilde I)+W(o,y,\tilde I))
\sum_{\substack{I\in\cI(E):\\o\in\supp I}}\ee^{\beta\ceins\size I}|z(\beta,I)|
\nonumber\\
\le& 2\czehn
\sup_{E\Subset\Lambda}
\sup_{o\in E}\sum_{\substack{I\in\cI(E):\\o\in\supp I}}\ee^{\beta\ceins\size I}|z(\beta,I)|
\le 2\czehn \, \ee^{-\beta \cvierundzwanzig}\le 
\ee^{-\beta \celf},
\label{eq:hilfsschritt}
\end{align}
with a constant $\celf=\celf(\czehn,\cvierundzwanzig)>0$, where $\czehn$ was defined
in \eqref{eq:def-c-10}. Mind that $\celf$ does not depend on $x,y,i,j,E,\beta$.
This proves the first claim. 

By Theorem 3.3.9 in \cite{durrett}, for $\beta$ large enough, 
the variance of $\scalara{\sigma_{ij}(x,y)}{I}$ exists and fulfills 
\begin{align}
\label{eq:bound-var-proof}
& \var_{P_{\beta,E}}(\scalara{\sigma_{ij}(x,y)}{I})  
\le \EE_{P_{\beta,E}}[\scalara{\sigma_{ij}(x,y)}{I}^2]\nonumber\\
\le &-\limsup_{t\downarrow 0}t^{-2}\left(\EE_{P_{\beta,E}}[\ee^{\ii t\scalara{\sigma_{ij}(x,y)}{I}}]-2+
\EE_{P_{\beta,E}}[\ee^{-\ii t\scalara{\sigma_{ij}(x,y)}{I}}]\right)
\le \ee^{-\beta \celf},
\end{align}
where the last inequality is a consequence of 
the lower bound \eqref{eq:lower-bound-fourier-trafo-main}. 
\end{proof}

We remark that the two reflection symmetries $H_{\rm el}(-w)=H_{\rm el}(w)$ and 
$H_{\rm disl}(-I)=H_{\rm disl}(I)$ imply that $w^*(x,-I)=-w^*(x,I)$ and $w^*(x,I)$ 
are equal in distribution with respect to $P_{\beta,E}$, jointly in $x\in\R^3$. 
In particular, the first inequality in \eqref{eq:bound-var-proof}
is actually an equality.

\begin{appendix}
\section{Appendix}
\subsection{Elasticity theory}
\label{app:elasticity}

\subsubsection{Description of an elastically deformed solid in continuum approximation}
\label{sec:elastic2}

From Section~\ref{sec:elastic}, we recall that
the elastic deformation energy $E_{\mathrm{el}}(f)$ 
is modeled to be an integral over a smooth elastic energy density 
$\rho_{\mathrm{el}}:\R^{3\times 3}\to\R$: 
\begin{align}
 E_{\mathrm{el}}(f)=\int_{\R^3}\rho_{\mathrm{el}}(\nabla f(x)) \, dx.
\end{align}
We assume furthermore:
\begin{itemize}
\item $\rho_{\mathrm{el}}$ takes its minimum value $0$ at the identity matrix $\Id$.
This means that the non-deformed solid has minimal energy. 
\item The elastic energy density is insensitive to 
rotations of the solid:
\begin{align}
\label{eq:assumption-rotational-inv}\tag{rot inv}
  \rho_{\mathrm{el}}(RM)=\rho_{\mathrm{el}}(M), \quad (R\in\operatorname{SO}(3), 
M\in\operatorname{GL}_+(3)),
\end{align}
where $\operatorname{GL}_+(3)=\{M\in\R^{3\times 3}:\det M>0\}$.
In other words, deforming the solid and then rotating it costs the same 
elastic energy as only deforming it with the same deformation. 
\end{itemize}
Note that reflections $R\in\operatorname{O}(3)\setminus\operatorname{SO}(3)$ or 
singular or orientation reversing linearized deformations $M\in\R^{3\times 3}$ with 
$\det M\le 0$ do not make sense physically in this context.

For $M,N\in \operatorname{GL}_+(3)$ it is equivalent that $M^tM=N^tN$ 
and that there exists $R\in\operatorname{SO}(3)$ such that $N=RM$.
As a consequence, $\rho_{\mathrm{el}}(M)$ is 
a function of $M^tM$. We set $\tilde\rho_{\mathrm{el}}(M^tM):=\rho_{\mathrm{el}}(M)$. 

Note that the assumption \eqref{eq:assumption-rotational-inv} of rotational invariance
does \textit{not} imply isotropy of the solid, which is defined by 
\begin{align}
\label{eq:assumption-isotropy}\tag{isotropy}
  \rho_{\mathrm{el}}(MR)=\rho_{\mathrm{el}}(M), \quad (R\in\operatorname{SO}(3), 
M\in\operatorname{GL}_+(3)). 
\end{align}
Thus, anisotropy means that first rotating the solid and then deforming it with
a given deformation might cost a different elastic energy than deforming it with 
the same deformation without rotating it first. 
Although the isotropy assumption is an oversimplification for any real monocrystal,
we assume it to keep the presentation simple.

\subsubsection{Linearization}
\label{sec:linearization}

We now consider only small perturbations $f=\id+\epsilon u:\R^3\to\R^3$ 
of the identity map as deformation maps.
For $\nabla f=\Id+\epsilon \nabla u$, we obtain
\begin{align}
(\nabla f)^t\nabla f=\Id+\epsilon (\nabla u+(\nabla u)^t)
+\epsilon^2(\nabla u)^t\nabla u.
\end{align}
Substituting this into \eqref{e:F-first} gives 
\begin{align}
\tilde\rho_{\mathrm{el}}((\nabla f)^t\nabla f)
=
\epsilon^2 F(\nabla u+ (\nabla u)^t)+O(\epsilon^3),\quad (\epsilon\to 0).
\end{align}
If $\R\ni\epsilon\mapsto R_\epsilon\in\operatorname{SO}(3)$ is a path of 
rotations with $R_0=\Id$, then $\frac{d}{d\epsilon}R_\epsilon|_{\epsilon=0}$ 
is antisymmetric. Hence, the Taylor-approximated energy density 
$\epsilon^2 F(\nabla u+ (\nabla u)^t)$ is not influenced by linearized rotations. 

If the assumption \eqref{eq:assumption-isotropy} holds, then 
\begin{align}
 \tilde\rho_{\mathrm{el}}(A)=\tilde\rho_{\mathrm{el}}(R^tAR)
\end{align}
holds for all positive definite matrices $A=A^t$ and $R\in\operatorname{SO}(3)$. 
Taylor-expanded this means 
\begin{align}
F(U)=F(R^tUR) 
\end{align}
for all symmetric matrices $U=U^t$ and $R\in\operatorname{SO}(3)$. 
Thus, $F(U)$ depends only on the list $a,b,c$ of eigenvalues of $U$ 
(with multiplicities). 
For diagonal matrices $U=\diag(a,b,c)$ the only quadratic forms which are 
symmetric in $a,b,c$ are linear combinations of $(\tr U)^2=(a+b+c)^2$ and 
$|U|^2=a^2+b^2+c^2$. Thus, under an isotropy assumption, we have
\begin{align}
\label{eq:def-F-bis}
  F(U)=\frac{\lambda}{2}(\tr U)^2+\mu|U|^2 
=(a,b,c)\left(\mu\Id+\frac{\lambda}{2}ee^t\right)
  \begin{pmatrix}
 a\\ b\\ c
  \end{pmatrix}
\end{align}
with real constants $\lambda$ and $\mu$ and $e^t=(1,1,1)$. 
The matrix $\mu\Id+\frac{\lambda}{2}ee^t$ has the double eigenvalue
$\mu$ with eigenspace $e^\perp$ and a single eigenvalue
$\mu+3\lambda/2$ with eigenspace $\R e$. Hence the quadratic form $F$ is positive definite if and only if 
$\mu$ and $\lambda$ satisfy the conditions given in \eqref{eq:def-F}.
Summarizing, we have the model for the linearized elastic deformation energy
  given in \eqref{eq:def_E_el}--\eqref{eq:def_H_el}.

\subsection{From Kirchhoff's node rule to continuum sourceless currents}
\label{sec:appendix-kirchhoff}
We show that Kirchhoff's node rule for the discrete current $I$ implies absence of sources for its smoothed variant $\tilde b(I)$. 
For $e\in E$ from $x\in V$ to $y\in V$ in its counting direction,
we write $x=v_-(e)$ and $y=v_+(e)$.
We rewrite \eqref{eq:def-b-I} as 
\begin{align}
&\tilde{b}_{jk}(I)(x)=\int_{\R^3}\varphi(x-y)J_{jk}(I)(dy)
\nonumber\\
=&\sum_{e\in E} (n_e)_j(I_e)_k |v_+(e)-v_-(e)|
\int_0^1\varphi\big(x-v_-(e)-t(v_+(e)-v_-(e))\big)\,dt
\nonumber\\
=&\sum_{e\in E} (I_e)_k (v_+(e)-v_-(e))_j
\int_0^1\varphi\big(x-v_-(e)-t(v_+(e)-v_-(e))\big)\,dt,
\\&(j,k\in[3],\; I\in(\R^3)^E \text{ with }\eqref{eq:Kirchhoff},
\; x\in\R^3).\nonumber
\end{align}
As a consequence of Kirchhoff's rule \eqref{eq:Kirchhoff},
$\tilde b(I)$ is indeed divergence-free:
\begin{align}
&(\operatorname{div}\tilde{b}(I))_k(x)
\nonumber\\=&\sum_{j=1}^3\sum_{e\in E} (I_e)_k (v_+(e)-v_-(e))_j
\int_0^1\partial_j\varphi\big(x-v_-(e)-t(v_+(e)-v_-(e))\big)\,dt
\nonumber\\
=&
-\sum_{e\in E} (I_e)_k 
\int_0^1\frac{d}{dt}\varphi\big(x-v_-(e)-t(v_+(e)-v_-(e))\big)\,dt
\nonumber\\
=&
-\sum_{e\in E} (I_e)_k
\big[\varphi(x-v_+(e))-
\varphi(x-v_-(e))\big]
\nonumber\\
=&
-\sum_{e\in E} \sum_{v\in V}s_{ve}(I_e)_k\varphi(x-v)
=0.
\label{eq:div-tilde-b-zero-0}
\end{align}

\subsection{Integral kernels}
\label{sec:integral-kernels}
\subsubsection{Bounds for the dipole expansion}
\label{sec:bounds-dipole-expansion}
In this appendix, we derive a simplified version of the dipole expansion for the Coulomb potential, which we need as ingredient to identify long-distance bounds for the observable $w^*$ in the proof of Lemma~\ref{le:w-star-bd}. 

Let $\rho\in C^\infty_c(\R^3,\R)$. We define its total charge
\begin{align}
Q:=\int_{\R^3}\rho(x)\,dx.
\end{align}
We take a radius $R>0$ such that 
\begin{align}
R\ge\sup\{|x|:\;x\in\R^3,\;\rho(x)\neq 0\}
\end{align}
and the first unsigned moment
\begin{align}
M_1:=\int_{\R^3}|y||\rho(y)|\,dy.
\end{align}
The inverse Laplacian $-\Delta^{-1}$ is described by convolution with the
Coulomb potential
\begin{align}
G(y)=\frac{1}{4\pi|y|}
.
\end{align}

\begin{lemma}[Simplified dipole expansion]
\label{le:simplified-dipole-expansion}
For all $x\in\R^3$ with $|x|\ge 2R$ and $i\in[3]$ we know that 
\begin{align}
\label{eq:claim-dipole-simplified}
-\Delta^{-1}\rho(x)&=\frac{1}{4\pi}\frac{Q}{|x|}+r_1(x),\\
-\partial_i\Delta^{-1}\rho(x)&=-\frac{Q}{4\pi}\frac{x_i}{|x|^3}
+\partial_ir_1(x)
\end{align}
with the error bounds
\begin{align}
\label{eq:bound-partial-r1}
|r_1(x)|\le\frac{M_1}{\pi}\frac{1}{|x|^2},
\qquad
|\partial_i r_1(x)|\le\frac{8M_1}{\pi}\frac{1}{|x|^3}.
\end{align}
\end{lemma}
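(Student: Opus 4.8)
The plan is to work from the explicit representation of $-\Delta^{-1}$ on compactly supported functions as convolution with the fundamental solution of $-\Delta$ in three dimensions, namely $-\Delta^{-1}\rho = G\ast\rho$ with the Coulomb kernel $G(y)=\tfrac{1}{4\pi|y|}$ already fixed in the statement. Since $Q=\int_{\R^3}\rho(y)\,dy$, I would simply \emph{define} the remainder by subtracting off the monopole contribution,
\begin{align}
r_1(x):=-\Delta^{-1}\rho(x)-\frac{1}{4\pi}\frac{Q}{|x|}
=\frac{1}{4\pi}\int_{\R^3}\left(\frac{1}{|x-y|}-\frac{1}{|x|}\right)\rho(y)\,dy ,
\end{align}
so that the first identity in \eqref{eq:claim-dipole-simplified} holds by construction. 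For the second identity I would differentiate under the integral sign; this is legitimate because $\rho$ is smooth and supported in the ball of radius $R$, and for $|x|\ge 2R$ the point $x$ is bounded away from $\supp\rho$, so $y\mapsto 1/|x-y|$ is smooth near $\supp\rho$ uniformly in a neighbourhood of $x$. Using $\partial_{x_i}\frac{1}{|x-y|}=-\frac{(x-y)_i}{|x-y|^3}$ and $\partial_i\frac{1}{|x|}=-\frac{x_i}{|x|^3}$ then yields exactly the claimed expansion for $-\partial_i\Delta^{-1}\rho(x)$, with error term $\partial_i r_1(x)$.

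It remains to prove the two bounds in \eqref{eq:bound-partial-r1}, which reduce to elementary pointwise estimates on the integrands. On $\supp\rho$ one has $|y|\le R\le|x|/2$, hence $|x-y|\ge|x|-|y|\ge|x|/2$. For $r_1$ I would estimate
\begin{align}
\left|\frac{1}{|x-y|}-\frac{1}{|x|}\right|
=\frac{\bigl||x|-|x-y|\bigr|}{|x|\,|x-y|}
\le\frac{|y|}{|x|\,|x-y|}
\le\frac{2|y|}{|x|^2},
\end{align}
and integrate against $|\rho(y)|$ to get $|r_1(x)|\le\frac{1}{4\pi}\cdot\frac{2M_1}{|x|^2}\le\frac{M_1}{\pi|x|^2}$. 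For $\partial_i r_1$, set $h(z):=z_i/|z|^3$; the mean value theorem gives $|h(x-y)-h(x)|\le|y|\sup_{t\in[0,1]}|\nabla h(x-ty)|$, and a short computation shows $|\nabla h(z)|^2=|z|^{-6}\bigl(1+3z_i^2/|z|^2\bigr)\le 4|z|^{-6}$, so $|\nabla h(z)|\le 2|z|^{-3}$. Since $|x-ty|\ge|x|/2$ for $t\in[0,1]$, this yields $|h(x-y)-h(x)|\le 16|y|/|x|^3$, and integrating against $|\rho(y)|$ gives $|\partial_i r_1(x)|\le\frac{1}{4\pi}\cdot\frac{16M_1}{|x|^3}=\frac{4M_1}{\pi|x|^3}\le\frac{8M_1}{\pi|x|^3}$.

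I do not expect any real obstacle: the only points needing care are the justification of differentiation under the integral sign (covered by the separation $|x|\ge 2R$ between $x$ and $\supp\rho$) and keeping each constant on the correct side of the claimed inequality — and in fact both estimates above come out with room to spare, so the stated bounds follow comfortably.
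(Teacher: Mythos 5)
Your proof is correct and follows essentially the same approach as the paper's: both expand $1/|x-y|$ around $y=0$ (the paper via the fundamental theorem of calculus applied to $t\mapsto 1/|x-ty|$, you via a direct algebraic subtraction for $r_1$ and the mean value theorem for $\partial_i r_1$), and both rely on the bound $|x-ty|\geq|x|/2$ on the segment. Your pointwise estimates are in fact a factor of $2$ sharper than the paper's at each step, so the stated bounds follow with room to spare.
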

\begin{proof}
Let $x\in\R^3$ with $|x|\ge 2R$ and $y\in\operatorname{supp}\rho$.
Then for all $t\in[0,1]$, we have
\begin{align}
|x-ty|\ge |x|-R\ge \frac{|x|}{2}\ge R.
\end{align}
We apply 
\begin{align}
\label{eq:taylor1}
f_{x,y}(1)=f_{x,y}(0)+\int_0^1 f'_{x,y}(t)\,dt
\end{align}
to
\begin{align}
f_{x,y}(t):=\frac{1}{|x-ty|}.
\end{align}
In this proof, $\partial_i=\partial/\partial x_i$ always refers to 
the variable $x_i$.
Using $\partial_i|x|^\alpha=\alpha x_i|x|^{\alpha-2}$,
we calculate:
\begin{align}
\label{eq:fxy'}
f_{x,y}'(t)&=\frac{\scalara{x-ty}{y}}{|x-ty|^3},
\qquad\partial_i f_{x,y}(t)=-\frac{x_i-ty_i}{|x-ty|^3},
\\
\label{eq:partial-i-fxy'}
\partial_i f_{x,y}'(t)
&=
\frac{y_i}{|x-ty|^3}
-3\frac{\scalara{x-ty}{y}(x_i-ty_i)}{|x-ty|^5}.
\end{align}
Consequently, we can bound the integrand
in \eqref{eq:taylor1} and its $\partial_i$-derivative as follows: 
\begin{align}
|f_{x,y}'(t)|&=
\frac{|\scalara{x-ty}{y}|}{|x-ty|^3}\le \frac{|y|}{|x-ty|^2}\le
\frac{4|y|}{|x|^2},
\\
|\partial_i f_{x,y}'(t)|&\le \frac{4|y|}{|x-ty|^3}\le \frac{32|y|}{|x|^3}.
\label{eq:partial-i-f-xy-strich}
\end{align}
We obtain
\begin{align}
\label{eq:taylor-integrated}
-\Delta^{-1}\rho(x)
=
\frac{1}{4\pi}\int_{\R^3} \frac{1}{|x-y|}\rho(y)\,dy=
\frac{1}{4\pi}\int_{\R^3} \left(f_{x,y}(0)+\int_0^1 f'_{x,y}(t)\,dt\right)
\rho(y)\,dy
\end{align}
and hence the claim~\eqref{eq:claim-dipole-simplified} holds with the error term
\begin{align}
r_1(x)=\frac{1}{4\pi}\int_{\R^3}\int_0^1f_{x,y}'(t)\,dt\,
\rho(y)\,dy.
\end{align}
It is bounded as follows:
\begin{align}
|r_1(x)|\le\frac{1}{4\pi}\int_{\R^3}\int_0^1\frac{4|y|}{|x|^2}
\,dt\,|\rho(y)|\,dy
=\frac{M_1}{\pi}\frac{1}{|x|^2}.
\end{align}
This shows the first claimed bound in~\eqref{eq:bound-partial-r1}.

Substituting \eqref{eq:partial-i-f-xy-strich} in  the $\partial_i$-derivative
of equation~\eqref{eq:taylor-integrated}, i.e., in 
\begin{align}
&-\partial_i\Delta^{-1}\rho(x)
=
\frac{1}{4\pi}\int_{\R^3} \left(\partial_i f_{x,y}(0)+
\int_0^1 \partial_if'_{x,y}(t)\,dt\right)
\rho(y)\,dy,
\end{align}
the second claimed bound in~\eqref{eq:bound-partial-r1} follows, too.

\end{proof}

\subsubsection{Properties of $\partial_i\partial_j\Delta^{-2}$}
Next we describe $\partial_i\partial_j\Delta^{-2}$ explicitly by an integral kernel. 
We derive it from its Fourier transform, cf. \eqref{eq:def-Fouriertrafo}. 

\begin{lemma}[Integral kernel of $\partial_i\partial_j\Delta^{-2}$]
For any Schwartz functions $f,g:\R^3\to\C$, we have 
\begin{align}
\label{eq:claim-partial2Delta-2}
-\int_{\R^3}\overline{\hat g(k)}\frac{k_ik_j}{|k|^4}
\hat f(k)\,dk=
\frac{1}{8\pi}\int_{\R^3}\int_{\R^3}
\overline{g(x)}
\frac{1}{|x-y|}
\left(\frac{(x_i-y_i)(x_j-y_j)}{|x-y|^2}-\delta_{ij}\right)
f(y)\,dx\,dy
.
\end{align}
\end{lemma}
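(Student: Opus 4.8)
The plan is to observe that both sides of \eqref{eq:claim-partial2Delta-2} express the sesquilinear pairing $\scalara{g}{\partial_i\partial_j\Delta^{-2}f}$, written in Fourier variables on the left and through a position-space integral kernel on the right. On the Fourier side, $\partial_i\partial_j\Delta^{-2}$ is the operator of multiplication by the symbol $m(k):=(\ii k_i)(\ii k_j)|k|^{-4}=-k_ik_j|k|^{-4}$; since $|k_ik_j|\le|k|^2$, this symbol is homogeneous of degree $-2$, hence locally integrable in $\R^3$, so $m\hat f\in L^1(\R^3)$ for Schwartz $f$ (the singularity at the origin is only of order $|k|^{-2}$, and $\hat f$ decays rapidly). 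In particular the left-hand integral of \eqref{eq:claim-partial2Delta-2} converges absolutely, and the multiplication formula for the Fourier transform identifies it with $\int_{\R^3}\overline{g(x)}\,u(x)\,dx$, where $u$ is the bounded continuous function determined by $\hat u=m\hat f$, i.e.\ $u=\partial_i\partial_j\Delta^{-2}f$.

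Next I would produce the position-space kernel. The tempered distribution $K(x):=-|x|/(8\pi)$ is a fundamental solution of $\Delta^2$: one has $\Delta|x|=2/|x|$ in the distributional sense — the surface terms in the integration by parts vanish as the cutoff radius tends to $0$ because $\nabla|x|=x/|x|$ is bounded, so no $\delta$-contribution appears — whence $\Delta K=-1/(4\pi|x|)=-G$, and applying $\Delta$ once more gives $\Delta^2K=\Delta(-G)=-\Delta G=\delta$ since $-\Delta G=\delta$. Differentiating twice, and again noting that the relevant surface terms vanish because $x_j/|x|$ is bounded, one finds that $\partial_i\partial_j|x|$ equals the locally integrable function $\delta_{ij}/|x|-x_ix_j/|x|^3$ as a distribution, with no $\delta$-term, so that $N(x):=\partial_i\partial_jK(x)=\frac{1}{8\pi|x|}(x_ix_j/|x|^2-\delta_{ij})$. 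Since $|N(x)|\le C/|x|$ is locally integrable and decays at infinity, $N\ast f$ is a well-defined bounded continuous function, and together with the identification $\hat N=-(2\pi)^{-3/2}k_ik_j|k|^{-4}$ (see below) the convolution theorem gives $u=N\ast f$. Substituting $u=N\ast f$ into $\int_{\R^3}\overline{g}\,u$ and applying Fubini — the resulting double integral is absolutely convergent, $1/|x-y|$ being locally integrable and $f,g$ decaying rapidly — yields precisely the right-hand side of \eqref{eq:claim-partial2Delta-2}.

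The step that needs care, and which I expect to be the main obstacle, is the identification $\hat N=-(2\pi)^{-3/2}k_ik_j|k|^{-4}$, equivalently $\widehat{N\ast f}=m\hat f$. Formally $\hat N=(\ii k_i)(\ii k_j)\hat K=-k_ik_j\hat K$ with $\hat K=(2\pi)^{-3/2}|k|^{-4}$, but $\hat K$ is only a finite-part distribution, $|k|^{-4}$ being homogeneous of degree $-4$ and therefore not locally integrable in $\R^3$; one has to check that multiplication by $k_ik_j$ both restores local integrability (the product $k_ik_j|k|^{-4}$ being, as above, an honest $L^1_{\mathrm{loc}}$ function) and annihilates any $\delta$-type correction of $\hat K$ located at the origin. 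By homogeneity and the evenness of $|x|$, such a correction could only be a first-order derivative of $\delta$, which is killed on multiplication by $k_ik_j$; so $\hat N$ is genuinely the stated locally integrable function. A clean way to bypass these distributional subtleties entirely is a regularization: the function with Fourier transform $(|k|^2+\epsilon^2)^{-2}$ is the explicit integrable kernel $\ee^{-\epsilon|x|}/(8\pi\epsilon)$; applying $\partial_i\partial_j$ removes the divergent constant $1/(8\pi\epsilon)$ and leaves a kernel converging to $N$ in $\mathcal{S}'$, so pairing against Schwartz $f$ and $g$ and letting $\epsilon\downarrow0$ by dominated convergence — using $|k_ik_j|(|k|^2+\epsilon^2)^{-2}\le|k|^{-2}$ — produces \eqref{eq:claim-partial2Delta-2}.
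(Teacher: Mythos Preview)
Your proof is correct and takes a genuinely different route from the paper's.

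The paper avoids distribution theory for $|k|^{-4}$ altogether by writing $|k|^{-4}=\int_0^\infty \ee^{-z|k|^2/2}\tfrac{z}{4}\,dz$, swapping the $k$- and $z$-integrals (Fubini, justified because $|k_ik_j|/|k|^4\le|k|^{-2}$ is locally integrable), passing to position space via Plancherel (the Gaussian becomes a Gaussian), and then evaluating the remaining one-dimensional integral in the auxiliary variable explicitly through $\Gamma(\tfrac12)$ and $\Gamma(\tfrac32)$. The kernel $\tfrac{1}{8\pi}(x_ix_j/|x|^3-\delta_{ij}/|x|)$ emerges from that computation rather than being guessed in advance.

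You instead recognize $K(x)=-|x|/(8\pi)$ as a fundamental solution of $\Delta^2$, compute $N=\partial_i\partial_jK$ as the locally integrable function in the statement, and then justify $\hat N=-(2\pi)^{-3/2}k_ik_j|k|^{-4}$ either by a homogeneity argument (any $\delta$-type discrepancy between $\hat K$ and the finite part of $(2\pi)^{-3/2}|k|^{-4}$ is homogeneous of degree $-4$, hence a first derivative of $\delta$, hence annihilated by $k_ik_j$) or, more robustly, via the mass regularization $|k|^2\mapsto|k|^2+\epsilon^2$, where the kernel $\ee^{-\epsilon|x|}/(8\pi\epsilon)$ is explicit and $\partial_i\partial_j$ of it is dominated by $C/|x|$ uniformly in $\epsilon$, so dominated convergence applies on both sides. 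Both of your variants are sound; the homogeneity argument is correct as stated, and the regularization is the cleaner of the two.

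The trade-off: the paper's Schwinger-parameter computation is entirely self-contained and never touches finite-part distributions or the biharmonic Green's function, at the cost of a somewhat longer calculation. Your approach is shorter and more conceptual, but imports the biharmonic fundamental solution (or equivalently the Yukawa kernel) as known input; the regularization variant has the added benefit that it sidesteps every distributional subtlety you flagged.
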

\begin{proof}
Using
\begin{align}
\frac{1}{|k|^4}=\int_0^\infty \ee^{-\frac12 z|k|^2}\frac{z}{4}\,dz
\end{align}
for $k\in\R^3\setminus\{0\}$,
we rewrite the left hand side in the claim~\eqref{eq:claim-partial2Delta-2}
as follows:
\begin{align}
\operatorname{lhs} \eqref{eq:claim-partial2Delta-2}=&
-\int_{\R^3}\int_0^\infty \overline{\hat g(k)}
k_ik_j \ee^{-\frac12 z|k|^2}\frac{z}{4}
\hat f(k)\,dz\,dk
\nonumber\\=&
-\int_0^\infty\frac{z}{4}\int_{\R^3} \overline{\hat g(k)}
k_ik_j \ee^{-\frac12 z|k|^2}
\hat f(k)\,dk\,dz
\qquad\text{(by Fubini)}
\nonumber\\=&
\frac{1}{4(2\pi)^{\frac32}}
\int_0^\infty z^{-\frac12} \int_{\R^3}
\overline{\hat g(k)}
\left[\partial_i\partial_j
\left(\ee^{-\frac{|\cdot|^2}{2z}}\ast f\right)\right]^{\wedge}(k)
\,dk\,dz
\nonumber\\=&
\frac{1}{4(2\pi)^{\frac32}}
\int_0^\infty z^{-\frac12} \int_{\R^3}
\overline{g(x)}
\partial_i\partial_j\left(\ee^{-\frac{|\cdot|^2}{2z}}\ast f\right)(x)
\,dx\,dz 
\end{align}
(by Plancherel).
Here, the application of Fubini's theorem is justified because of
\begin{align}
&\int_{\R^3}\int_0^\infty \left|\overline{\hat g(k)}
k_ik_j \ee^{-\frac12 z|k|^2}\frac{z}{4}
\hat f(k)\right|\,dz\,dk
=
\int_{\R^3}\left|\overline{\hat g(k)}
\frac{k_ik_j}{|k|^4}
\hat f(k)\right|\,dk
\nonumber\\
\le& 
\int_{\R^3}\left|\overline{\hat g(k)}
\frac{1}{|k|^2}
\hat f(k)\right|\,dk
<\infty,
\end{align}
since $1/|k|^2$ is integrable near $0$ in $3$ dimensions.
We transform $t=\frac{1}{z}$, $z^{-\frac12}\,dz=-t^{-\frac32}\,dt$
and use 
\begin{align}
\partial_i\partial_j\ee^{-\frac{t|x|^2}{2}}=(t^2x_ix_j-t\delta_{ij})\ee^{-\frac{t|x|^2}{2}}
\end{align}
to obtain
\begin{align}
\operatorname{lhs} \eqref{eq:claim-partial2Delta-2}=&
\frac{1}{4(2\pi)^{\frac32}}
\int_0^\infty \int_{\R^3}
\overline{g(x)}
\partial_i\partial_j\left(\ee^{-\frac{t|\cdot|^2}{2}}\ast f\right)(x)
\,dx\,\frac{dt}{t^{\frac32}}
\nonumber\\
=&
\frac{1}{4(2\pi)^{\frac32}}
\int_0^\infty \int_{\R^3}\int_{\R^3}
\overline{g(x)}
(t^2(x_i-y_i)(x_j-y_j)-t\delta_{ij})\ee^{-\frac{t|x-y|^2}{2}} f(y)
\,dy\,dx\,\frac{dt}{t^{\frac32}}
\nonumber\\
=&
\frac{1}{4(2\pi)^{\frac32}}
\int_{\R^3}\int_{\R^3}
\overline{g(x)}
\int_0^\infty (t^2(x_i-y_i)(x_j-y_j)-t\delta_{ij})\ee^{-\frac{t|x-y|^2}{2}} 
\frac{dt}{t^{\frac32}}\,f(y)
\,dx\,dy.
\label{eq:calculation}
\end{align}
The application of Fubini's theorem in the last step is justified below.
Now, substituting $s=\frac{t}{2}|x|^2$, $t=\frac{2s}{|x|^2}$, 
$\sqrt{t}\,dt=2\sqrt{2}\sqrt{s}\frac{ds}{|x|^3}$,
$t^{-\frac12}\,dt=\frac{\sqrt{2}}{|x|}s^{-\frac12}\,ds$
and using $\Gamma(\frac12)=\sqrt{\pi}$, $\Gamma(\frac32)=\frac{\sqrt{\pi}}{2}$,
we calculate
\begin{align}
&\frac{1}{4(2\pi)^{\frac32}}\int_0^\infty (t^2x_ix_j-t\delta_{ij})\ee^{-\frac{t|x|^2}{2}} 
\frac{dt}{t^{\frac32}}
\nonumber\\=&
\frac{2\sqrt{2}}{4(2\pi)^{\frac32}}
\int_0^\infty \sqrt{s} \ee^{-s}\,ds\, \frac{x_ix_j}{|x|^3}
-\frac{\sqrt{2}}{4(2\pi)^{\frac32}}
\int_0^\infty \frac{1}{\sqrt{s}} \ee^{-s}\,ds\, \frac{\delta_{ij}}{|x|}
\nonumber\\=&
\frac{\Gamma(\frac32)}{\sqrt{2}(2\pi)^{\frac32}}
\frac{x_ix_j}{|x|^3}
-\frac{\Gamma(\frac12)}{2\sqrt{2}(2\pi)^{\frac32}}\frac{\delta_{ij}}{|x|}
\nonumber\\=&
\frac{1}{8\pi}\left(\frac{x_ix_j}{|x|^3}-\frac{\delta_{ij}}{|x|}\right)
\label{eq:calculation2}
\end{align}
and similarly
\begin{align}
\frac{1}{4(2\pi)^{\frac32}}\int_0^\infty \left|t^2x_ix_j-t\delta_{ij}\right|
\ee^{-\frac{t|x|^2}{2}} 
\frac{dt}{t^{\frac32}}
\le
\frac{1}{8\pi}\left(\frac{|x_ix_j|}{|x|^3}+\frac{\delta_{ij}}{|x|}\right)
\le\frac{1}{4\pi|x|},
\end{align}
which is integrable near $0$.
Together with the fast decay of $f$ and $g$, this bound justifies the 
application of Fubini's theorem in the last step of 
\eqref{eq:calculation}.
Substitution of the calculation \eqref{eq:calculation2} in 
\eqref{eq:calculation} proves claim~\eqref{eq:claim-partial2Delta-2}.
\end{proof}

Next, we derive a variant of the simplified dipole expansion presented in Section~\ref{sec:bounds-dipole-expansion} (using the notation from there), but now for the operator $\partial_i\partial_j\Delta^{-2}$ rather than $\Delta^{-1}$. 
This is used in the proof of Lemma~\ref{le:w-star-bd}. 
\begin{lemma}[Long-range asymptotics of $\partial_i\partial_j\Delta^{-2}$]
\label{le:dipole-expansion-wzwei}
For all $x\in\R^3$ with $|x|\ge 2R$ and $i,j,k\in[3]$, 
we have
\begin{align}
\label{eq:claim-dipole-simplified-applied}
\Delta^{-2}\partial_i\partial_j\rho(x)&=\frac{Q}{8\pi}\left(\frac{x_ix_j}{|x|^3}-\frac{\delta_{ij}}{|x|}\right)+r_1'(x),\\
\partial_k\Delta^{-2}\partial_i\partial_j\rho(x)&=\frac{Q}{8\pi} 
\left(\frac{\delta_{ik}x_j+\delta_{jk}x_i+\delta_{ij}x_k}{|x|^3}
-3\frac{x_ix_jx_k}{|x|^5}\right)
+\partial_kr_1'(x)
\end{align}
with the error bounds
\begin{align}
\label{eq:bound-partial-r1-strich}
|r_1'(x)|\le\frac{3M_1}{\pi}\frac{1}{|x|^2},\qquad
|\partial_k r_1'(x)|\le\frac{36M_1}{\pi}\frac{1}{|x|^3}.
\end{align}
\end{lemma}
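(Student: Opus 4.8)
The plan is to reduce everything to the explicit integral-kernel representation from the previous lemma and then run a first-order Taylor (dipole) expansion of that kernel, exactly in the spirit of the proof of Lemma~\ref{le:simplified-dipole-expansion}. Since $\Delta^{-2}$ and $\partial_i\partial_j$ are both Fourier multipliers, $\Delta^{-2}\partial_i\partial_j\rho=\partial_i\partial_j\Delta^{-2}\rho$, so the integral kernel lemma above (applied with $f=\rho$ and with test functions $g$ supported in $\{|x|\ge 2R\}$, using that the kernel $z\mapsto \tfrac{1}{8\pi|z|}(\tfrac{z_iz_j}{|z|^2}-\delta_{ij})$ is smooth off the origin while $\rho\in C^\infty_c$ is supported in $\{|y|\le R\}$) yields the pointwise formula
$$\Delta^{-2}\partial_i\partial_j\rho(x)=\int_{\R^3}K_{ij}(x-y)\,\rho(y)\,dy,\qquad K_{ij}(z):=\frac{1}{8\pi}\Big(\frac{z_iz_j}{|z|^3}-\frac{\delta_{ij}}{|z|}\Big),\quad |x|\ge 2R.$$
Because $\rho$ is compactly supported and $K_{ij},\nabla K_{ij}$ are smooth on $\{|z|\ge R\}$, one may differentiate under the integral sign to get $\partial_k\Delta^{-2}\partial_i\partial_j\rho(x)=\int_{\R^3}(\partial_kK_{ij})(x-y)\,\rho(y)\,dy$ for $|x|\ge 2R$.

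Next I would Taylor-expand the kernel. For $y\in\supp\rho$, $t\in[0,1]$ and $|x|\ge 2R$ one has $|x-ty|\ge|x|-R\ge|x|/2\ge R$, so the fundamental theorem of calculus gives $K_{ij}(x-y)=K_{ij}(x)-\int_0^1\sum_{l=1}^3 y_l(\partial_lK_{ij})(x-ty)\,dt$, and the analogous identity with $K_{ij}$ replaced by $\partial_kK_{ij}$. Integrating against $\rho$ and using $\int\rho=Q$ produces the two claimed leading terms, namely $QK_{ij}(x)=\frac{Q}{8\pi}(\frac{x_ix_j}{|x|^3}-\frac{\delta_{ij}}{|x|})$ and, after computing $\partial_kK_{ij}$ explicitly, $Q(\partial_kK_{ij})(x)=\frac{Q}{8\pi}(\frac{\delta_{ik}x_j+\delta_{jk}x_i+\delta_{ij}x_k}{|x|^3}-3\frac{x_ix_jx_k}{|x|^5})$, together with the error representations
$$r_1'(x)=-\int_{\R^3}\int_0^1\sum_{l=1}^3 y_l(\partial_lK_{ij})(x-ty)\,dt\,\rho(y)\,dy,\qquad \partial_kr_1'(x)=-\int_{\R^3}\int_0^1\sum_{l=1}^3 y_l(\partial_l\partial_kK_{ij})(x-ty)\,dt\,\rho(y)\,dy.$$

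It then remains to bound the integrands. Direct differentiation gives $\partial_lK_{ij}(z)=\frac{1}{8\pi}[(\delta_{li}z_j+\delta_{lj}z_i+\delta_{ij}z_l)|z|^{-3}-3z_iz_jz_l|z|^{-5}]$, so contracting with $y$ and using $|z_a z_b|\le|z|^2$, $|y\cdot z|\le|y||z|$ one finds $|\sum_l y_l\partial_lK_{ij}(z)|=\frac1{8\pi}|(y_iz_j+y_jz_i+\delta_{ij}(y\cdot z))|z|^{-3}-3z_iz_j(y\cdot z)|z|^{-5}|\le\frac{6|y|}{8\pi|z|^2}$. Differentiating once more and again contracting with $y$, every surviving term is a product of at most three factors from $\{z\}$, at most one factor from $\{y\}$, and a power $|z|^{-3},|z|^{-5}$ or $|z|^{-7}$; collecting them gives $|\sum_l y_l\partial_l\partial_kK_{ij}(z)|\le\frac{36|y|}{8\pi|z|^3}$. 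Substituting $|x-ty|\ge|x|/2$ into the two error formulas and recalling $M_1=\int|y||\rho(y)|\,dy$ yields $|r_1'(x)|\le\frac{6}{8\pi}\cdot\frac{4}{|x|^2}M_1=\frac{3M_1}{\pi|x|^2}$ and $|\partial_kr_1'(x)|\le\frac{36}{8\pi}\cdot\frac{8}{|x|^3}M_1=\frac{36M_1}{\pi|x|^3}$, which are the asserted bounds.

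The computation is otherwise routine; the only two points needing care are (i) passing from the weak/distributional identity of the kernel lemma to the pointwise convolution formula, which is legitimate precisely because $|x|\ge 2R$ keeps $x-y$ away from the kernel's singularity on $\supp\rho$, and (ii) the bookkeeping of the index contractions in $\sum_l y_l\partial_lK_{ij}$ and $\sum_l y_l\partial_l\partial_kK_{ij}$ — one must check that no term is dropped and that the worst case (several of $i,j,k,l$ coinciding) is already absorbed into the constants $6/(8\pi)$ and $36/(8\pi)$.
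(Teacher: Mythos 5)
Your argument is essentially identical to the paper's: both reduce to the convolution kernel $K_{ij}(z)=\frac{1}{8\pi}\big(\frac{z_iz_j}{|z|^3}-\frac{\delta_{ij}}{|z|}\big)$ coming from the preceding integral-kernel lemma, Taylor-expand $K_{ij}(x-ty)$ in $t$ from $0$ to $1$ (the paper's $f_{x,y}(t)$ is precisely $8\pi K_{ij}(x-ty)$, so your $-\sum_l y_l(\partial_lK_{ij})(x-ty)$ is exactly $\frac{1}{8\pi}f'_{x,y}(t)$), and bound the resulting integrands by $6|y|/(8\pi|z|^2)$ and $36|y|/(8\pi|z|^3)$ before integrating against $\rho$ and using $|x-ty|\ge|x|/2$. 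Your extra remark on passing from the distributional identity to a pointwise convolution formula for $|x|\ge 2R$ makes explicit a step the paper leaves implicit, but the route and constants match.
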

\begin{proof}
Recall that for $x\in\R^3$ with $|x|\ge 2R$, $y\in\operatorname{supp}\rho$, and 
all $t\in[0,1]$, we have
$|x-ty|\ge |x|/2\ge R$.
We apply \eqref{eq:taylor1} to
\begin{align}
\label{eq:def-f-x-y-t}
f_{x,y}(t):=&\frac{(x_i-ty_i)(x_j-ty_j)}{|x-ty|^3}-\frac{\delta_{ij}}{|x-ty|},
\\
f_{x,y}(0)=&\frac{x_ix_j}{|x|^3}-\frac{\delta_{ij}}{|x|}, \\
\partial_k f_{x,y}(0)=& \frac{\delta_{ik}x_j+\delta_{jk}x_i+\delta_{ij}x_k}{|x|^3}
-3\frac{x_ix_jx_k}{|x|^5}.
\label{eq:partial-k-f}
\end{align}

We calculate:
\begin{align}
\label{eq:fxy'-new}
f_{x,y}'(t)=-\frac{y_i(x_j-ty_j)}{|x-ty|^3}
-\frac{(x_i-ty_i)y_j}{|x-ty|^3}
+3\frac{\scalara{y}{x-ty}(x_i-ty_i)(x_j-ty_j)}{|x-ty|^5}
-\delta_{ij}\frac{\scalara{x-ty}{y}}{|x-ty|^3},
\end{align}
\begin{align}
\partial_k f_{x,y}'(t)=&-\frac{y_i\delta_{jk}}{|x-ty|^3}
+3\frac{y_i(x_j-ty_j)(x_k-ty_k)}{|x-ty|^5}
-\frac{\delta_{ik}y_j}{|x-ty|^3}
+3\frac{(x_i-ty_i)y_j(x_k-ty_k)}{|x-ty|^5}
\nonumber\\&
+3\frac{y_k(x_i-ty_i)(x_j-ty_j)}{|x-ty|^5}
+3\frac{\scalara{y}{x-ty}\delta_{ik}(x_j-ty_j)}{|x-ty|^5}
+3\frac{\scalara{y}{x-ty}(x_i-ty_i)\delta_{jk}}{|x-ty|^5}\nonumber\\
& -15\frac{\scalara{y}{x-ty}(x_i-ty_i)(x_j-ty_j)(x_k-ty_k)}{|x-ty|^7}
-\delta_{ij}\left(\frac{y_k}{|x-ty|^3}
-3\frac{(x_k-ty_k)\scalara{x-ty}{y}}{|x-ty|^5}\right).
\end{align}
Using the Cauchy-Schwarz inequality, it follows
\begin{align}
|f_{x,y}'(t)|\le \frac{6|y|}{|x-ty|^2}\le\frac{24|y|}{|x|^2}, \qquad
|\partial_k f_{x,y}'(t)|\le \frac{36|y|}{|x-ty|^3}\le\frac{288|y|}{|x|^3}.
\end{align}
We obtain
\begin{align}
& \Delta^{-2}\partial_i\partial_j\rho(x) 
=\frac{1}{8\pi}\int_{\R^3} \frac{1}{|x-y|}
\left(\frac{(x_i-y_i)(x_j-y_j)}{|x-y|^2}-\delta_{ij}\right)\rho(y)\,dy
\nonumber\\&= 
\frac{1}{8\pi}\int_{\R^3} \left(f_{x,y}(0)+\int_0^1 f'_{x,y}(t)\,dt\right)
\rho(y)\,dy 
\label{eq:taylor-integrated-new}
\end{align}
and hence the claim~\eqref{eq:claim-dipole-simplified-applied} with an error term
$r_1'$ bounded by 
\begin{align}
|r_1'(x)|\le\frac{1}{8\pi}\int_{\R^3}\int_0^1\frac{24|y|}{|x|^2}
\,dt\,|\rho(y)|\,dy
=\frac{3M_1}{\pi}\frac{1}{|x|^2}.
\end{align}
This shows the first claimed bound in~\eqref{eq:bound-partial-r1-strich}.

For the partial derivatives $\partial_k$ 
with respect to $x_k$ we obtain:
\begin{align}
& \partial_k\Delta^{-2}\partial_i\partial_j\rho(x)  
= \frac{1}{8\pi}\int_{\R^3} \left(\partial_k f_{x,y}(0)
+\int_0^1 \partial_k f'_{x,y}(t)\,dt\right)
\rho(y)\,dy .
\end{align}
Consequently,
\begin{align}
|\partial_k r_1'(x)|\le\frac{1}{8\pi}\int_{\R^3}\int_0^1\frac{288|y|}{|x|^3}
\,dt\,|\rho(y)|\,dy
=\frac{36M_1}{\pi}\frac{1}{|x|^3}.
\end{align}
This proves the second bound in~\eqref{eq:bound-partial-r1-strich}.
\end{proof}

\subsubsection{Integral kernels close to the diagonal}
\label{sec:appendix-kernels-diag}

Recall that $\|\tilde J\|_1$ denotes the total unsigned mass of 
a signed measure $\tilde J$.

\begin{lemma}[Uniform bounds]
\label{le:uniform-bounds}
For the form function $\varphi$ introduced in 
Subsection~\ref{subsubsec:model-assum}, one has 
\begin{align}
\label{eq:bound-partial-delta-phi}
\csiebzehn := & \max_{l\in[3]}\|\partial_l\Delta^{-1}\varphi\|_\infty<\infty, \\
\cachtzehn := & \max_{l,i,j\in[3]}\|\partial_l\partial_i\partial_j\Delta^{-2}\varphi\|_\infty<\infty .
\label{eq:bound-partial-dreifach-delta-phi}
\end{align}
Additionally, for any signed measure $J$ with $\|J\|_1<\infty$ 
and for all $l,i,j\in[3]$, one has 
\begin{align}
\label{eq:bound-partial-delta-phiJ}
& \|\partial_l\Delta^{-1}\varphi*J\|_\infty\le\csiebzehn\|J\|_1,  \\
& \|\partial_l\partial_i\partial_j\Delta^{-2}\varphi*J\|_\infty\le\cachtzehn\|J\|_1. 
\label{eq:bound-partial-dreifach-delta-phiJ}
\end{align}
\end{lemma}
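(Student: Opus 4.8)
The plan is to reduce the whole statement to the elementary observation that convolving a uniformly bounded function with a finite signed measure only costs a factor $\|J\|_1$, so everything follows once the three displayed operators, applied to $\varphi$, are known to produce bounded functions; and the latter follows from explicit integral-kernel representations together with local integrability of the Coulomb-type kernels in three dimensions.

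First I would record the kernels. Since $-\Delta^{-1}$ is convolution with $G(y)=1/(4\pi|y|)$, and since $\partial_l$, $\Delta^{-1}$, $\Delta^{-2}$ are translation-invariant Fourier multipliers and hence commute, I would rewrite
\[
\partial_l\Delta^{-1}\varphi=\Delta^{-1}(\partial_l\varphi)=-G*(\partial_l\varphi),\qquad
\partial_l\partial_i\partial_j\Delta^{-2}\varphi=\partial_i\partial_j\Delta^{-2}(\partial_l\varphi).
\]
The symbol of $\partial_i\partial_j\Delta^{-2}$ is $-k_ik_j/|k|^4$, dominated by $|k|^{-2}$, which is integrable near the origin in three dimensions; applied to the Schwartz function $\partial_l\varphi$ it therefore yields an $L^1$ Fourier transform, so $\partial_i\partial_j\Delta^{-2}(\partial_l\varphi)$ is a bounded continuous function, and testing the preceding lemma on the integral kernel of $\partial_i\partial_j\Delta^{-2}$ against an arbitrary Schwartz $g$ identifies this function pointwise with
\[
\partial_i\partial_j\Delta^{-2}(\partial_l\varphi)(x)=\frac{1}{8\pi}\int_{\R^3}\frac{1}{|x-y|}\Big(\frac{(x_i-y_i)(x_j-y_j)}{|x-y|^2}-\delta_{ij}\Big)\partial_l\varphi(y)\,dy.
\]
The two kernels in play, $G(x-y)$ and the bracketed kernel, are both dominated by $C|x-y|^{-1}$, and $\partial_l\varphi$ is bounded with compact support, so
\[
|\partial_l\Delta^{-1}\varphi(x)|\le\|\partial_l\varphi\|_\infty\!\!\int_{\supp\varphi}\!\!\frac{dy}{4\pi|x-y|},\qquad
|\partial_l\partial_i\partial_j\Delta^{-2}\varphi(x)|\le\frac{C}{8\pi}\|\partial_l\varphi\|_\infty\!\!\int_{\supp\varphi}\!\!\frac{dy}{|x-y|}.
\]
Since $\int_B|z|^{-1}\,dz<\infty$ for any ball $B$, the right-hand integrals are finite uniformly in $x$; taking the supremum over $x$ gives $\csiebzehn,\cachtzehn<\infty$ (note $\partial_l\Delta^{-1}\varphi$ is in fact smooth, being $\Delta^{-1}$ of a smooth compactly supported function, so the sup is a genuine sup). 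This proves \eqref{eq:bound-partial-delta-phi} and \eqref{eq:bound-partial-dreifach-delta-phi}.

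For a general signed measure $J$ with $\|J\|_1<\infty$, let $f$ denote either of the bounded continuous functions $\partial_l\Delta^{-1}\varphi$ or $\partial_l\partial_i\partial_j\Delta^{-2}\varphi$. Then $(f*J)(x)=\int_{\R^3}f(x-y)\,J(dy)$ is well defined, and with $|J|=J_++J_-$ the total variation measure,
\[
|(f*J)(x)|\le\int_{\R^3}|f(x-y)|\,|J|(dy)\le\|f\|_\infty\,\|J\|_1.
\]
Bounding $\|f\|_\infty$ by $\csiebzehn$, respectively $\cachtzehn$, from the first part and taking the supremum over $x$ yields \eqref{eq:bound-partial-delta-phiJ} and \eqref{eq:bound-partial-dreifach-delta-phiJ}.

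I expect no genuine obstacle here; the only points requiring a little care are (i) that the Fourier-multiplier definition of $\partial_i\partial_j\Delta^{-2}$ applied to a Schwartz function coincides with the integral-kernel representation, which is precisely the content of the preceding lemma and relies on $|k|^{-2}$ being integrable near $0$ in dimension three, and (ii) the uniform-in-$x$ finiteness of $\int_{\supp\varphi}|x-y|^{-1}\,dy$, which is immediate since the singularity is integrable and $\supp\varphi$ is bounded.
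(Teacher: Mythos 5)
Your proof is correct and takes essentially the same route as the paper: identify an explicit locally integrable Coulomb-type integral kernel, deduce boundedness of the convolution with the compactly supported bounded $\varphi$, and then conclude by the elementary bound $\|f*J\|_\infty\le\|f\|_\infty\|J\|_1$. The only (cosmetic) difference is that you move the extra derivative $\partial_l$ onto $\varphi$, working with kernels of size $O(|z|^{-1})$, whereas the paper keeps all derivatives on the kernels $k_1(z)=z_l/(4\pi|z|^3)$ and the differentiated kernel of $\partial_i\partial_j\Delta^{-2}$, which are $O(|z|^{-2})$; both are locally integrable in $\R^3$, so either bookkeeping works.
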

\begin{proof}
The operators
$\partial_l\Delta^{-1}$ and $\partial_l\partial_i\partial_j\Delta^{-2}$ have the 
integral kernels $k_1(x-y)$ and $k_2(x-y)$, respectively, where 
\begin{align}
k_1(z):=\frac{1}{4\pi}\frac{z_l}{|z|^3}\quad\text{ and }\quad
k_2(z):=\frac{1}{8\pi}\left(
\frac{\delta_{il}z_j+\delta_{jl}z_i+\delta_{ij}z_l}{|z|^3}
-3\frac{z_iz_jz_l}{|z|^5}\right), 
\end{align}
cf.\ \eqref{eq:claim-partial2Delta-2}, \eqref{eq:def-f-x-y-t}, and \eqref{eq:partial-k-f}. Since both kernels are bounded by 
$O(|z|^{-2})$, they are locally integrable
and decay at infinity. Because $\varphi$ is compactly supported and bounded, 
the claims \eqref{eq:bound-partial-delta-phi} and 
\eqref{eq:bound-partial-dreifach-delta-phi} follow. The remaining claims
follow from 
$\|k*J\|_\infty\le\|k\|_\infty\|J\|_1$ with $k=k_1$ and $k=k_2$, respectively.
\end{proof}
\end{appendix}

\subsection*{Acknowledgment}
The authors would like to thank an anonymous referee for constructive comments 
helping us to improve the paper. Visits of RB to Munich were supportedby LMU’s 
Center for Advanced Studies.

\end{document}